\documentclass[a4paper,USenglish,cleveref,thm-restate]{lipics-v2021}
\usepackage{nicefrac}
\usepackage{xspace}
\usepackage{mathtools}
\usepackage{complexity}
\usepackage{microtype}
\usepackage{placeins}
\bibliographystyle{plainurl}

\newcommand{\old}[1]{{}}

\newcommand{\revised}[1]{{#1}}
\newcommand{\new}[1]{\textcolor{black}{#1}}

\crefname{figure}{Figure}{Figures}
\crefname{theorem}{Theorem}{Theorems}
\crefname{lemma}{Lemma}{Lemmas}
\crefname{claim}{Claim}{Claims}
\crefname{observation}{Observation}{Observations}
\crefname{corollary}{Corollary}{Corollaries}
\crefname{section}{Section}{Sections}

\title{Packing Squares into a Disk with Optimal Worst-Case Density\thanks{An extended abstract appears in the Proceedings
of the 37th Symposium on Computational Geometry (SoCG 2021)~\cite{fgj+-psdow-20}.}}

\titlerunning{Worst-Case Optimal Squares Packing into Disks}

\author{S\'{a}ndor P. Fekete}{Department of Computer Science, TU Braunschweig, Germany}{s.fekete@tu-bs.de}{https://orcid.org/0000-0002-9062-4241}{}
\author{Vijaykrishna Gurunathan}{Department of Computer Science \& Engineering, IIT Bombay, India}{krishnavijay1999@gmail.com}{https://orcid.org/0000-0001-8580-1269}{}
\author{Kushagra Juneja}{Department of Computer Science \& Engineering, IIT Bombay, India}{kuku12320@gmail.com}{https://orcid.org/0000-0002-4326-6664}{}
\author{Phillip Keldenich}{Department of Computer Science, TU Braunschweig, Germany}{p.keldenich@tu-bs.de}{https://orcid.org/0000-0002-6677-5090}{}
\author{Linda Kleist}{Department of Computer Science, TU Braunschweig, Germany}{l.kleist@tu-bs.de}{https://orcid.org/0000-0002-3786-916X}{}
\author{Christian Scheffer}{Institute of Computer Science, HS Bochum}{christian.scheffer@hs-bochum.de}{https://orcid.org/0000-0002-3471-2706}{}

\authorrunning{S. P. Fekete, K. Juneja, P. Keldenich, L. Kleist, V. Krishna, and C. Scheffer}
\Copyright{S. P. Fekete, K. Juneja, P. Keldenich, L. Kleist, V. Krishna, and C. Scheffer}

\ccsdesc[100]{Theory of computation $\rightarrow$ Packing and covering problems}
\ccsdesc[100]{Theory of computation $\rightarrow$ Computational geometry}

\keywords{Square packing, packing density, tight worst-case bound, interval arithmetic, approximation}

\relatedversion{}

\supplement{\url{https://github.com/phillip-keldenich/squares-in-disk}}


\acknowledgements{}

\nolinenumbers 

\hideLIPIcs  

\EventEditors{Kevin Buchin and \'{E}ric Colin de Verdi\`{e}re}
\EventNoEds{2}
\EventLongTitle{37th International Symposium on Computational Geometry (SoCG 2021)}
\EventShortTitle{SoCG 2021}
\EventAcronym{SoCG}
\EventYear{2021}
\EventDate{June 7--11, 2021}
\EventLocation{Buffalo, NY, USA}
\EventLogo{socg-logo.pdf}
\SeriesVolume{189}



\let\BP\undefinedcommand 
\let\SP\undefinedcommand
\newcommand{\TP}{\textsc{Top Packing}\xspace}
\newcommand{\BP}{\textsc{Bottom Packing}\xspace}
\newcommand{\SP}{\textsc{Shelf Packing}\xspace}
\newcommand{\SCP}{\textsc{SubContainer Packing}\xspace}
\newcommand{\Slic}{\textsc{SubContainer Slicing}\xspace}
\newcommand{\RSP}{\textsc{Refined Shelf Packing}\xspace}
\newcommand{\CI}{\textbf{(C\;\!1)}\xspace}
\newcommand{\CII}{\textbf{(C\;\!2)}\xspace}
\newcommand{\CIII}{\textbf{(C\;\!3)}\xspace}
\newcommand{\CTP}{\textbf{(C\;\!3a)}\xspace}
\newcommand{\CBP}{\textbf{(C\;\!3b)}\xspace}
\newcommand{\algo}{\textsc{Layer Packing}\xspace}

\newcommand{\CL}{{C}_\ell\xspace}
\newcommand{\CR}{{C}_r\xspace}

\begin{document}

\begin{titlepage}
	\maketitle
	\thispagestyle{empty}
	\begin{abstract}
	We provide a tight result for a fundamental problem arising from packing squares
	into a circular container: The critical density of packing squares into a disk is
	$\delta=\nicefrac{8}{5\pi}\approx 0.509$. This implies that any set of (not necessarily equal) squares of total area
	$A \leq \nicefrac{8}{5}$ can always be packed into a disk with radius 1; in contrast,
	for any $\varepsilon>0$ there are sets of squares of total area $\nicefrac{8}{5}+\varepsilon$ that cannot be packed,
	even if squares may be rotated.
	This settles the last (and arguably, most elusive) case of packing circular or square objects into a circular or square container:
	The critical densities for squares in a square ($\nicefrac{1}{2}$), circles in a square ($\nicefrac{\pi}{(3+2\sqrt{2})}\approx 0.539)$
	and circles in a circle~($\nicefrac{1}{2}$) have already been established, making use of recursive subdivisions of
	a square container into pieces bounded by straight lines,
	or the ability to use recursive arguments based on similarity of objects and container; 
	neither of these approaches can be applied when packing squares into a circular container.
	Our proof uses a careful manual analysis, complemented by a computer-assisted part that is based on interval arithmetic.
	Beyond the basic mathematical importance, our result is also useful as a blackbox lemma for the analysis of recursive packing algorithms.
	At the same time, our approach showcases the power of a general framework for computer-assisted proofs, based on interval arithmetic.
\end{abstract}
\end{titlepage}

\section{Introduction}
Geometric packing and covering problems arise in a wide range
of natural applications.  They also have a long history of
spawning many demanding (and often still unsolved) mathematical
challenges. These difficulties are also notable from
an algorithmic perspective, as relatively straightforward
one-dimensional variants of packing and covering are already $\NP$-hard \cite{garey1975complexity};
however, deciding whether a given set of one-dimensional segments can be packed
into a given interval can  be checked by computing their total length.
This simple criterion is no longer available for two-dimensional, geometric
packing or covering problems, for which the total area often does not suffice
to decide feasibility of a set, making it necessary to provide an explicit packing or covering. 
A recent result by Abrahamsen et al.~\cite{till2DER} indicates that these difficulties
have far-reaching consequences: Two-dimensional packing problems are $\exists\mathbb{R}$-hard,
so they are unlikely to even belong to \NP.

We provide a provably optimal answer for a natural and previously unsolved 
case of \emph{tight worst-case area bounds}, based on the notion of \emph{critical packing density}:
What is the largest number $\delta\leq 1$, such that any set
of squares with a total area of at most $\delta$ can always be packed (in a not necessarily 
axis-parallel fashion) into a disk $C$ of area 1, regardless of the individual sizes of the squares?
We show the following theorem that implies $\delta=\nicefrac{8}{5\pi}\approx0.509$ for squares in a disk.

\begin{restatable}{theorem}{crit}\label{thm:mainAlgo}
	Every set of squares with a total area  of at most $\nicefrac{8}{5}$ can be packed into the unit disk.
	This is worst-case optimal, i.e., for every $A > \nicefrac{8}{5}$ there exists a set of squares with total area $A$ that cannot be packed into the unit disk.  
\end{restatable}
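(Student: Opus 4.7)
The plan is to prove the two directions of the theorem separately. For the worst-case lower bound on the unpackable area, it suffices to exhibit, for every $\eps>0$, a concrete instance of total area at least $\nicefrac{8}{5}+\eps$ that cannot be packed. For the algorithmic upper bound, the plan is to design and analyze a constructive procedure that succeeds on every input whose total area is at most $\nicefrac{8}{5}$.

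For the tightness direction, I would take two congruent squares of side $s=\nicefrac{2}{\sqrt{5}}+\eps'$, so that the total area is $\nicefrac{8}{5}+O(\eps')$. The key geometric fact to verify is that two squares of side strictly greater than $\nicefrac{2}{\sqrt{5}}$ cannot simultaneously be placed inside the unit disk, even with rotations. That $s=\nicefrac{2}{\sqrt{5}}$ is attainable follows from the explicit packing where two such squares, stacked, form a $2s\times s$ rectangle whose diagonal $s\sqrt{5}$ equals $2$ precisely when $s=\nicefrac{2}{\sqrt{5}}$. The converse uses a separating-line argument: two disjoint squares in the disk are separated by some line $\ell$, which cuts the disk into two circular caps; the smaller cap contains one of the squares, and since the largest square inscribable in a half-disk has side $\nicefrac{2}{\sqrt{5}}$ (a short direct computation), and moving $\ell$ off-center only shrinks the smaller cap further, the bound follows.

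The harder direction is the algorithmic guarantee, and here I would follow the strategy suggested by the paper's macros, organized around an algorithm \algo. Sort the squares by decreasing side length and regard the largest square $Q_1$ as a benchmark: depending on the relative size of $Q_1$, place it in a canonical position inside the disk (for instance, axis-aligned with a side flush against a suitable horizontal chord), thereby splitting the remaining free region into a constant number of subcontainers with predictable geometry — circular caps above and below $Q_1$ together with lateral segments. The remaining squares, still in decreasing order, are then distributed among these subcontainers by routines such as \TP, \BP, and \SP (possibly via its refinement \RSP), and \SCP/\Slic for the lateral pieces. Each subcontainer comes with its own area-guarantee lemma of the form ``any list of squares of total area up to a certain fraction of the subcontainer area can be packed into it by the routine'', and the proof reduces to verifying that these per-subcontainer guarantees sum to at least $\nicefrac{8}{5}$ as a function of $|Q_1|$.

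The main obstacle is precisely this aggregation step. The per-subcontainer areas depend on $|Q_1|$ through trigonometric terms arising from chord and segment areas, so the final bound is a transcendental inequality in a continuous parameter, and I expect no clean closed-form manipulation to suffice. The plan is therefore to reduce, via monotonicity and case distinctions on the regime of $|Q_1|$, to finitely many parameter intervals, and to discharge the remaining numerical inequalities by a rigorous computer-assisted interval-arithmetic check, as announced in the abstract. I would expect the worst case of the analysis to coincide with the two-equal-squares configuration of the tightness proof, which is the natural way the two halves of the argument should match up.
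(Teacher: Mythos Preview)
Your tightness argument is essentially the paper's: the separating-line reduction to ``largest square in a half-disk has side $2/\sqrt{5}$'' is equivalent to the paper's observation that any square of side strictly larger than $2/\sqrt{5}$ in the unit disk must contain the center (if it did not, it would lie in a closed half-disk). This part is fine.

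For the algorithmic half, the high-level plan matches the paper, but you underestimate the structure in two places, and these are genuine gaps rather than routine details. First, a single canonical placement of $Q_1$ governed by the one parameter $|Q_1|$ does not suffice. The paper needs three structurally different strategies: for $s_1\le 0.295$ it abandons the top/bottom split entirely and uses a large concentric square container plus four small pocket squares; for $s_1\le 1/\sqrt{2}$ together with $s_1^2+s_2^2+s_3^2+s_4^2\ge 39/25$ it uses four quadrant containers plus a tiny overflow square; only otherwise does it use the top/bottom decomposition you sketch. Note that the second branch is triggered by the first \emph{four} squares, so even the top-level case split is not one-dimensional in $|Q_1|$.

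Second, and more importantly, the final verification is not a transcendental inequality in the single continuous parameter $|Q_1|$. In the top/bottom case the bottom region is sliced online into subcontainers $C_1,\ldots,C_k$ whose heights $h_1,\ldots,h_k$ are set by the first square routed to each, and the packed-area lower bound for $C_i$ depends on $h_i$, $h_{i+1}$, and the vertical position of $C_i$ in the disk. The interval-arithmetic step therefore runs over up to nine real variables ($s_1,h_1,\ldots,h_k,s_n$), with a separate multivariate check for each $k\in\{0,1,2,3,4\}$; the case $k\ge 5$ cannot be discharged this way at all and requires an additional aggregation lemma that bounds the area across arbitrarily many thin subcontainers simultaneously. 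Reducing to finitely many intervals of a single parameter, as you propose, would not close the argument.
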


This critical density $\delta$ is of mathematical importance, as it settles the last open case of 
packing circular or square objects into a circular or square container.
\cref{fig:overview} provides an overview of the critical densities in similar settings, i.e.,
the critical density for packing squares in a square ($\nicefrac{1}{2}$),
disks in a square ($\nicefrac{\pi}{(3+2\sqrt{2})}\approx 0.539$),
and disks in a disk ($\nicefrac{1}{2}$).
\begin{figure}[b!]
	\centering
	\includegraphics[page=2]{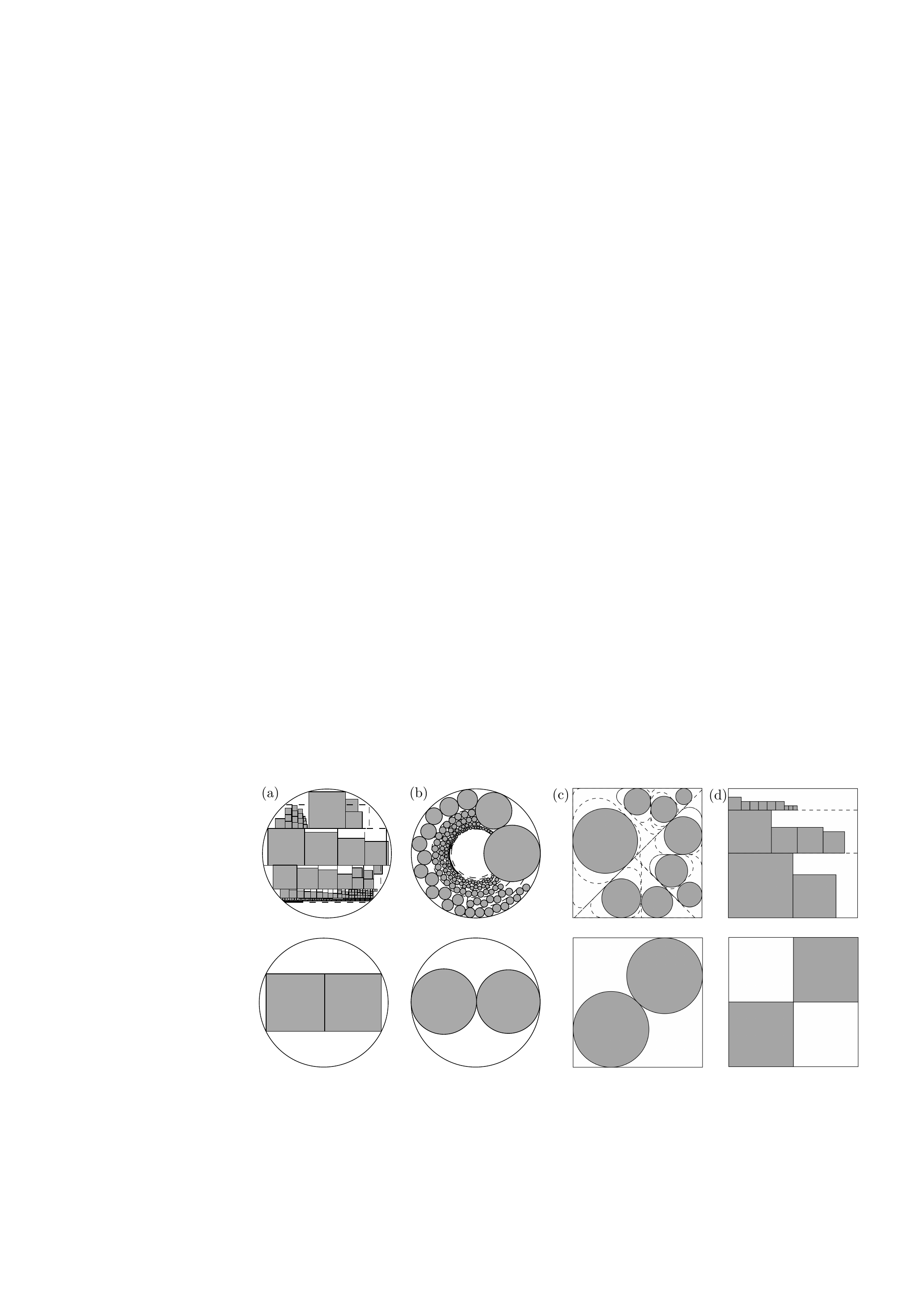}
	\caption{Worst-case optimal approaches and matching worst-case instances for packing: \newline
		(a) Squares into a square with \SP by Moon and Moser~\cite{MM1967some}.\\
		(b) Disks into a square by Morr et al.~\cite{morr2017split,fekete:splitpacking}.\\
		(c) Disks into a disk by Fekete et al.~\cite{fks-pddow-19}.\\
		(d) Squares into a disk [this paper].}
	\label{fig:overview}
\end{figure}

This result is also of algorithmic interest, because it provides a simple sufficient criterion for feasibility. 
Note that the previous results illustrated in \cref{fig:overview} benefitted from recursive subdivisions of a square container into
subpieces bounded by straight lines, or from recursion based on
the similarity of objects and container when both are disks; neither applies when objects are squares
and the container is a disk.
This gives our approach added methodical significance, as it showcases a general framework
for establishing computer-assis\-ted proofs for difficult packing problems for which concise manual
arguments may be elusive.

A proof of \cref{thm:mainAlgo} consists of (i) a class of instances that
provide the upper bound of~$\nicefrac{8}{5\pi}$ for the critical packing density $\delta$
and (ii) an algorithm that achieves the matching lower bound for $\delta$ by packing any 
set of squares with a total area of at
most $\nicefrac{8}{5}$ into the unit disk. The first part is relatively simple:
As shown in \cref{fig:overview}(d), a critical configuration consists of 
two squares of side length $s=\nicefrac{2}{\sqrt 5}$ and a disk~$\mathcal D$ 
of radius 1. It is easy to see that any infinitesimally larger square
(of side length $s+\varepsilon$ for any~$\varepsilon>0$) must contain the
center of $\mathcal D$ in its interior, so two such squares
cannot be packed. \new{Hence, we draw the following conclusion, showing that the bound in \cref{thm:mainAlgo} is tight.}
\new{\begin{lemma}
		For every $\epsilon>0$, there exists a set of squares of total area $\nicefrac{8}{5}+\epsilon$ that cannot be packed into the unit square.
\end{lemma}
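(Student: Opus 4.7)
The plan is to exhibit, for each $\varepsilon > 0$, an explicit set of squares with total area $\nicefrac{8}{5}+\varepsilon$ that does not fit into the unit square. Since $\nicefrac{8}{5}+\varepsilon > 1$ while the unit square has area exactly $1$, the cleanest argument is a pure area comparison: any packing is by definition an interior-disjoint placement inside the container, so the sum of the packed areas cannot exceed the container's area. Hence no set of squares whose total area exceeds $1$ — and in particular no set of total area $\nicefrac{8}{5}+\varepsilon$ — can be packed into the unit square.

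To make the witness concrete, I would take the singleton set consisting of a single square of side length $s = \sqrt{\nicefrac{8}{5}+\varepsilon}$. Its area is $\nicefrac{8}{5}+\varepsilon$ by construction. Since $\nicefrac{8}{5}+\varepsilon > 1$, we have $s > 1$, so this single square cannot be placed inside the unit square in any orientation: for a square of side $s$ rotated by an angle $\theta$, the smallest enclosing axis-aligned square has side $s(\lvert\cos\theta\rvert+\lvert\sin\theta\rvert) \geq s > 1$, exceeding the container's side.

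There is essentially no technical obstacle: the entire argument reduces to the inequality $\nicefrac{8}{5} > 1$. In particular, the unit-square version is qualitatively different from — and far weaker than — the analogous tightness statement for packing into the unit disk (which the paper already sketched via two mutually obstructing squares of side $\nicefrac{2}{\sqrt{5}}$ sharing the disk's center), because the threshold $\nicefrac{8}{5}$ already exceeds the area of the unit square and a trivial volumetric obstruction applies. The only step I would carry out carefully is the brief verification that $\sqrt{\nicefrac{8}{5}+\varepsilon} > 1$, which is immediate, and the observation above that a square of side $> 1$ cannot be packed into the unit square under any rotation.
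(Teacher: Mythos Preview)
Your argument is correct for the lemma exactly as printed, and you have correctly diagnosed that ``unit square'' is a typo for ``unit disk''. For the literal unit-square reading, the area obstruction $\nicefrac{8}{5}+\varepsilon>1$ is indeed the whole story, and your single-square witness works.

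The paper's own justification, however, is for the intended unit-disk statement and is the center-containment argument you allude to: two squares of side length $\nicefrac{2}{\sqrt{5}}+\varepsilon'$ have total area exceeding $\nicefrac{8}{5}$, and each such square placed inside the unit disk must contain the disk's center in its interior, so the two cannot be interior-disjoint. Your volumetric argument does \emph{not} transfer to this intended version, since the unit disk has area $\pi>\nicefrac{8}{5}$; that is precisely why the paper needs the geometric obstruction rather than an area count. If you want your write-up to cover the intended statement as well, the missing step is the verification that a square of side $s>\nicefrac{2}{\sqrt{5}}$ inside the unit disk must contain the center: if it did not, a separating line through the center would confine the square to a closed half-disk, and the largest square that fits in a half-disk of radius $1$ (with one side along the diameter and the opposite two corners on the arc) has side exactly $\nicefrac{2}{\sqrt{5}}$, a contradiction.
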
}

The remainder of our paper focuses on the difficult part:
providing a strategy for packing sets of squares into a disk (described in \cref{sec:algorithm}), 
and then proving that any set of squares with a total area  of at most $\nicefrac{8}{5}$ can indeed be packed
into the unit disk.
This proof is set up with two sets of tools: In \cref{sec:interval-arithmetic-proofs}, we describe 
a general technique that we employed for automated parts of our proof, while 
\cref{sec:analysis-subroutines} provides a number of helpful lemmas.
\cref{sec:analysis-algorithm} gives the actual analysis of our algorithm.

\subsection{Related work: geometric packing}
Problems of geometric packing have been studied for a long time.
Providing a survey that does justice to the wide range of relevant work goes beyond the scope of this paper;
therefore, we strictly focus on very closely related results, in particular, concerning critical packing density.  
We refer to Fejes Tóth~\cite{toth1999recent, toth20172}, Lodi, Martello and Monaci~\cite{LMM2002two},
Brass, Moser and Pach~\cite{brass2006research} and Böröczky~\cite{boroczky2004finite} for more comprehensive surveys.

Even the decision problem whether it is possible to pack a given set of squares into the unit
square was shown to be strongly $\NP$-complete by Leung et
al.~\cite{LTWYC1990packing}, using a reduction from \textsc{3-Partition}.
Already in 1967, Moon and Moser~\cite{MM1967some} proved that it is possible to
pack a set of squares into the unit square if their total area does not exceed
$\nicefrac{1}{2}$.  This bound is best possible, because two squares even
infinitesimally larger than the ones shown in \cref{fig:overview}(a) cannot be
packed. 
The proof is based on a simple recursive argument.

For the scenario with circular objects, Demaine, Fekete, and
Lang~\cite{DFL2010circle} showed in 2010 that deciding whether a given set of
disks can be packed into a unit square is $\NP$-hard.
Using a recursive procedure for partitioning the container into
triangular pieces, Morr, Fekete and
Scheffer~\cite{fekete:splitpacking,morr2017split} proved that the critical
packing density of disks in a square is $\nicefrac{\pi}{(3+2\sqrt{2})} \approx
0.539$.

More recently, Fekete, Keldenich and Scheffer~\cite{fks-pddow-19} established the critical
packing density of disks into a disk. Employing a number of algorithmic
techniques in combination with some interval arithmetic and computer-assisted case
checking, they proved that the critical packing density of disks in a disk is
$\nicefrac{1}{2}$; they also provide a video including an animated overview~\cite{bfk+-pgoow-19}.
In a similar manner, Fekete et al.~\cite{fgk+-wcocrd-20} established a closed-form description of the
total disk area that is sometimes necessary and always sufficient to cover a rectangle depending on its aspect ratio.

Note that the main objective of this line of research is to compute tight worst-case bounds.
For specific instances, a packing may still be possible, even if the density is higher;
this also implies that proofs of infeasibility for specific instances may be trickier.
However, the idea of using the total item volume for computing packing bounds can still
be applied. See the work by Fekete and Schepers~\cite{fs-ncflb-01,fs-gfbhd-04}, which shows how
a {\em modified} volume for geometric objects can be computed, yielding good lower bounds for one- or higher-dimensional
scenarios.

\subsection{Related work: interval arithmetic and computer-assisted proofs}
Establishing tight worst-case bounds for packing problems needs to overcome two main difficulties. 
The first is to deal with the need for accurate computation in the presence of potentially complicated coordinates;
the second is the tremendous size of a full case analysis for a complete proof.

Developing methods for both of these challenges has a long tradition in mathematics.
One of the first instances of interval arithmetic
is Archimedes's classic proof~\cite{pi} that
$\nicefrac{223}{71} \leq \pi \leq \nicefrac{22}{7}$, establishing a narrow interval
for the fundamental constant of geometry.
This entails dealing with inaccurate
computation not merely by giving a close approximation, but by establishing an interval
for the correct value, which can be used for valid lower and upper bounds for subsequent
computations. 

Employing electronic devices (e.g., calculators or computer algebra)
for mathematical arguments is a well-established method for eliminating tedious, error-prone manual calculations.
A famous milestone for the role of computers in theorem proving itself
is the confirmation of the Four Color Theorem, a tantalizing open problem for more than 100 years~\cite{wilson}. 
While the first pair of papers by Appel and Haken~\cite{AH1,AH2}
was still disputed, the universally accepted proof by Robertson et al.~\cite{RSST} still relies on 
extensive use of automated checking. 

Another example is the resolution of the Kepler conjecture by Hales et al.~\cite{kepler_17}: 
While the first version of the proof~\cite{kepler_05}
was still met with some skepticism, the revised and cleaned up
variant~\cite{kepler_17} fits the mold of a more traditional proof, despite
relying both on combinatorial results and computational case checking.
\new{Note that this proof uses a subdivision technique and interval arithmetic in a manner similar to the one used in this paper.
	As in this paper, the result of Hales et al.~\cite{kepler_17} is tight in the numerical sense,
	which means that due to the discretization error introduced by subdividing a space over $\mathbb{R}$ into finitely many pieces,
	parts of the proof must be carried out by other means.}

Other instances of classic geometric problems that were resolved with the help
of computer-assisted proofs are a tight bound for the {E}rdős-{S}zekeres
problem for the existence of convex paths in planar sets of 17
points~\cite{szekeres:computer}, a precursor by Hass and Schlafly~\cite{hass2000double} to the proof of the double bubble theorem by Hutchings et
al.~\cite{hutchings:bubble} (the shape that encloses and separates two given
volumes and has the minimum possible surface area is a standard double bubble,
i.e., three spherical surfaces meeting at angles of $\nicefrac{2\pi}{3}$ on a common
disk), or the proof of $\NP$-hardness of finding a minimum-weight triangulation
(MWT) of a planar point set by Mulzer and Rote~\cite{mulzer:minimum}.

Further examples in the context of
packing and covering include a branch-and-bound approach for covering of
polygons by not necessarily congruent disks with prescribed centers and a
minimal sum of radii by Bánhelyi et al.\!~\cite{Bnhelyi2015OptimalCC}. 

\new{Compared to some more extensive uses of computer assistance in proofs, such as for the four-color theorem,
	in this paper, we use a fairly limited amount of computation to prove our results; we mainly use it to replace 
	an extensive explicit case distinction involving manual analysis and calculus that would be very hard to read
	and verify.}

\section{A worst-case optimal algorithm}
\label{sec:algorithm}
Now we describe \algo,
our worst-case optimal algorithm for packing squares into the unit disk~$\mathcal D$.
The basic idea is to combine refined variants of basic techniques (such as \SP) in several
directions, subdividing the packing area into multiple geometric layers and components.

\subsection{Outline of \algo}
By $s_1,\dots,s_n$, we denote a sequence of squares and simultaneously their side lengths and assume that $s_1 \geq \dots \geq s_n$ is a sorted sequence.
\algo distinguishes three cases that depend on the sizes of the first few, largest squares;
see \cref{fig:Alg} for illustrations.

\begin{description}
	\item[\CI] If $s_1 \leq 0.295$, we place a square of side length $\mathcal{X} = 1.388$ concentric into $\mathcal{D}$
	and place one square of side length  $\mathcal{X}_i= 0.295$, $i\in\{1,\dots,4\}$, to each side of $\mathcal X$, see \cref{fig:Alg}(a).
	The four largest squares $s_1,\ldots,s_4$ are placed in these containers $\mathcal{X}_1,\ldots,\mathcal{X}_4$.
	All other squares are packed into $\mathcal{X}$ using \SP.
	
	\item[\CII] If $s_1 \leq \nicefrac{1}{\sqrt{2}}$ and $s_1^2+s_2^2+s_3^2+s_4^2\geq \nicefrac{39}{25}$,
	let $\mathcal{X}_1,\dots,\mathcal{X}_4$ be four squares of side length $\nicefrac{1}{\sqrt{2}}$ that are placed into $\mathcal{D}$ as depicted in \cref{fig:Alg}(b).
	Furthermore, let $\mathcal{X}$ be a square of side length $\nicefrac{\sqrt{2}}{5}$ that can be packed into~$\mathcal{D}$ in addition to $\mathcal{X}_1,\ldots,\mathcal{X}_4$; see \cref{fig:Alg}(b).
	For $i\leq 4$, $s_i$ is the only square packed into~$\mathcal{X}_i$; this is possible because $s_i\leq s_1\leq \nicefrac{1}{\sqrt{2}}$.
	All other squares are packed into $\mathcal{X}$ using \SP.
	
	\item[\CIII] In the remaining cases, we make extensive use of a refined shelf packing approach.
	Specifically, the largest square $s_1$ is packed into~$\mathcal{D}$ as 
	high as possible, see Figures~\ref{fig:Alg}(c) and \ref{fig:TopBottom}.
	The bottom side of $s_1$ induces a horizontal split of $\mathcal D$ into a \emph{top} and a \emph{bottom} part,
	which are then filled by two subroutines called \TP and \BP, described in \cref{sec:subroutines}.
	For each $i \geq 2$, we then 
	\begin{description}
		\item[\CTP] use \TP to pack $s_i$ if possible,
		\item[\CBP] else we use \BP to pack $s_i$.
	\end{description}
\end{description}

\begin{figure}[h]
	\centering
	\includegraphics[page=3]{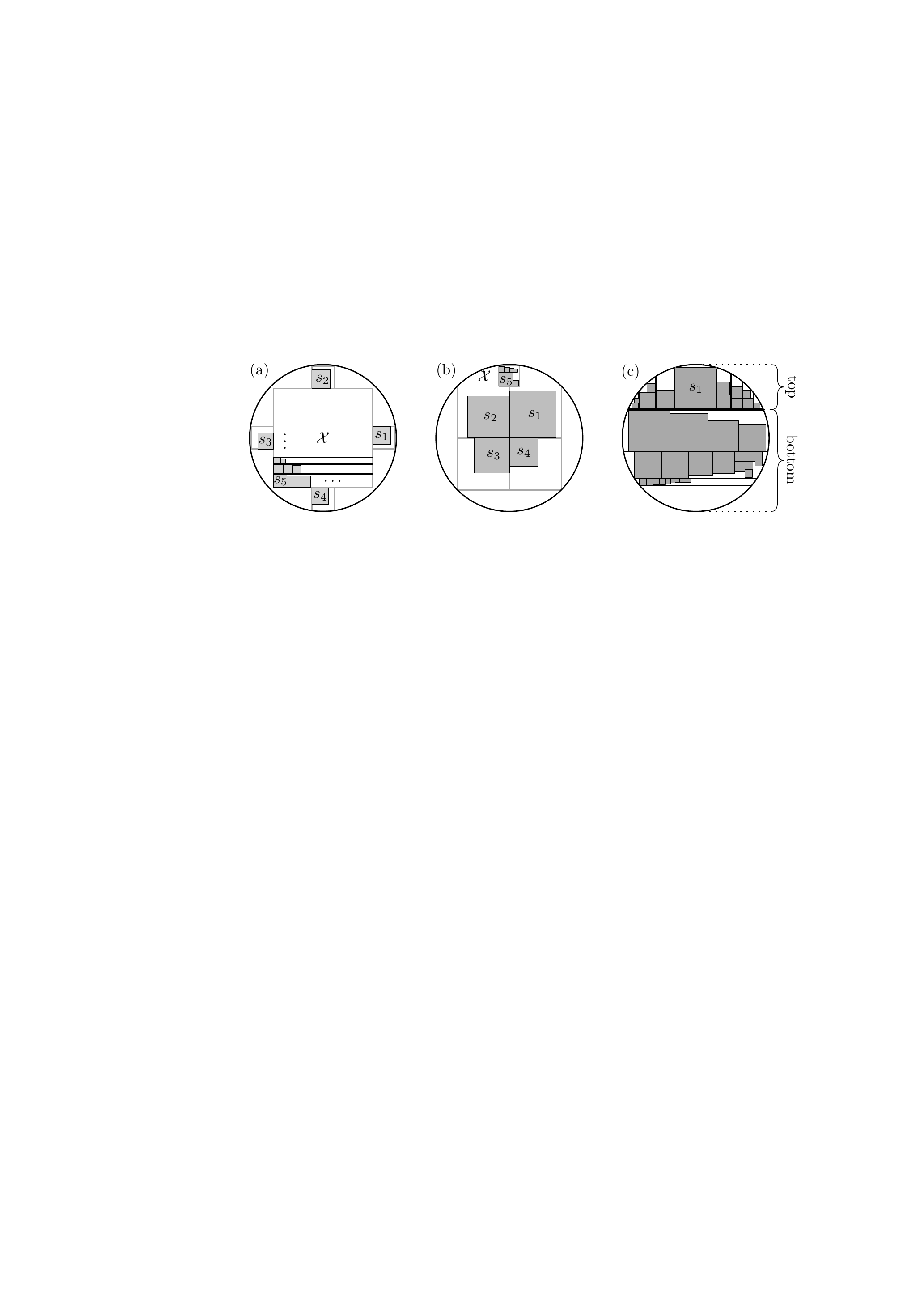}
	\caption{
		Illustration of the packings (a) in Case~\CI,
		(b) in Case~\CII,
		and (c) in Case~\CIII.}
	\label{fig:Alg}
\end{figure}

\begin{figure}[htb]
	\centering
	\includegraphics[page=1]{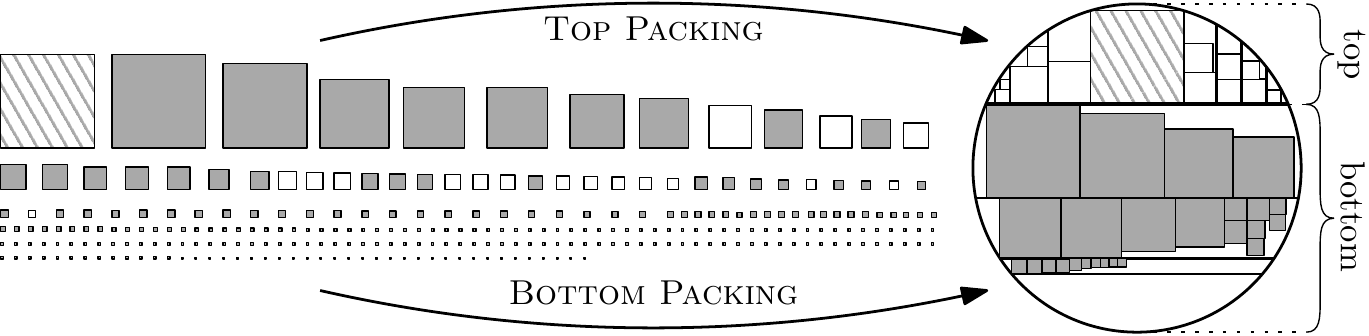}
	\caption{In case \CIII, the largest
		(hatched) square is packed topmost, inducing a top and a
		bottom part of $\mathcal D$.
		Subsequent (white) squares are packed into the pockets of the top part
		with \emph{\TP} (using \RSP as a subroutine) if they fit;
		if they do not fit, they are shown in gray and packed into the bottom part
		with \emph{\BP}, which uses horizontal \Slic, and vertical \RSP
		within each slice.}
	\label{fig:TopBottom}
\end{figure}

\subsection{Subroutines of \algo}
\label{sec:subroutines}
\algo employs a number of different subroutines.
\begin{description}
	\item[\RSP] \label{sec:shelf} The greedy-type packing procedure \SP, employed by Moon and Moser~\cite{MM1967some}, packs objects by decreasing size; see the top of \cref{fig:overview}(a). 
	At each stage, there is a (w.l.o.g.~horizontal) straight cut that separates
	the unused portion of the container from a ``shelf'' into which the next square is packed.
	The height of a shelf is determined by the first packed object. Subsequent
	objects are packed next to each other, until an object no longer
	fits into the current shelf; in this case, a new shelf is opened on top of the previous one.
	For \algo, we use two modifications. 
	
	(1) Parts of the shelf boundaries may be circular arcs; however, we still have a supporting straight axis-parallel boundary and a second, orthogonal straight boundary.
	
	(2) Our refined shelf packing uses the axis-parallel boundary line of a shelf as a support line
	for packing squares; in case of a collision with the circular boundary,
	we may move a square towards the middle of a shelf if this allows packing it. Note that this may only occur in shelves containing the horizontal diameter.
	\item[\TP] The first and largest square $s_1$ is packed as high as possible into $\mathcal D$; see \cref{fig:segmentpacking}(a).
	Then the horizontal line through the bottom of $s_1$ cuts the container into a \emph{top part} that contains $s_1$,
	with two congruent empty pockets $\CL$ and $\CR$ left and right of~$s_1$; and a \emph{bottom part}.
	Each pocket has two straight axis-parallel boundaries, $b_x$ and $b_y$.
	By $\sigma$, we denote the largest square that fits into either pocket.
	For large $s_1$, the bottom side of $\sigma$ does not lie on the same height as the bottom side of $s_1$;
	in that case, we ignore the parts of $\CL$ and $\CR$ that lie below $\sigma$; see \cref{fig:segmentpacking}(e).
	We use \RSP with shelves parallel to the shorter boundary among $b_x$ and $b_y$, 
	as shown in \cref{fig:segmentpacking}(b) and (c).
	If a square does not fit into either pocket, it is packed 
	into the bottom part.
	\begin{figure}[htb]
		\centering
		\includegraphics[page=3]{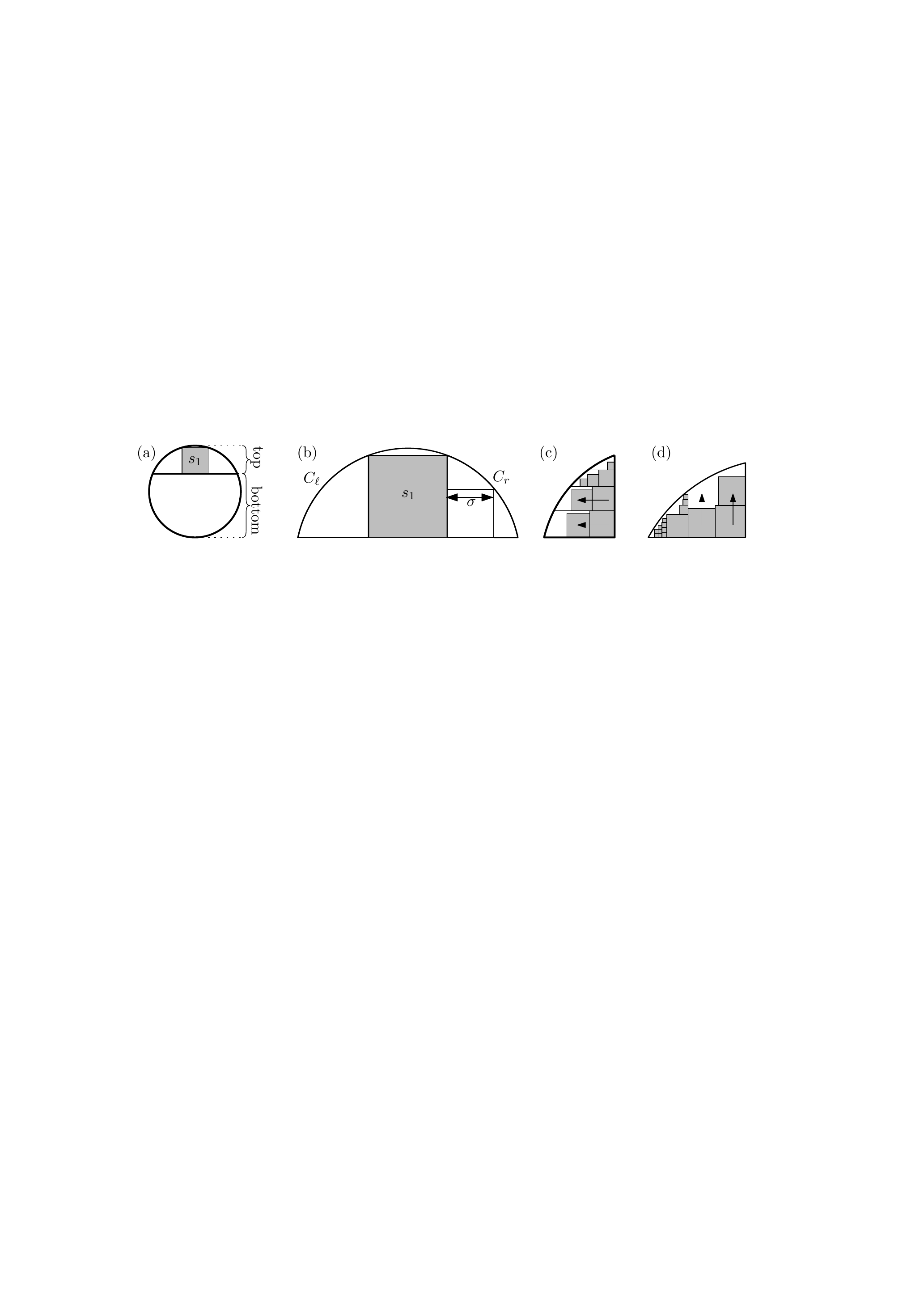}
		\caption{(a) Packing $s_1$ topmost into
			$\mathcal{D}$ yields the top part of $\mathcal{D}$ with  pockets $\CL$ and
			$\CR$, and the bottom part of $\mathcal{D}$. The bottom part is partitioned by \Slic into
			subcontainers~$C_i$, with heights corresponding to the first packed square.\\ 
			(b) A pocket $\CL$ for which $b_x\leq b_y$ implies horizontal shelf
			packing. \\
			(c) A pocket $\CL$ for which $b_x> b_y$ implies vertical shelf
			packing.\\
			(d) Within each subcontainer~$C_i$, \SCP places squares along vertical shelves, starting from the longer straight
			cut of $C_i$.\\
			(e) For large $s_1$, we disregard the parts of $C_{\ell}$ and $C_r$ that lie below their inscribed square $\sigma$.} 
		\label{fig:segmentpacking}
		\label{fig:slicing}
	\end{figure}
	\item [\BP]
	\label{sec:bottom}
	A square that does not fit into the top part of $\mathcal D$ is packed into the bottom part.
	For this purpose, we use (horizontal) \Slic, and (vertical) 
	\SCP within each subcontainer; see \cref{fig:TopBottom} for the overall picture.
	\item[\Slic]
	For packing squares in the bottom part of $\mathcal D$, \Slic subdivides $\mathcal D$ into smaller containers $C_i$,
	by using straight horizontal cuts; see \cref{fig:slicing}(a).
	The height of a subcontainer is determined by the first square packed into it.
	\item [\SCP]
	Within each subcontainer, we use \RSP with vertical shelves.
	These shelves are packed from the longer of the two horizontal cuts, i.e., to pack~$C_i$, we start from  the boundary
	that is closer to the disk center; see \cref{fig:slicing}(d).
\end{description}

\section{Proofs based on interval arithmetic}
\label{sec:interval-arithmetic-proofs}

In interval arithmetic, operations like addition, multiplication or taking the
square root are performed on real intervals $[a,b] \subset \mathbb{R}$ instead
of real numbers.
When applied to intervals, an operation $[a_1,b_1] \circ [a_2,b_2]$ results in the smallest
interval that contains all possible values of $x \circ y$ for $x \in [a_1,b_1], y \in [a_2,b_2]$.
In a practical implementation on computers with finite precision,
computing the smallest such interval is not always possible.
However, using appropriate rounding modes or error bounds, it is still possible to compute an interval that over-approximates the resulting interval,
i.e., contains all possible outcomes of the corresponding real operation.
Predicates such as $[a_1,b_1] \leq [a_2,b_2]$ can also be evaluated on intervals.
The result is a subset of $\{\texttt{false},\texttt{true}\}$ containing all possible outcomes of $x \leq y$ for $x \in [a_1,b_1], y \in [a_2,b_2]$.
This allows evaluating quantifier-free formulas on the Cartesian product of intervals in an over-approximative way.

In many cases throughout this paper, we want to prove that a given non-linear system of real constraints over a bounded $k$-dimensional space $\mathcal{R}$ is unsatisfiable.
This space is typically spanned by a set of $k$ real variables.
Conceptually, to do this in an automatic fashion, we subdivide $\mathcal{R}$ into a sufficiently large number of $k$-dimensional cuboids.
Each such cuboid is defined by an interval for each of the $k$ real variables spanning $\mathcal{R}$.
We then apply interval arithmetic to each such cuboid $\mathcal{C}$ to find a set $S$ of constraints that together eliminate all points of $\mathcal{C}$,
thus proving that no point in $\mathcal{C}$ satisfies all our constraints for a counterexample.

\new{We use this simple technique of subdividing into cubes because it is sufficient for our case, relatively simple to manage and implement and thus arguably less error-prone than relying on deeper, more advanced methods. Moreover, interval arithmetic scales relatively well with the complexity and the number of constraints and can handle non-polynomial constraints involving functions such as $\arccos(x)$ that occur in some of our proofs.}

To improve the efficiency of this approach, our implementation of the basic concept is optimized in several ways.
For instance, the subdivision proceeds in a tree-like fashion according to a fixed ordering of the $k$ variables $v_1,\ldots,v_k$ spanning $\mathcal{R}$.
If variables $v_1,\ldots,v_j$ suffice to exclude a part of~$\mathcal{R}$, we do not split $v_{j+1},\ldots,v_k$ on that part.
Furthermore, we adaptively increase the local fineness of our subdivision if a coarser subdivision does not suffice for some part of $\mathcal{R}$.

Overall, this leads to a limited number of automated proofs\footnote{Source code available at \url{https://github.com/phillip-keldenich/squares-in-disk}.};
for some of these, manual checking would also be feasible, but would involve many case distinctions and would be tedious and unsatisfying.
\revised{Instead, we replace these proofs by the automatic procedure outlined above to have a clear structure instead of an otherwise overwhelming set of arguments.}
\revised{Combined, all automatic proofs required for this paper take less than 1.5 hours and less than 300 MB of memory on the 4 physical cores of the 2.3 GHz Intel i5-8259U CPU in one of the authors' laptops. The proofs involve up to $k = 9$ variables.}

\section{Analysis of subroutines}
\label{sec:analysis-subroutines}
In the following, we establish a number of bounds for the subroutines from \cref{sec:subroutines},
which we use to prove the performance guarantee for \algo.
\new{To prove such a lower bound $\ell$ on the square area packed into some type of container $\mathcal{C}$ by \algo or some subroutine $\mathcal{A}$,
	we typically argue indirectly, by assuming that we have some sequence of squares $s_1,\ldots,s_n$ that $\mathcal{A}$ \emph{fails to pack}.
	W.l.o.g., we always assume that such a sequence is \emph{minimal} in the sense that $\mathcal{A}$ successfully packs $s_1,\ldots,s_{n-1}$,
	but placing $s_n$ according to $\mathcal{A}$ would result in placing a square such that it intersects the exterior of $\mathcal{C}$ or another square $s_i$.
	We then establish our lower bound $\ell$ by showing that the area of $s_1,\ldots,s_n$ must be strictly greater than $\ell$.
	In the remainder of the paper, whenever we say that \algo or a subroutine \emph{fails to pack $s_n$ (or $s_1,\ldots,s_n$)},
	we refer to such a sequence of squares, and always assume that $s_n$ is the first square our algorithm cannot pack.}

\subsection{Shelf Packing}
In several places we make use of the following classic result regarding \SP.
\begin{restatable}[\hspace{-0.2pt}\cite{MM1967some}]{lemma}{shelfpacking}\label{lem:shelfpacking}
	\SP packs every sequence $t_1 \geq \dots \geq t_u$ of squares with a total area of at most $\nicefrac{1}{2}\cdot hw$  into an $h \times w$-rectangle with $t_1 \leq h \leq w$.
\end{restatable}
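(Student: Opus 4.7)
The plan is to mirror the classical Moon--Moser argument \cite{MM1967some}. The squares are already sorted as $t_1 \geq \cdots \geq t_u$, so I would apply \SP greedily along horizontal shelves of width $w$ and argue by contradiction: suppose that some square $t_j$ of side $s$ cannot be placed. Then there are already $k$ shelves with heights $h_1 \geq h_2 \geq \cdots \geq h_k \geq s$, where $h_i$ equals the side of the first (and hence largest) square placed in shelf $i$. The failure produces two inequalities: $w_k + s > w$, where $w_k$ is the occupied width of the current shelf $k$, and $h_1 + h_2 + \cdots + h_k + s > h$, expressing that no new shelf can be opened on top.

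The first step is a per-shelf area estimate. For every completed shelf $i < k$, the first square of shelf $i+1$ (of side $h_{i+1}$) did not fit into shelf $i$, so $w_i > w - h_{i+1}$. Since each square in shelf $i$ has side at least $h_{i+1}$, the total area packed in that shelf strictly exceeds $h_{i+1}(w - h_{i+1})$. An analogous estimate for shelf $k$, with $h_{k+1}$ replaced by $s$, yields area $> s(w - s)$.

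The second step is the algebraic combination. I would sum the per-shelf lower bounds, add the area $s^2$ of the failing square $t_j$, substitute the height inequality $h_1 + \cdots + h_k + s > h$, and use the monotonicity $h_i \geq h_{i+1}$ together with the trivial fact $h_1 + \cdots + h_k \leq h$ to eliminate the $\sum h_i^2$ terms. The goal is to reach the bound that the total area of $t_1, \ldots, t_j$ strictly exceeds $hw/2$, which contradicts the hypothesis and therefore shows that \SP never fails.

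The main obstacle lies in making this last step tight. A straightforward telescoping only yields a lower bound of the form $(h - h_1)(w - h_1)$, which drops below $hw/2$ precisely in the regime $h/2 < h_1 < w/2$. To close this gap, the per-shelf estimate should exploit that the first square of shelf $i$ has side exactly $h_i$, which contributes an additional $h_i(h_i - h_{i+1})$ of packed area beyond the uniform $h_{i+1}$-estimate; inserting this refinement and re-running the telescoping recovers the desired $hw/2$. Alternatively, one can split into the cases $h_1 \leq h/2$ and $h_1 > h/2$, the latter being handled by observing that then only very few shelves can fit above the first one so that the direct contribution $h_1^2$ of $t_1$ dominates.
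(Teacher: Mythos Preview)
The paper does not prove this lemma; it is simply stated with a citation to Moon and Moser~\cite{MM1967some} and used as a black box throughout. So there is no in-paper argument to compare your proposal against. Your outline is the classical Moon--Moser contradiction argument, and for the square case $h=w$ it goes through exactly as you describe: the refined per-shelf estimate plus the height inequality yield $A > t_1^2 + (1-t_1)^2 \geq \tfrac12$, which is precisely the computation the paper performs for the neighbouring Lemma~\ref{lem:shelfpackingtwo}.

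For strict rectangles $h<w$ there is a genuine gap in your plan, and it is not the algebraic one you flag. You run \SP along ``horizontal shelves of width $w$'', i.e., shelves that span the \emph{long} side and are stacked to height $h$. With that orientation the algorithm itself can fail on instances of total area below $hw/2$ whenever $w>2h$: take $t_1=h$ followed by many squares of side $\varepsilon$. The first shelf has height $h$ and consumes all vertical room; after roughly $(w-h)/\varepsilon$ tiny squares the shelf overflows in width, no new shelf can be opened, and \SP aborts having packed total area only $h^2+O(\varepsilon)<hw/2$. No bookkeeping refinement can rescue this --- the algorithm, not the analysis, is at fault. For the lemma to hold, the shelves must span the \emph{short} side $h$ and be stacked along $w$; this is exactly how the paper's \SCP treats its wide subcontainers. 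Your proposed fixes (the extra $h_i(h_i-h_{i+1})$ contribution, or the case split at $h_1=h/2$) still only produce the bound $A > h_1^2+(w-h_1)(h-h_1)$, whose minimum over $h_1\in(0,h]$ drops below $hw/2$ for every $w>h$, so even after reorienting the shelves the final algebraic step needs more care than your sketch provides.
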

If the side length of the largest square is small compared to the size of the container, one can guarantee a higher packing density. 
\begin{restatable}{lemma}{shelfpackingtwo}\label{lem:shelfpackingtwo}
	Any finite set of squares with largest square $x_1<\nicefrac{1}{2}$ is packed by
	\SP into a unit square, provided its total area is at most $\nicefrac{1}{2} + 2(x_1 - \nicefrac{1}{2})^2$.
\end{restatable}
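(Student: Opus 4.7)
The plan is to refine the classical Moon--Moser shelf-packing argument from \cref{lem:shelfpacking} by not absorbing the first square of each shelf into the generic width-based lower bound. The resulting per-shelf bonus terms telescope, and the hypothesis $x_1 < \nicefrac{1}{2}$ converts this slack into the claimed $2(x_1 - \nicefrac{1}{2})^2$ improvement over the standard $\nicefrac{1}{2}$-density guarantee.

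I would argue by contradiction. Suppose \SP fails on a sorted sequence $s_1 \geq \dots \geq s_n$ with $s_1 = x_1 < \nicefrac{1}{2}$, minimally so, meaning that $s_1,\dots,s_{n-1}$ have been placed into shelves $S_1,\dots,S_k$ of heights $h_1 \geq \dots \geq h_k$ (each $h_i$ being the side length of the first square in $S_i$), but $s_n$ cannot be placed. Let $w_i$ denote the total width of the squares occupying $S_i$, and set $h_{k+1} := s_n$ for convenience. The mechanics of \SP yield two inequalities: first, $w_i > 1 - h_{i+1}$ for every $i \leq k$, because the square of side $h_{i+1}$ (the one that opened $S_{i+1}$, respectively $s_n$ itself if $i=k$) failed to fit horizontally in $S_i$; second, $h_1 + \dots + h_k + s_n > 1$, since otherwise a new shelf for $s_n$ could have been opened vertically.

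For each shelf I would separately lower-bound the contribution of the first square (exactly $h_i^2$) and of the remaining squares (each of side at least $h_{i+1}$ and of total width $w_i - h_i$):
\[
\mathrm{area}(S_i) \;\geq\; h_i^2 + h_{i+1}(w_i - h_i) \;>\; h_i^2 + h_{i+1}(1 - h_{i+1} - h_i),
\]
where the second step uses the first inequality together with $h_i + h_{i+1} \leq 2x_1 < 1$. Summing over $i$ and including the failing square's area $s_n^2$, the $-h_{i+1}^2$ terms telescope against $\sum h_i^2$, yielding
\[
\text{total area} \;>\; x_1^2 + T - \sum_{i=1}^{k} h_i h_{i+1},\qquad T := h_2 + \dots + h_{k+1}.
\]
The elementary bound $\sum h_i h_{i+1} \leq x_1 T$ (since every $h_i \leq x_1$) combined with $T > 1 - x_1 > 0$ from the second inequality then gives
\[
\text{total area} \;>\; x_1^2 + (1-x_1)^2 \;=\; 1 - 2x_1 + 2x_1^2 \;=\; \tfrac{1}{2} + 2\!\left(x_1 - \tfrac{1}{2}\right)^2,
\]
which contradicts the hypothesis on the total area of the set. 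The delicate step is the per-shelf split: replacing $\mathrm{area}(S_i)$ by the naive bound $h_{i+1} w_i$ degrades back to the $\nicefrac{1}{2}$-density guarantee, and it is precisely carrying the first-square contribution $h_i^2$ separately, and letting the resulting $-h_{i+1}^2$ telescope, that produces the extra $2(x_1 - \nicefrac{1}{2})^2$. The assumption $x_1 < \nicefrac{1}{2}$ enters only through $h_i + h_{i+1} < 1$, which is what keeps the improvement term nonnegative.
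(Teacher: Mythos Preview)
Your argument is correct and takes essentially the same route as the paper: both refine Moon--Moser to the bound $x_1^2 + (1-x_1)^2$, the paper by reassigning each shelf's first square to the preceding shelf so that shelf $i$ contributes at least $x_{i+1}(1-x_i)$, and you by keeping it in place and telescoping, which produces the identical intermediate sum $h_1^2 + \sum_i h_{i+1}(1-h_i)$. One small expository correction: the step $h_{i+1}(w_i-h_i) > h_{i+1}(1-h_{i+1}-h_i)$ already follows from $w_i > 1-h_{i+1}$ regardless of the sign of $1-h_{i+1}-h_i$, so the hypothesis $x_1 < \nicefrac{1}{2}$ is in fact only used (and only in the weaker form $x_1 < 1$) when you pass from $T > 1-x_1$ to $T(1-x_1) > (1-x_1)^2$.
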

\begin{proof}
	We show that if a set of squares cannot be packed, then its area $A$ exceeds the bound of $\nicefrac{1}{2} + 2(x_1 - \nicefrac{1}{2})^2$,
	\new{hence using the same type of indirect argument as outlined in the first paragraph of \cref{sec:analysis-subroutines}.}
	To this end, we assume that the last square in the sequence cannot be packed by \SP, as illustrated in \cref{fig:shelfbound}(a).

	\begin{figure}[htb]
		\centering
		\includegraphics{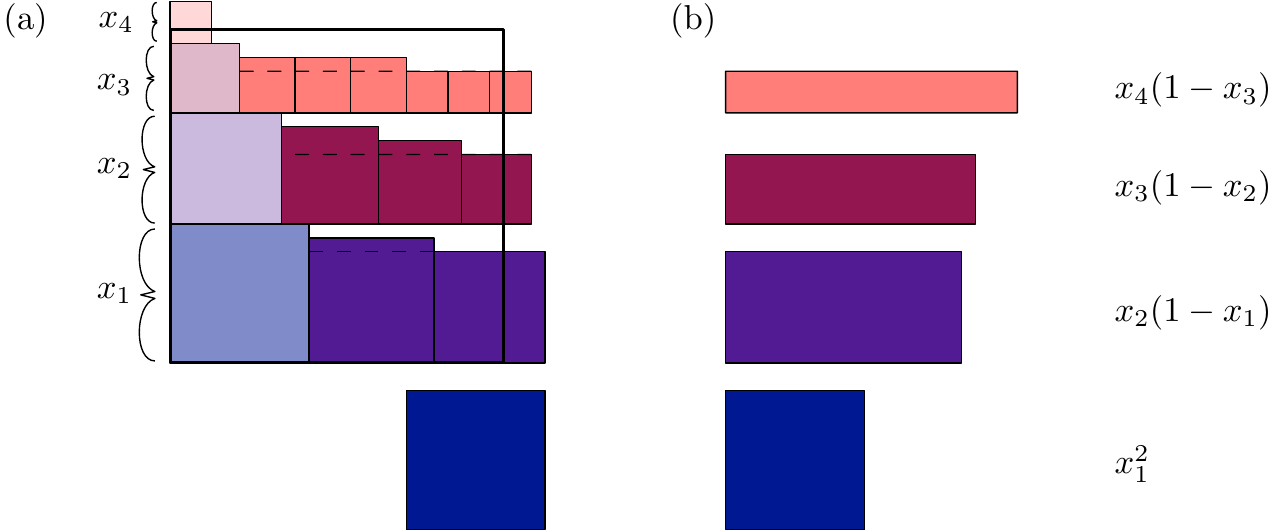}
		\caption{Establishing a refined density bound by shelf packing.
			(a) The area of the first square in each shelf is accounted with the previous shelf.
			(b) Illustration of the lower bound on the area packed in each shelf.}
		\label{fig:shelfbound}
	\end{figure}

	We denote the height of the first square in shelf $i$ by~$x_i$.
	Let $x_1+x_2+\cdots+x_k>1$ be the total height of the arrangement when the last square of the sequence cannot be placed in
	a feasible shelf and is placed in an additional shelf that exceeds the height of the container, i.e., the last square has height $x_k$.
	For computing the area of the packed squares, we account the first square of each shelf with the preceding shelf, see \cref{fig:shelfbound}(a). Then, for $i\geq 1$,  in shelf $i$ the area of the packed squares is at least 
	\[x_{i+1}(1-x_i)\geq x_{i+1}(1-x_1)\] as depicted in illustrated by \cref{fig:shelfbound}(b).
	We can conclude that the total packed area $A$ is
	\[ A > x_1^2 + (x_2+x_3+\dots x_k)(1-x_1) > x_1^2 + (1-x_1)^2 = \nicefrac{1}{2}+ 2(x_1-\nicefrac{1}{2})^2.\hfill\qedhere\]
\end{proof}

\subsection{Top Packing}
\label{sec:anapocketpacking}
We prove the following lower bound on the square area packed by \TP, as long as at least one square fits into the left pocket $\CL$.
By $\sigma = \sigma(s_1)$, we denote the side length of the largest square that can be packed into $\CL$ or $\CR$; see \cref{fig:t-t1-sigmaApp}(c)--(e).
\begin{restatable}{lemma}{anapocketpacking}\label{lem:anapocketpacking} 
	Let $s_1 \geq \cdots \geq s_n$ be a sequence of squares for which \algo fails to pack $s_n$.
	If $s_n \leq \sigma$, then \TP packs squares of total area at least $0.83\sigma^2$.
\end{restatable}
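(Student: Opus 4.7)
The plan is to argue indirectly: suppose that \TP packs squares of total area strictly less than $0.83\sigma^2$, and derive a contradiction from the failure of \algo on $s_n$. Since in Case~\CIII the algorithm routes each $s_i$ to \TP when possible and to \BP otherwise, \algo failing on $s_n$ means in particular that \TP could not place $s_n$ into either pocket $\CL$ or $\CR$. Because $\sigma$ is by definition the side of the largest square that fits into a pocket and $s_n\le\sigma$, each pocket must already hold enough squares from the \RSP subroutine to block $s_n$; the remainder of the proof quantifies ``enough'' in terms of area.

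Next I would focus on a single pocket $\CL$ with axis-parallel straight sides $b_x\le b_y$, so that \RSP proceeds with horizontal shelves stacked along the $b_y$-direction. Following the accounting trick from the proof of Lemma~\ref{lem:shelfpackingtwo}, I would charge the first square of each new shelf to the previous shelf; the area packed in shelf $i$ is then at least $y_{i+1}\cdot w_i$, where $y_{i+1}$ is the height of the next shelf's first square and $w_i$ is a lower bound on the straight-line extent of shelf $i$ before it is cut off by the circular arc. Since the pocket shape and $\sigma$ are both explicit functions of $s_1$, each $w_i$ admits an explicit formula in terms of $s_1$ and the vertical position of the shelf. The failure to place $s_n$ in $\CL$ gives two complementary geometric conditions: no new shelf of height $s_n$ can be opened, so the $y_i$ have essentially exhausted $b_y$ down to the inscribed $\sigma$-square region; and $s_n$ does not fit in the unused width of any existing shelf. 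Combining these with the shelf-by-shelf area bound yields a lower bound on the packed area in $\CL$ that depends only on $s_1$ and the shelf heights; by symmetry the analysis of $\CR$ is identical, and summing both contributions and minimising over the remaining parameters produces a bound of the form $c(s_1)\,\sigma(s_1)^2$, which the claim asserts satisfies $c(s_1)\ge 0.83$ across the range of $s_1$ arising in Case~\CIII.

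The main obstacle will be the circular arc bounding each pocket: it makes the usable width vary nonlinearly with height and blocks any direct appeal to Lemma~\ref{lem:shelfpackingtwo}. Combined with the continuous dependence of pocket shape on $s_1$ and the non-algebraic value $0.83$, the minimisation is unlikely to admit a clean closed form. My proposal is therefore to reduce the claim to a quantifier-free inequality over a bounded set of real parameters (such as $s_1$, the shelf heights $y_1,\dots,y_k$, and $s_n$), use the manual geometric analysis only to set up these parameters and compactify their ranges, and then discharge the resulting numerical optimisation via the interval-arithmetic framework of Section~\ref{sec:interval-arithmetic-proofs}. In this way, the hand-written part handles the geometry and the area accounting, while the computer-assisted part takes care of verifying the final inequality $c(s_1)\ge 0.83$ uniformly.
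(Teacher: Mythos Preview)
Your high-level framing is right and matches the paper: argue indirectly, use symmetry to reduce to a per-pocket bound of $0.415\sigma^2$, and exploit that $s_n\le\sigma$ forces each pocket to be ``full''. Where your plan diverges from the paper, and where it currently has a gap, is in the parametrisation you intend to feed to interval arithmetic.

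You propose to carry the shelf heights $y_1,\dots,y_k$ as free parameters and optimise over them. But $k$ is not bounded a priori, and the usable shelf width in a pocket varies with height along the circular arc, so the Moon--Moser telescoping you invoke from Lemma~\ref{lem:shelfpackingtwo} does not collapse the sum to a finite-dimensional expression. Without an additional argument bounding $k$ or aggregating the shelves, the resulting system is not a quantifier-free formula over a fixed bounded box, so the framework of Section~\ref{sec:interval-arithmetic-proofs} does not apply as stated.

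The paper avoids this by a different reduction. Instead of tracking all shelves, it watches only the inscribed $\sigma$-square in $\CL$ and the moment its height is first (strictly) exceeded: let $t_1,\dots,t_u$ be the squares packed in $\CL$ up to that point and $t_{u+1}$ the next square attempted. The whole analysis then depends only on the two sizes $t_1$ and $t_{u+1}$ (together with $s_1$), and proceeds by a short manual case split against the thresholds $0.645\sigma$, $\nicefrac{\sigma}{2}$, and $\nicefrac{\sigma}{2\sqrt{2}}$. In every case but one the packed area in $\CL$ is at least $\nicefrac{\sigma^2}{2}$; the single tight case $t_1\ge 0.645\sigma$ gives $t_1^2\ge 0.415\sigma^2$, which is where the constant $0.83$ comes from. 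Interval arithmetic is used only for two one-variable point-in-disk checks (Lemma~\ref{lem:summaryInterval1}), namely that a square of side $\nicefrac{\sigma}{2\sqrt{2}}$ (resp.\ $0.645\sigma$) placed at height $\sigma$ (resp.\ on top of the first shelf) still lies in $\mathcal{D}$; the area bound itself is entirely manual. If you want your write-up to go through, replace the open-ended $y_1,\dots,y_k$ parametrisation by this two-parameter reduction to the inscribed $\sigma$-square.
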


Intuitively, the proof makes use of the \SP bound on the squares inscribed in $\CL$ and $\CR$,
but additionally uses the gaps in $\CR$ and $\CL$ to bound the square area packed into each of $\CL$ and $\CR$ by $0.415\sigma^2$.

Before presenting its proof, we make some helpful observations. We assume the center of our unit disk $\mathcal{D}$ lies at the origin $(0,0)$ of our coordinate system.
Recall that \TP packs the largest square $s_1$ as high as possible into $\mathcal{D}$. This implies that the center of $s_1$ is on the vertical line $x = 0$.
For some $u \in (-1,1)$, we denote by $T(u)$ the side length of the largest square with center on $x = 0$ and bottom on $y = u$ that fits into $\mathcal{D}$; see \cref{fig:t-t1-sigmaApp}(a).

\begin{figure*}[htb]
	\centering
	\resizebox{.99\textwidth}{!}{\includegraphics{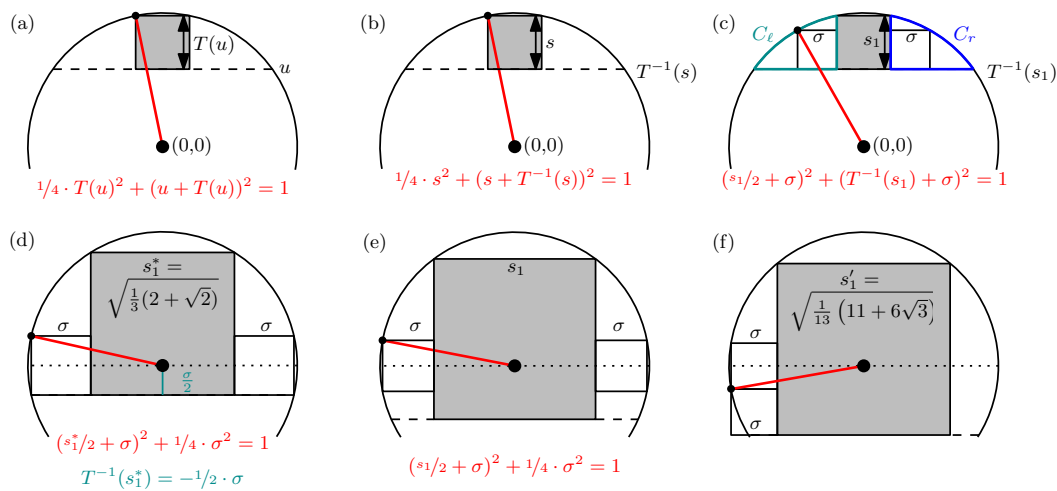}}
	\caption{(a) The definition of $T(u)$ and its defining equation. (b) The definition of $T^{-1}(s)$ and its equation.
		(c) The pockets $\CL$ and $\CR$ used by \TP with their inscribed square $\sigma$ and its equation if $s_1 < s_1^*$,
		(d) $s_1 = s_1^*$ and (e) $s_1 > s_1^*$. (f) The value $s_1' \neq s_1^*$ for $s_1$ for which both cases for $\sigma$ coincide.}
	\label{fig:t-t1-sigmaApp}
\end{figure*}

The inverse function~$T^{-1}(s)$ of $T(u)$ describes the highest possible $y$-coordinate of the bottom side of a square of side length $s$; see \cref{fig:t-t1-sigmaApp}(b). Thus, \TP places the bottom-left corner of $s_1$ at $(-\nicefrac{s_1}{2}, T^{-1}(s_1))$; note that this can be below or above the center of $\mathcal{D}$.
Furthermore, recall that \TP packs the remaining disks into the pockets $\CL$ and $\CR$ induced by placing $s_1$; see \cref{fig:t-t1-sigmaApp}(c).

Now, we present explicit formulas. Solving the equations in \cref{fig:t-t1-sigmaApp}(a)--(b), we get
\[
T(u) = \nicefrac{2}{5} \cdot \left(\sqrt{5-u^2} - 2u\right)\quad \text{ and }\quad 
T^{-1}(s) = \sqrt{1 - \nicefrac{1}{4}\cdot s^2} - s\text{.}\]
To compute $\sigma(s_1)$, we observe the following.
Below some threshold~$s_1^*$, the bottom side of the inscribed square of $\CL$ lies on the horizontal line $y = T^{-1}(s_1)$ and its top left corner touches~$\mathcal{D}$;
see \cref{fig:t-t1-sigmaApp}(c).
\new{In this case, the fact that our disk has radius $1$ implies that $\sigma$ satisfies the equation
	\[\left(\nicefrac{s_1}{2} + \sigma\right)^2 + \left(T^{-1}(s_1) + \sigma\right)^2 = 1,\]
	which has only one positive real solution $\sigma = \nicefrac{1}{4}\cdot\left(-s_1 - 2T^{-1}(s_1) + \sqrt{8 - (s_1 - 2T^{-1}(s_1))^2}\right)$.
	For values $s_1 > s_1^*$ above that threshold, the center of $\CL$'s inscribed square lies on $y = 0$ and both left corners touch the disk; see \cref{fig:t-t1-sigmaApp}(e).
	In this case, $\sigma$ has to satisfy the equation
	\[\left(\nicefrac{s_1}{2} + \sigma\right)^2 + \nicefrac{\sigma^2}{4} = 1,\]
	which has only one positive solution $\sigma = \nicefrac{1}{5}\cdot\left(\sqrt{20 - s_1^2} - 2s_1\right)$.
	There are exactly two positive real values $s_1$ for which these two cases coincide;
	the smaller one yields the threshold value
	$s_1^* = \sqrt{\nicefrac{1}{3}\cdot (2+\sqrt{2})} \approx 1.0668$, see \cref{fig:t-t1-sigmaApp}(d);
	the larger one is $s_1' = \sqrt{\nicefrac{1}{13}\left(11 + 6\sqrt{3}\right)} \approx 1.2828$ and corresponds to the situation depicted in \cref{fig:t-t1-sigmaApp}(f).
	Overall, we obtain the following solution for $\sigma$:}
\begin{equation}\sigma = \begin{cases}
\nicefrac{1}{4}\cdot\left(-s_1 - 2T^{-1}(s_1) + \sqrt{8 - (s_1 - 2T^{-1}(s_1))^2}\right), & \text{if }s_1 \leq s_1^*,\\
\nicefrac{1}{5}\cdot\left(\sqrt{20 - s_1^2} - 2s_1\right), & \text{otherwise.}
\end{cases}
\label{eq:sigma}
\end{equation}
Now we are ready to present a proof of \cref{lem:anapocketpacking}.

\begin{proof}[Proof of \cref{lem:anapocketpacking}]
	We begin by observing that $s_n \leq \sigma$ would fit into either $\CL$ or $\CR$; as we fail to pack $s_n$, $\CL$ and $\CR$ must contain other squares.
	In the following, we prove that \TP packs squares of area $A \geq 0.415\sigma^2$ into $\CL$.
	An analogous argument works for~$\CR$, implying an overall bound of $0.83\sigma^2$.

	By $\ell_1 \coloneqq \sqrt{1 - T^{-1}(s_1)^2} - \nicefrac{s_1}{2}$, we denote the length of the bottom boundary of $\CL$; see Figure~\ref{fig:lemma_interval1}(a).
	W.l.o.g., we assume $\ell_1 \leq s_1$; the other case is symmetric.
	In other words, we assume that the bottom boundary of $\CL$ is shorter than its right boundary,
	which means that we are using horizontal shelves that we fill from right to left as depicted in \cref{fig:segmentpacking}(b).
	
	\begin{figure*}[b!]
		\centering
		\includegraphics{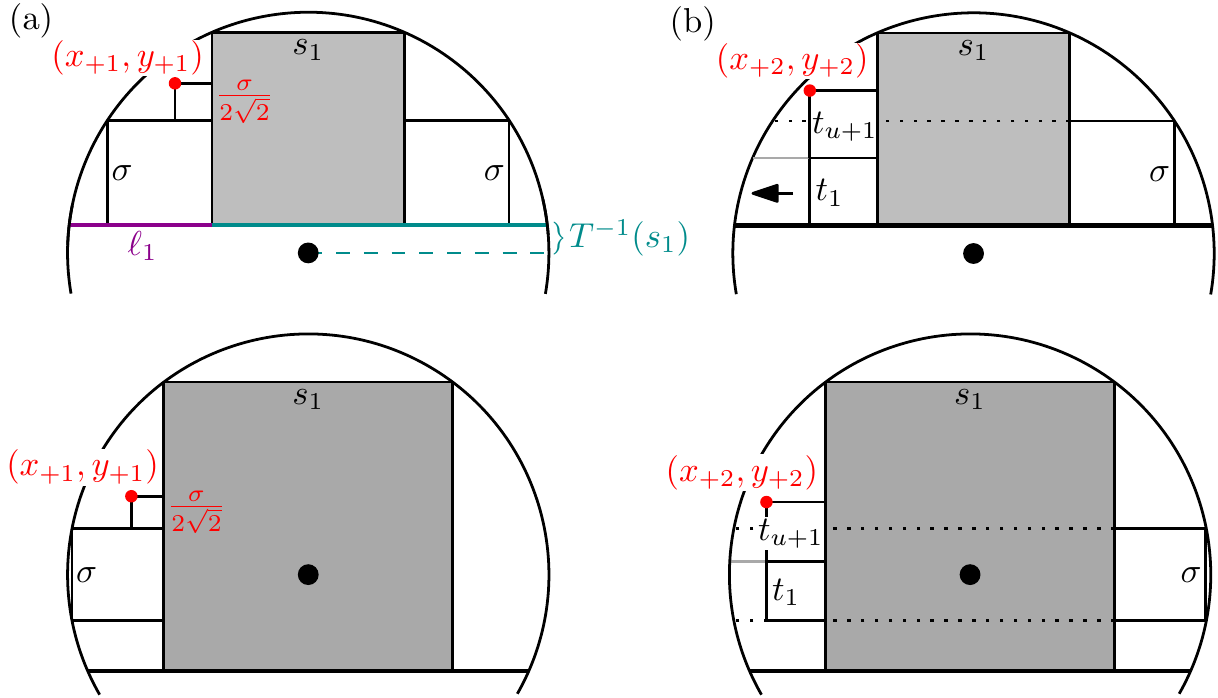}
		\caption{Illustration of the proof of \cref{lem:summaryInterval1}. In both parts, the red point is always contained in the disk $\mathcal{D}$. (a) The values occurring in part (1).
			(b) The situation for $t_1, t_{u+1}$ of maximum possible size in part (2).}
		\label{fig:lemma_interval1}
	\end{figure*}
	
	Consider the subsequence $t_1,\ldots,t_u,t_{u+1}$ of $s_1,\ldots,s_n$, where $t_1,\ldots,t_u$ are the squares packed by \TP into $\CL$ before height $\sigma$ is (strictly) exceeded,
	and $t_{u+1}$ is the next square that we try to pack into $\CL$.
	We observe that $t_{u+1}$ may or may not be packed into $\CL$ by \TP, and that $u \geq 1$ by $t_1 \leq \sigma$, i.e., after placing the first square, height $\sigma$ is not exceeded.
	We make use of the following lemma, proved by interval arithmetic.
	\begin{lemma}[Automatic Analysis for \TP]\label{lem:summaryInterval1}
		Let $\ell_1 \leq s_1$, $x_{+1} = \nicefrac{s_1}{2} + \nicefrac{\sigma}{2\sqrt{2}}$ and 
		$x_{+2} = \nicefrac{s_1}{2} + 0.645\sigma$.
		Furthermore, let
		\[ y_{+1} = \begin{cases}
		T^{-1}(s_1) + \sigma + \nicefrac{\sigma}{2\sqrt{2}}, &\text{if }s_1 \leq s_1^*,\\
		\nicefrac{\sigma}{2} + \nicefrac{\sigma}{2\sqrt{2}}, &\text{otherwise,}
		\end{cases}\quad
		y_{+2} = \begin{cases}
		T^{-1}(s_1) + 2 \cdot 0.645\sigma, &\text{if }s_1 \leq s_1^*,\\
		-\nicefrac{\sigma}{2} + 2 \cdot 0.645\sigma, &\text{otherwise;}
		\end{cases}
		\]
		see \cref{fig:lemma_interval1}. Let \(F_{TP_1}(s_1) \coloneqq x_{+1}^2 + y_{+1}^2\) and \(F_{TP_2}(s_1) \coloneqq x_{+2}^2 + y_{+2}^2\).
		Then, for all \(0.295 \leq s_1 \leq \sqrt{\nicefrac{8}{5}}\), we have \textup{(1)} $F_{TP_1}(s_1) \leq 1$ and \textup{(2)} $F_{TP_2}(s_1) \leq 1$.
	\end{lemma}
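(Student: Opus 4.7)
The plan is to apply the interval-arithmetic framework from \cref{sec:interval-arithmetic-proofs} to the one-dimensional domain $s_1 \in [0.295,\sqrt{\nicefrac{8}{5}}]$. Both $F_{TP_1}$ and $F_{TP_2}$ are composed entirely from $s_1$ and constants via the primitive operations $+,-,\cdot,/,\sqrt{\cdot}$ (using the explicit formulas for $T^{-1}(s_1)$ and for $\sigma$ given in~\eqref{eq:sigma}), so each is amenable to over-approximative evaluation on any subinterval. The two inequalities are therefore a single-variable verification problem, simpler than most of the other uses of the framework in the paper.

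Before invoking the automatic procedure, I would split the domain at $s_1^{*} = \sqrt{\nicefrac{1}{3}(2+\sqrt{2})}$, obtaining the subdomains $[0.295,s_1^{*}]$ and $[s_1^{*},\sqrt{\nicefrac{8}{5}}]$. Inside each subdomain the piecewise definitions of $\sigma(s_1)$, $y_{+1}$ and $y_{+2}$ collapse to one single closed-form branch, and the interval evaluation can be performed without any case distinction inside the recursion. On each subdomain I would then hand the expression trees for $x_{+j}^2 + y_{+j}^2 \leq 1$, $j\in\{1,2\}$, to the adaptive subdivision routine, which keeps refining any subinterval whose over-approximated value has not yet fallen below $1$ and terminates as soon as $F_{TP_j} \le 1$ has been certified throughout.

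Geometrically, both inequalities simply assert that the points $(x_{+1},y_{+1})$ and $(x_{+2},y_{+2})$ lie inside the unit disk $\mathcal{D}$ for every admissible $s_1$; these are the corner points of the candidate squares placed into the pocket $\CL$ that occur in the proof of \cref{lem:anapocketpacking}. Because at both geometric extremes the points lie strictly inside $\mathcal{D}$ and depend continuously on $s_1$, the inequalities are strict on the closed bounded domain, which guarantees that a sufficiently fine subdivision suffices.

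The main obstacle I expect is the loss of precision near the threshold $s_1 = s_1^{*}$, where the two branches of $\sigma$ (and hence of $y_{+1},y_{+2}$) meet with different one-sided derivatives: interval evaluation of the nested square roots in $\sigma$ and $T^{-1}(s_1)$ tends to overestimate widths there, so locally fine refinement is required. A secondary concern is the upper endpoint $s_1 = \sqrt{\nicefrac{8}{5}}$, where $T^{-1}(s_1)$ becomes small and the relevant points lie closer to the disk boundary, so the over-approximations must be tight enough to avoid spurious failures; the adaptive refinement described in \cref{sec:interval-arithmetic-proofs} handles exactly this situation.
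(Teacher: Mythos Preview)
Your overall plan---hand the single-variable expressions to the interval-arithmetic prover of \cref{sec:interval-arithmetic-proofs}---is exactly what the paper does; the lemma is simply stated as ``proved by interval arithmetic.''  However, there is a genuine gap: you drop the hypothesis $\ell_1 \leq s_1$ and propose to verify $F_{TP_j}(s_1)\leq 1$ on the \emph{entire} interval $[0.295,\sqrt{\nicefrac{8}{5}}]$.  That fails.  For instance, at $s_1=0.295$ one computes $T^{-1}(s_1)\approx 0.694$, $\sigma\approx 0.232$, and hence
\[
y_{+1}=T^{-1}(s_1)+\sigma+\tfrac{\sigma}{2\sqrt{2}}\approx 0.694+0.232+0.082\approx 1.008>1,
\]
so $F_{TP_1}(0.295)>1$.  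This is not a counterexample to the lemma, because at $s_1=0.295$ one has $\ell_1=\sqrt{1-T^{-1}(s_1)^2}-\nicefrac{s_1}{2}\approx 0.572>s_1$, violating the hypothesis; geometrically, for such small $s_1$ the pocket is wider than it is tall, the shelves are vertical rather than horizontal, and the formulas for $y_{+1},y_{+2}$ do not describe the relevant corner.

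The fix is simple but essential: feed the prover the additional constraint $\ell_1(s_1)\leq s_1$, i.e.\ $\sqrt{1-T^{-1}(s_1)^2}\leq \nicefrac{3s_1}{2}$, so that regions where it is violated are discarded rather than refined indefinitely.  With that constraint included, your splitting at $s_1^{*}$ and the adaptive refinement proceed exactly as in the paper's framework.
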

	
	If $0.645\sigma \leq t_1$, the packed area inside $\CL$ is at least $t_1^2 \geq 0.645^2\sigma^2 > 0.415\sigma^2$.
	Thus, in the following, we assume $t_1 < 0.645\sigma$.
	
	Furthermore, if $t_{u+1} \leq \nicefrac{\sigma}{2 \sqrt{2}}$, we can apply \cref{lem:summaryInterval1}~(1), showing that $t_{u+1}$ can be packed into $\CL$ by \TP; see Figure~\ref{fig:lemma_interval1}(a).
	In particular, $t_{u+1}$ can always be packed into $\CL$ such that its bottom side lies on height $\sigma$ and its right side touches $s_1$.
	The total area packed by \TP into $\CL$ is at least the total area packed by \SP into the square of area $\sigma$; here we use the fact that the height of the bottom segment of a pocket and the bottom segment of the contained square $\sigma$ coincide, see also \cref{fig:segmentpacking}(e).
	Because packing $t_{u+1}$ exceeds height $\sigma$, \cref{lem:shelfpackingtwo} implies that the total area of $t_{u+1}$ and the squares already packed into $\CL$ exceeds $\nicefrac{\sigma^2}{2}$. 
	Thus, in the following, we assume $\nicefrac{\sigma}{2 \sqrt{2}} < t_{u+1} \leq t_1 < 0.645\sigma$.
	
	If $t_1 \leq \nicefrac{\sigma}{2}$, at least four squares are packed into $\CL$ by \RSP before height $\sigma$ is exceeded.
	Consequently, the total packed area is at least $4\left( \nicefrac{\sigma}{2 \sqrt{2}} \right)^2 = \nicefrac{\sigma^2}{2}$. 
	Thus, in the following we assume $t_1 > \nicefrac{\sigma}{2}$.
	
	Now let us assume that only one shelf is constructed before height $\sigma$ is exceeded.
	That shelf has height $t_1$ and thus we must have $t_{u+1} > \sigma - t_1$.
	We use \cref{lem:summaryInterval1}~(2) to prove that we can pack $t_{u+1}$ on top of the first shelf, even when assuming that $t_1 = t_{u+1} = 0.645\sigma$ are as large as possible; see Figure~\ref{fig:lemma_interval1}(b).
	Thus, the total area packed into $\CL$ is at least $t_1^2 + t_{u+1}^2 \geq t_1^2 + (\sigma - t_1)^2 \geq \nicefrac{\sigma^2}{2}$.
	
	Otherwise, at least two shelves are constructed before height $\sigma$ is exceeded.
	The first shelf has height $t_1$.
	The second shelf contains at least two squares because its height is at most $\sigma - t_1 \leq \nicefrac{\sigma}{2}$, and thus at most half of its width.
	Thus, the area packed into $\CL$ is at least $t_1^2 + 2t_{u+1}^2 \geq \nicefrac{\sigma^2}{4} + 2(\nicefrac{\sigma}{2\sqrt{2}})^2 = \nicefrac{\sigma^2}{2}$, concluding the proof of \cref{lem:anapocketpacking}.
\end{proof}

\subsection{Subcontainer Packing}

For the analysis of \SCP, let $C_1,\dots,C_k$ be the subcontainers constructed by \BP and let $R_1,\dots,R_k$ be the maximal rectangles contained in $C_1,\dots,C_k$; see \cref{fig:scp-overview}.

\begin{figure}[ht]
	\centering
	\begin{minipage}[t]{.29\textwidth}
		\centering
		\includegraphics[scale=1]{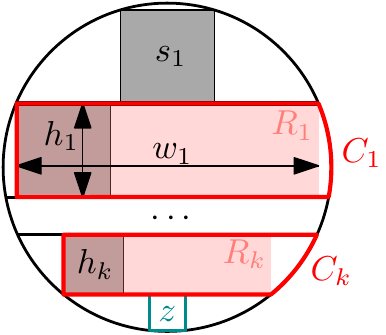}
		\caption{Subcontainers $C_i$, $ i\leq k$, produced by \Slic.
		}
		\label{fig:scp-overview}
	\end{minipage}\hfill
	\begin{minipage}[t]{.67\textwidth}
		\centering
		\includegraphics[scale=1]{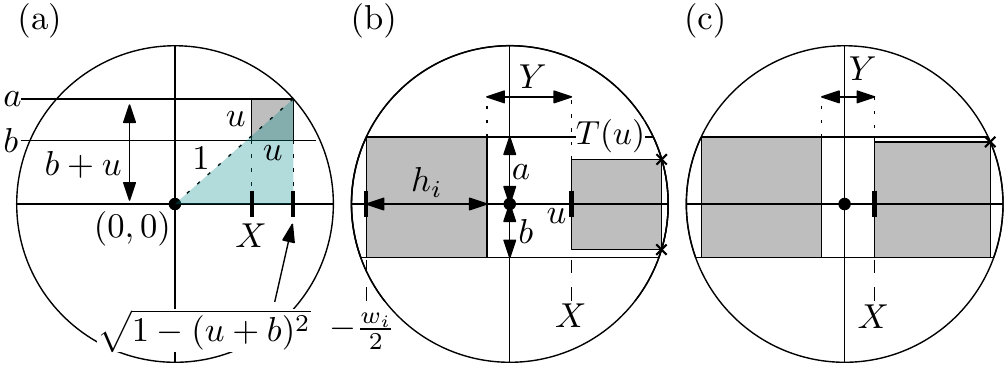}
		\caption{The computation of $X$ for a square of side length~$u$:
			(a) in case $b \geq 0$, which is symmetric to $a < 0$;
			(b) in case $a > 0 > b$ and $u \leq 2c$;
			(c) in case $a > 0 > b$ and $u > 2c$.}
		\label{fig:functionG}
	\end{minipage}
\end{figure}

For $i = 1,\dots,k$, let $h_i$ and $w_i$ denote the height and the width of $R_i$.
Recall that $h_i$ simultaneously denotes the height of $C_i$ and the first square packed into $C_i$.
Let $z$ be the largest square that could be packed below $C_k$.
We define $h_{k+1} \coloneqq s_n$, so $h_{i+1}$ always denotes the first square that did not fit into $C_i$.
Furthermore, we denote the total area of squares packed into $C_i$ by $\| C_i \|$.
We establish several lower bounds on this area $\| C_i \|$.
One such bound is derived from the following observation.

\begin{observation}\label{lem:betterShelfPacking}
	The total area packed by \SCP into $C_i$ is at least the total area packed by \SP into $R_i$.
\end{observation}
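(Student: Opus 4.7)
The plan is to prove the observation by a coupling of the two shelf packings applied to the same input sequence. The maximal rectangle $R_i$ is inscribed in $C_i$ with one of its horizontal sides resting on the longer horizontal cut, which is exactly the base from which \SCP opens its vertical shelves. I treat \SP on $R_i$ as using vertical shelves (the Moon--Moser bound in \cref{lem:shelfpacking} is symmetric in the two rectangle dimensions, so this changes neither the correctness of the algorithm nor the total packed area), and align the first shelf of \SP with the first shelf of \SCP that overlaps $R_i$ along the longer cut.

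Next I would argue shelf by shelf that \SCP dominates \SP. A shelf of \SP in $R_i$ has width equal to its first square and vertical extent $h_i$, while the corresponding shelf of \SCP at the same horizontal offset also has width equal to its first square, and its vertical extent in $C_i$ is at least $h_i$. This holds because every horizontal chord of $C_i$ at distance at most $h_i$ from the longer cut is at least as long as the shorter cut, which contains the horizontal extent of $R_i$ by definition of the maximal inscribed rectangle. Hence every square that \SP places in a shelf of $R_i$ fits into the corresponding shelf of \SCP at an analogous position, and an easy induction over the input sequence shows that every square packed by \SP is also packed by \SCP.

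The main obstacle is handling the shelves of \SCP that fall outside the horizontal range of $R_i$: \SCP may open its first shelf in a lune region of $C_i \setminus R_i$, or open further shelves there after \SP has exhausted the width $w_i$. These contribute only additional packed area to \SCP and never reduce its total, so the domination persists. Moreover, the inward-shift freedom granted by \RSP can only help \SCP fit more squares than a strict shelf placement would, never fewer. Summing over all squares packed by \SP yields the desired inequality $\|C_i\| \geq \|R_i\|_{\SP}$.
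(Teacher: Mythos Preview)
The paper does not prove this statement at all; it is stated as an observation without justification, presumably because the authors regard it as immediate from $R_i \subseteq C_i$ together with \SCP being \RSP, a refinement of \SP. Your coupling argument therefore goes well beyond what the paper offers. The approach is sound, and the one point you could sharpen is the alignment of the first shelf: because the first square handed to \SCP in $C_i$ has side length exactly $h_i$ (it defines the subcontainer's height), it spans the full height of $C_i$ and hence can only be placed within the horizontal extent of $R_i$. Thus \SCP's first shelf is anchored at the edge of $R_i$, not in a lune, so your worry about a lune-based first shelf cannot arise. From there, as long as shelves lie within $R_i$'s horizontal extent, \SCP and \SP (with matching vertical shelves from the same side) are identical step by step---each shelf has height exactly $h_i$, so both open the same shelves with the same first squares and pack the same squares into each; divergence can occur only once \SP would overrun $R_i$, at which point \SCP may continue into the remaining lune and only gain area. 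One wording nit: the vertical extent of a \SCP shelf is at most $h_i$ (the height of $C_i$), not ``at least $h_i$''; within $R_i$'s horizontal range it is exactly $h_i$, which is what you actually need.
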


If the width of $R_i$ is at least twice its height, the following lemma improves on this bound. 
\begin{restatable}{lemma}{Aone}\label{cor:Aone}
	For every sequence of squares $s_1,\ldots,s_n$ for which \algo constructs at least $i$ subcontainers and fails to pack $s_n$, if $w_i \geq 2 h_i$, the area packed into~$C_i$ is
	\begin{align*}
	\| C_i \| \geq B_1(h_i,w_i,h_{i+1})
	\coloneqq & \max \begin{cases}
	\nicefrac{1}{2}\cdot h_i w_i + \nicefrac{1}{4}\cdot h_i^2,\\
	h_i^2 + (w_i - h_i - h_{i+1})h_{i+1},\\
	\nicefrac{1}{2}\cdot h_i (w_i + h_i)  - h_{i+1}^2.
	\end{cases}
	\end{align*}
\end{restatable}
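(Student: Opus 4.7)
The plan is to analyze the vertical-shelf packing that \SCP performs inside~$C_i$; by \cref{lem:betterShelfPacking} it suffices to lower-bound what ordinary shelf packing places in the inscribed rectangle $R_i$ of dimensions $h_i\times w_i$. Let $t_1=h_i\ge t_2\ge\cdots\ge t_p$ denote the side lengths of the first squares of the $p$ shelves. Since $t_1=h_i$ equals the container height, shelf~$1$ contains only its first square and contributes area $h_i^2$ (no further square fits on top). Because placing $h_{i+1}$ fails in $R_i$, we have $\sum_{j=1}^{p} t_j + h_{i+1} > w_i$; moreover, every square in $C_i$ has side at least $h_{i+1}$, so in particular $t_j\ge h_{i+1}$ for all $j$. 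I will prove each of the three bounds inside the maximum separately.

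The second bound follows from a first-square counting argument: shelf~$j\ge 2$ contributes at least its first square, whose area satisfies $t_j^2\ge t_j\cdot h_{i+1}$. Summing yields $\sum_{j\ge 2} t_j^2 \ge h_{i+1}\sum_{j\ge 2} t_j > h_{i+1}(w_i - h_i - h_{i+1})$, and adding $h_i^2$ for shelf~$1$ gives $h_i^2+(w_i-h_i-h_{i+1})\,h_{i+1}$. The third bound follows by applying \cref{lem:shelfpacking} to the sub-rectangle of dimensions $(w_i-h_i)\times h_i$ that holds shelves $j\ge 2$: the hypothesis $w_i\ge 2h_i$ ensures $w_i-h_i\ge h_i$, so the conditions of that lemma are met. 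Failure of the packing there implies that the total attempted area exceeds $\tfrac{1}{2}(w_i-h_i)h_i$, and hence the packed area exceeds $\tfrac{1}{2}(w_i-h_i)h_i - h_{i+1}^2$; adding $h_i^2$ for shelf~$1$ yields $\tfrac{1}{2}h_i(w_i+h_i)-h_{i+1}^2$.

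For the first bound $\tfrac{1}{2}h_iw_i+\tfrac{1}{4}h_i^2$, I split into two cases on the size of $h_{i+1}$. If $h_{i+1}\le h_i/2$, it follows directly from the third bound, because the correction $-h_{i+1}^2\ge -\tfrac{1}{4}h_i^2$ supplies exactly the needed slack. Otherwise $h_{i+1}>h_i/2$, and each shelf in $C_i$ then contains exactly one square: after placing a square of side at least $h_{i+1}>h_i/2$ in a shelf, the remaining vertical space is less than $h_i/2<h_{i+1}$, too small to admit another square. The packed area is therefore $\sum_{j=1}^p t_j^2$, with $t_1=h_i$, each $t_j\in[h_{i+1},h_i]$ for $j\ge 2$, and $\sum_j t_j>w_i-h_{i+1}$.

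The main obstacle is this second subcase. The naive bound $\sum t_j^2\ge h_{i+1}\sum t_j$ used for the second inequality is not tight enough in general to yield the first bound, and one must also exploit the upper bound $t_j\le h_i$ to force the number $p$ of shelves to be sufficiently large. By convexity of $x\mapsto x^2$, the extremal configurations of $(t_2,\ldots,t_p)$ minimizing $\sum t_j^2$ subject to the constraints above are supported on $\{h_{i+1},h_i\}$, reducing the verification to a discrete one-parameter optimization; a careful case split on the resulting integer parameter then completes the argument.
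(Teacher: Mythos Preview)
Your treatment of the second and third bounds is correct and essentially matches the paper's \cref{lem:anaFramesOfBasePacking,lem:anaFramesOfBasePackingTwo}. For the first bound, your case split on whether $h_{i+1}\le h_i/2$ also coincides with the paper's \cref{lem:shelfpackingnew}, and your reduction of the sub-case $h_{i+1}\le h_i/2$ to the third bound is exactly what the paper does there.

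The gap is in the sub-case $h_{i+1}>h_i/2$. Your convexity claim is backwards: minimising the convex function $\sum t_j^2$ over a convex region does \emph{not} push the minimiser to a vertex; that is the behaviour of the \emph{maximum}. For fixed $p$ and fixed $\sum_{j\ge 2} t_j$, the minimum of $\sum t_j^2$ is attained when the $t_j$ are as equal as possible, not when they lie in $\{h_{i+1},h_i\}$. Concretely, take $h_i=1$, $h_{i+1}=0.6$, $w_i=3$: the configuration $(t_2,t_3)=(0.71,0.7)$ is feasible (sum $2.41\le 3$ and $2.41+0.6>3$) and gives $\sum_{j\ge 2}t_j^2\approx 0.994$, whereas every feasible configuration with all $t_j\in\{0.6,1\}$ has $\sum_{j\ge 2}t_j^2\ge 1.08$. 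Restricting attention to your ``extremal'' configurations therefore only \emph{upper}-bounds the true minimum, which is the wrong direction for the lower bound you need. Beyond this, ``a careful case split on the resulting integer parameter then completes the argument'' is a promise, not a proof.

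The paper handles this sub-case by a short direct argument instead of any optimisation reduction. Writing $h_{i+1}=\tfrac{1}{2}h_i+\delta$ and $y=w_i-\sum_j t_j$, it uses the linear estimate $\|C_i\|\ge h_i^2+(w_i-h_i-y)\bigl(\tfrac{1}{2}h_i+\delta\bigr)$ and checks $\|C_i\|\ge\tfrac12 h_iw_i+\tfrac14 h_i^2$ by elementary algebra, splitting once more: if $w_i\ge 2h_i+\delta$, then $w_i-h_i-y\ge\tfrac12 h_i$ and $\delta\ge y-\tfrac12 h_i$ suffice; if $2h_i\le w_i<2h_i+\delta$, exactly two squares are packed and $t_1^2+t_2^2\ge h_i^2+h_{i+1}^2$ already gives the bound.
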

For better readability, we present the proof of \cref{cor:Aone} in the end of this subsection, namely in  \cref{app:scp}.

Moreover, we can extend $B_1$ to the cases where the width of $R_i$ is smaller than twice its height as follows. 
\begin{restatable}{lemma}{Aeleven}\label{lem:Aeleven}
	For every sequence of squares $s_1,\ldots,s_n$ for which \algo constructs \new{exactly $j \geq i$ subcontainers and fails to pack $h_{j+1} := s_n$, 
		the area  packed into~$C_i$ is}
	\begin{align*}
	\| C_i \| &\geq B_{2}(h_i,w_i,h_{i+1})
	:=	\begin{cases}
	h_i^2 & \text{ if } w_i < h_i + h_{i+1}, \phantom{\leq w_i \leq  2 h_{i}}\\
	h_i^2 + h_{i+1}^2 & \text{ if }\;\; \phantom{w_i \leq} h_i + h_{i+1} \leq w_i < 2 h_{i}, \\
	B_1(h_i,w_i,h_{i+1}) & \text{ if }\; \phantom{w_i \leq h_i + h_{i+1} \leq w_i \leq \ } 2 h_{i} \leq w_i.
	\end{cases}
	\end{align*}
\end{restatable}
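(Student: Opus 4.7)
The plan is to split the argument by cases on $w_i$. The case $w_i \geq 2h_i$ coincides with the hypothesis of \cref{cor:Aone}, so the bound $B_1(h_i, w_i, h_{i+1})$ is inherited directly and no new work is needed. The remaining two cases both require a careful inspection of the geometry of vertical \RSP inside the slice $C_i$.

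The structural observation I would use throughout is that the first square placed into $C_i$ has side exactly $h_i$ equal to the slice height, so it fills its vertical shelf completely: width $h_i$ along the longer horizontal cut and height $h_i$ top-to-bottom of the slice. Consequently, every later square placed into $C_i$ must open a new vertical shelf adjacent to the first, and since $w_i$ is the length of the shorter of the two chords bounding the slice, all these further shelves must together fit within a remaining horizontal extent of $w_i - h_i$.

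For the case $w_i < h_i + h_{i+1}$, the remaining horizontal extent $w_i - h_i$ is strictly less than $h_{i+1}$. Because squares are processed in decreasing order and $h_{i+1}$ is by definition the first square rejected by $C_i$, every square considered for $C_i$ after the first has side at least $h_{i+1}$; none of them can open a new shelf. Hence the only contribution is from the first square, yielding $\|C_i\| \geq h_i^2$.

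For the case $h_i + h_{i+1} \leq w_i < 2h_i$, let $s_{m_i+1}$ denote the next square processed after the one of side $h_i$. If $s_{m_i+1}$ were rejected by $C_i$, then by definition $s_{m_i+1} = h_{i+1}$, and rejection would force the new shelf of width $h_{i+1}$ not to fit, i.e.\ $w_i < h_i + h_{i+1}$, contradicting the case assumption. Therefore $s_{m_i+1}$ is packed, opening a new shelf of width $s_{m_i+1} \geq h_{i+1}$ whose opening square alone contributes area $s_{m_i+1}^2 \geq h_{i+1}^2$. Combined with the first square, this gives $\|C_i\| \geq h_i^2 + h_{i+1}^2$.

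The main point that needs care is verifying that, in the second case, the new adjacent shelf really does fit inside the curved slice and not merely inside the axis-aligned rectangle $R_i$: because the shelf is rectangular of dimensions $s_{m_i+1} \times h_i$ and is placed along the longer (top) cut next to the first shelf, the binding geometric constraint is that both bottom corners lie in the disk, which reduces exactly to $s_{m_i+1} \leq w_i - h_i$; this is guaranteed by the case hypothesis together with $s_{m_i+1} \geq h_{i+1}$, so no extra computer-assisted or calculus-heavy verification is needed.
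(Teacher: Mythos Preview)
Your approach mirrors the paper's: case three defers to \cref{cor:Aone}, case one is immediate from $h_i$ being packed, and case two uses exactly the contradiction ``if the second candidate is rejected then it equals $h_{i+1}$, but $h_i+h_{i+1}\le w_i$ means it fits in $R_i\subseteq C_i$''. Two corrections are in order.

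For the first case you over-argue. The bound $\|C_i\|\ge h_i^2$ holds simply because $h_i$ is packed; there is no need to claim that \emph{only} one square is packed, and in fact that stronger claim may be false, since \RSP can exploit parts of $C_i$ lying outside $R_i$.

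Your final paragraph contains a genuine logical error. You assert that $s_{m_i+1}\le w_i-h_i$ follows from the case hypothesis $h_{i+1}\le w_i-h_i$ together with $s_{m_i+1}\ge h_{i+1}$; but these inequalities chain the wrong way, and the deduction is invalid. Fortunately the paragraph is also unnecessary: your second-case argument is already complete without it. Once the contradiction shows that $s_{m_i+1}$ is not rejected, the algorithm packs it, and there is nothing further to verify geometrically. The only placement fact you need is that \emph{in the specific scenario $s_{m_i+1}=h_{i+1}$} the square fits next to $h_i$ inside $R_i\subseteq C_i$, which is immediate from $h_i+h_{i+1}\le w_i$; you never need to bound a general $s_{m_i+1}$ against $w_i-h_i$.
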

\begin{proof}
	We always pack at least the square $h_i$ into $C_i$.
	As $h_{i+1} \leq h_i$, if $h_i + h_{i+1} \leq w_i$, we pack at least two squares into $C_i$:
	Let $s_j$ be the second square we consider packing into $C_i$.
	If $h_i$ and $s_j$ do not fit into $C_i$, then $s_j = h_{i+1}$, as we would open a new subcontainer for $s_j$.
	This contradicts $h_i + h_{i+1} \leq w_i$, as $h_i$ and $h_{i+1}$ fit into $R_i$ and thus into $C_i$.
\end{proof}

For the last lemma of this subsection, we introduce some useful notation.
Let $a,b$ be the $y$-coordinates of the upper and the lower side of~$C_i$ and let $c\coloneqq c(a,b) = \min \{ a,-b \}$. 
When $C_i$ contains the center of the disk, i.e.,  $a> 0>b$, $c$ denotes the distance of the origin to the nearer side of $C_i$, see \cref{fig:functionG}(b) and (c).
The maximal $x$-coordinate $X$ of the left side of a square of side length $u$ in $C_i$ is determined by
\begin{align*}
X(a,b,u) \coloneqq& 
\left.
\begin{cases}
\sqrt{1-(u+b)^2} - u & \text{if } b\geq 0,\\
\sqrt{1-(u-a)^2} - u & \text{if } a < 0,\\
T^{-1}(u) & \text{else if } u \leq 2 c,\\
\sqrt{1-(u-c)^2} - u & \text{otherwise} 
\end{cases}
\right\rbrace
=\begin{cases}
T^{-1}(u) & \text{if } u \leq 2 c,\\
\sqrt{1-(u-c)^2} - u & \text{otherwise}
\end{cases}
\end{align*}

The $x$-coordinate of the right side of the first square $h_i$ packed into subcontainer~$C_i$ is $-\nicefrac{1}{2}\cdot w_i + h_i$.
Thus, as $h_{i+1}$ did not fit into $C_i$, we can lower bound the total width of squares packed into $C_i$ after~$h_i$, see \cref{fig:lowerBoundA2}(a), by
\[Y \coloneqq Y(a,h_i,w_i,h_{i+1}) \coloneqq \nicefrac{1}{2} \cdot w_{i} - h_{i} + X(a,a-h_i,h_{i+1}).\]

\begin{figure*}[b]
	\centering
	\includegraphics[page=3]{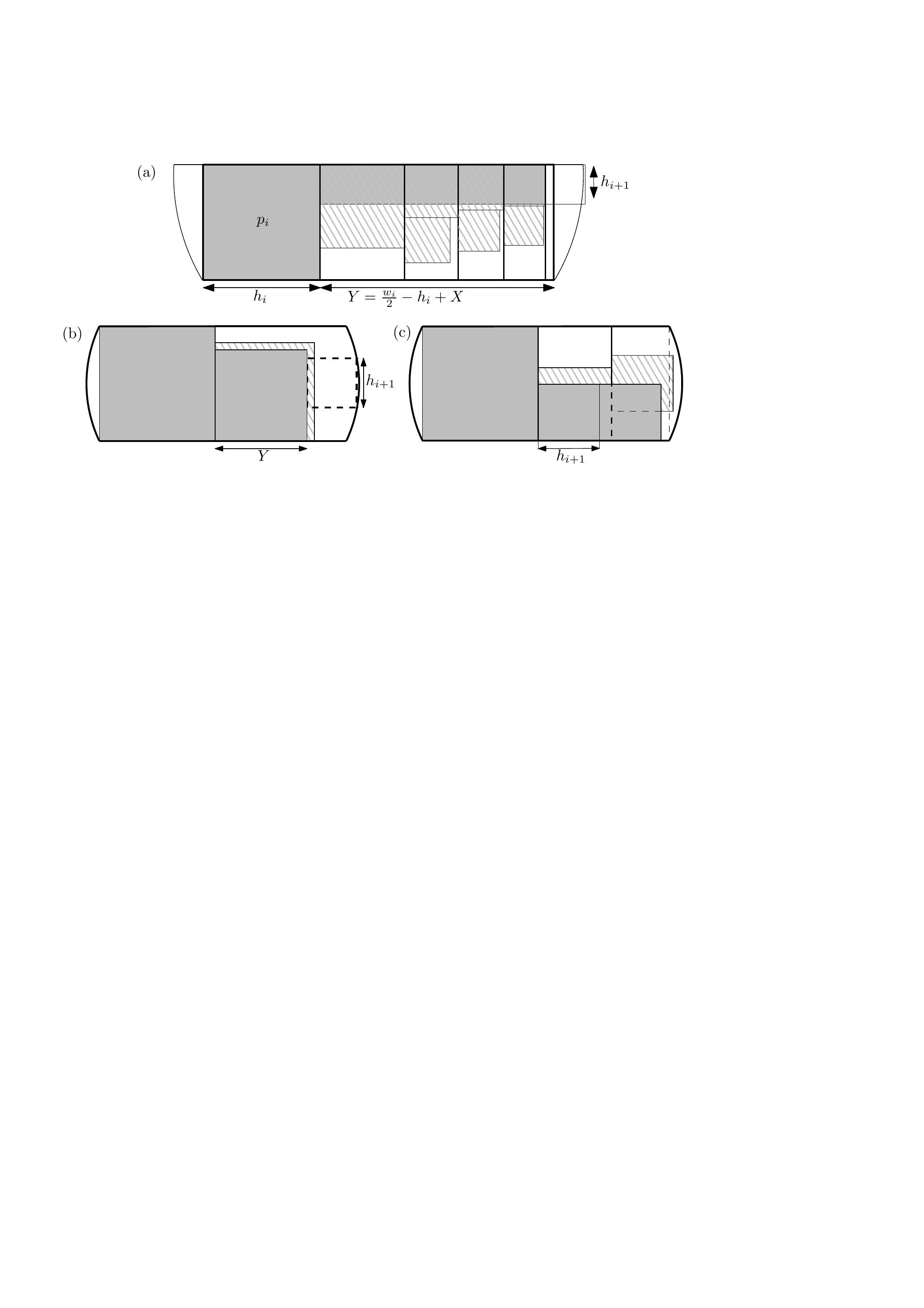}
	\caption{(a) Definition of $Y$.
		(b) The lower bound $B_3(a, h_i,w_i,h_{i+1})$ when two squares are packed into $C_i$.
		(c) The lower bound $B_3(a, h_i,w_i,h_{i+1})$ when at least three squares packed into~$C_i$.}
	\label{fig:lowerBoundA2}
\end{figure*}

\begin{restatable}{lemma}{AtwoBound}\label{lem:AtwoBound}
	For every sequence of squares $s_1,\ldots,s_n$ for which \algo constructs exactly $j \geq i$ subcontainers and fails to pack $h_{j+1} := s_n$,
	the area packed into~$C_i$ is at least
	\[ \| C_i \| \geq B_3(a, h_i, w_i, h_{i+1}) \coloneqq \max \begin{cases}
	h_i^2 + \max\{0,Y(a,h_i,w_i,h_{i+1})\} \cdot h_{i+1}, & \textup{(10.1)}{}\\ 
	h_i^2 + \min\left(\max^2(Y(a,h_i,w_i,h_{i+1}),0), 2h_{i+1}^2\right). & \textup{(10.2)}
	\end{cases}\]
\end{restatable}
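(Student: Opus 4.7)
The plan is a direct packing analysis. The first square $h_i$ always contributes $h_i^2$ to $\|C_i\|$, so it suffices to lower-bound the area packed after $h_i$ by $\max(Y,0)\cdot h_{i+1}$ for the first bound and by $\min(\max^2(Y,0),\,2h_{i+1}^2)$ for the second.

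I would first fix notation for the vertical shelves that \SCP opens in $C_i$ after $h_i$: call them shelves $1,\ldots,K$ with widths $w^*_1 \geq \cdots \geq w^*_K$, indexed in the order they are opened (which matches the sorted input order), and let $A_k$ denote the total area packed into shelf $k$. Two elementary observations then drive the argument. First, every square packed into $C_i$ after $h_i$ has side at least $h_{i+1}$ (since $h_{i+1}$ is the first square to fail), so each $w^*_k \geq h_{i+1}$; moreover, each shelf contains at least one square of side $w^*_k$, giving $A_k \geq (w^*_k)^2 \geq w^*_k \cdot h_{i+1}$. Second, because \algo fails to open a new shelf of width $h_{i+1}$ in $C_i$, the horizontal extent already consumed by the existing shelves must exceed the admissible slack for such a new shelf; unwinding the definition of $X(a,a-h_i,h_{i+1})$ shows that this slack is exactly $Y$, yielding $\sum_{k=1}^{K} w^*_k > Y$. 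The degenerate case $K=0$ then forces $Y<0$, so both bounds collapse to the trivial $h_i^2$.

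For the first bound (10.1), if $Y \leq 0$ the claim is immediate; otherwise
\[
\|C_i\| - h_i^2 \;\geq\; \sum_{k=1}^{K} A_k \;\geq\; h_{i+1}\sum_{k=1}^{K} w^*_k \;>\; Y\cdot h_{i+1}.
\]
For the second bound (10.2), I would split on the number of shelves: if $K = 1$, then $w^*_1 > Y$ gives $A_1 \geq (w^*_1)^2 > Y^2$; if $K \geq 2$, the first two shelves alone contribute $A_1 + A_2 \geq (w^*_1)^2 + (w^*_2)^2 \geq 2h_{i+1}^2$. In either subcase, the appropriate term of $\min\bigl(\max^2(Y,0),\,2h_{i+1}^2\bigr)$ is dominated.

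The main obstacle I expect is the clean justification of the shelf-sum inequality $\sum w^*_k > Y$. This requires carefully unpacking the definition of $Y$ to confirm that it equals the horizontal slack available for a new shelf of width $h_{i+1}$ immediately to the right of $h_i$, and then arguing from the failure of $h_{i+1}$ in $C_i$ that this slack must have been consumed by the $K$ existing shelves. One should additionally check that the alternative failure mode in which $h_{i+1}$ cannot stack atop an existing shelf only strengthens the hypothesis (it forces each shelf to carry total stacked height exceeding $h_i - h_{i+1}$) and therefore does not undermine the simpler shelf-width-based argument above.
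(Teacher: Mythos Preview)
Your proposal is correct and follows essentially the same approach as the paper: both arguments reduce to the observation that failure of $h_{i+1}$ to open a new shelf in $C_i$ forces the horizontal extent consumed after $h_i$ to exceed $Y$, and then bound the packed area accordingly. The only cosmetic difference is that for bound~(10.2) the paper splits on the total \emph{number of squares} in $C_i$ (exactly two versus at least three) whereas you split on the \emph{number of shelves} after $h_i$ (one versus at least two); both case splits yield the required $\min(\max^2(Y,0),2h_{i+1}^2)$ bound.
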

\begin{proof}
	If $Y \leq 0$, both bounds (10.1) and (10.2) simplify to $h_i^2$ and are valid, because $h_i$ is packed into~$C_i$.
	Thus, let us assume $Y > 0$.
	This implies that there are at least two squares packed into $C_i$; otherwise, $h_{i+1}$ would  fit into $C_i$.
	As $Y$ is a lower bound on the total width of squares packed into~$C_i$ after $h_i$ and $h_{i+1}$ is a bound on their height,
	we obtain bound~(10.1); see \cref{fig:lowerBoundA2}(a).
	
	Furthermore, if exactly two squares are packed into $C_i$, $Y^2$ can be used as lower bound on the area of the second square, see \cref{fig:lowerBoundA2}(b).
	Otherwise, at least three squares are packed into~$C_i$, and we can add $2h_{i+1}^2$ to $h_i^2$ to bound their area, see \cref{fig:lowerBoundA2}(c).
	Combining these two cases yields bound~(10.2).
\end{proof}
We combine these previous bounds into a general lower bound for $\| C_i \|$.
\begin{restatable}{corollary}{combined}\label{cor:combined}
	For every sequence of squares $s_1,\ldots,s_n$ for which \algo constructs exactly $j \geq i$ subcontainers and fails to pack $h_{j+1} := s_n$, 
	the area packed into~$C_i$ is bounded by
	\[\| C_i \| \geq B_4(a, h_i,w_i,h_{i+1}) := \max \begin{cases}
	B_2(h_i, w_i, h_{i+1}), & \text{(\cref{lem:Aeleven})}\\
	B_3(a, h_i, w_i, h_{i+1}). & \text{(\cref{lem:AtwoBound})}
	\end{cases}\]
\end{restatable}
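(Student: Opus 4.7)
My plan is to observe that \cref{cor:combined} is a pure combination of the two immediately preceding lemmas. The hypothesis ``\algo constructs exactly $j \geq i$ subcontainers and fails to pack $h_{j+1} := s_n$'' appearing in the corollary is verbatim the hypothesis of both \cref{lem:Aeleven} and \cref{lem:AtwoBound}, so both lemmas apply simultaneously to the sequence under consideration. No extra geometric argument is needed beyond what has already been carried out in those two lemmas.

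Given both lemmas at my disposal, I would invoke each one independently to obtain two valid lower bounds $\| C_i \| \geq B_2(h_i,w_i,h_{i+1})$ and $\| C_i \| \geq B_3(a,h_i,w_i,h_{i+1})$ on the same quantity. Since the pointwise maximum of two valid lower bounds for a single real quantity is itself a valid lower bound, I would then conclude $\| C_i \| \geq \max\{B_2, B_3\} =: B_4(a,h_i,w_i,h_{i+1})$, which is exactly the desired inequality. The entire argument fits in two or three lines and requires no further computation.

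I do not expect this step to present any real obstacle: the heavy lifting has already been done in the proofs of \cref{lem:Aeleven} and \cref{lem:AtwoBound}. The only reason to state the corollary separately is that the two constituent bounds are genuinely complementary, neither dominating the other. In particular, $B_2$ is essentially a combinatorial counting bound that certifies how many squares of side at most $h_i$ must lie side-by-side along the width $w_i$ of $R_i$ and does not exploit the circular boundary at all, whereas $B_3$ uses the function $Y$ to quantify the additional horizontal room a shelf of height $h_{i+1} \leq h_i$ gains from the curvature of $\mathcal{D}$ and can therefore be considerably stronger when $C_i$ lies near the top or bottom of the disk. Packaging both estimates into a single symbol $B_4$ allows the subsequent analysis of \algo to cite a uniform lower bound regardless of which of the two regimes dominates for a given subcontainer.
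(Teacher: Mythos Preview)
Your proposal is correct and matches the paper's approach exactly: the paper states \cref{cor:combined} without proof, treating it as an immediate consequence of \cref{lem:Aeleven} and \cref{lem:AtwoBound}, which share the same hypothesis and each give a lower bound on $\|C_i\|$. Your observation that the maximum of two valid lower bounds is again a valid lower bound is all that is needed, and your additional remarks on why neither $B_2$ nor $B_3$ dominates the other are accurate context but not required for the proof itself.
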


\subsubsection{Proof of \cref{cor:Aone}}\label{app:scp}
In this \new{sub}section, we provide the proof of \cref{cor:Aone}. 
\Aone*

\new{The proof consists of the three \cref{lem:shelfpackingnew,lem:anaFramesOfBasePacking,lem:anaFramesOfBasePackingTwo}, each establishing the validity of one of the three lower bounds combined in $B_1$.
	In the setting of \cref{cor:Aone}, we are dealing with a sequence of squares $s_1,\ldots,s_n$, for which \algo fails to pack $s_n$. 
	Furthermore, \algo constructs subcontainer $C_i$; thus, there must be a square $t_1 = h_i$ that is placed into $C_i$ (implying that $h_i\leq w_i$), and a subsequence $t_1,\ldots,t_{u},t_{u+1}$ of squares that \SCP tries to pack into $C_i$,
	where $t_u$ denotes the last square that is packed into $C_i$, and $t_{u+1}$ does not fit.}

\begin{restatable}{lemma}{shelfpackingnew}\label{lem:shelfpackingnew} 
	{Let $t_1 = h_i \leq w_i$. If $w_i\geq 2h_i$
		then the total area of the squares $t_1,\dots,t_u$ that \SCP packs into $C_i$ is  $\|C_i\|\geq \nicefrac{1}{2}\cdot h_iw_i + \nicefrac{1}{4}\cdot h_i^2$.}
\end{restatable}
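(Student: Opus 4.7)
The plan is to combine a standard Moon--Moser shelf-packing bound with an elementary extremal argument, after splitting into two cases according to the size of the failing square $t_{u+1}$. Because $t_1 = h_i$ equals the height of $C_i$, \SCP fills its first vertical shelf (of width exactly $h_i$) entirely with $t_1$. The remaining sub-rectangle has dimensions $h_i \times p$ with $p := w_i - h_i \geq h_i$, and \SCP feeds $t_2, \ldots, t_{u+1}$ into it by opening successive vertical shelves of widths $z_1 \geq z_2 \geq \cdots \geq z_k$, where $z_k = t_{u+1}$ is the failed shelf; overflow yields $\sum_{j=1}^k z_j > p$. It suffices to show that the packed area in the sub-rectangle is at least $\tfrac{1}{2} h_i p - \tfrac{1}{4} h_i^2$, so that adding $h_i^2$ from $t_1$ gives the target bound.

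\textbf{Case 1} ($t_{u+1} \leq h_i/2$). Since the largest input square $t_2 \leq h_i$ is at most the shorter side of the sub-rectangle and \SCP fails, \cref{lem:shelfpacking} applied to the sub-rectangle yields $t_2^2 + \cdots + t_{u+1}^2 > \tfrac{1}{2} h_i p$. Subtracting the failed square's area $t_{u+1}^2 \leq h_i^2/4$ delivers the required packed-area bound.

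\textbf{Case 2} ($t_{u+1} > h_i/2$). Now every $t_j$ with $j \geq 2$ exceeds $h_i/2$, so each vertical shelf holds exactly one square (two would stack past height $h_i$); the packed area equals $\sum_{j=1}^{k-1} z_j^2$. For $k = 2$ the decreasing order forces $z_1 > p/2$, hence $z_1^2 > p^2/4 \geq \tfrac{1}{2} h_i p - \tfrac{1}{4} h_i^2$ via $(p - h_i)^2 \geq 0$. For $k \geq 3$, a convexity argument shows that the minimum of $\sum_{j=1}^{k-1} z_j^2$ over decreasing $z_j \in (h_i/2, h_i]$ with $\sum_{j=1}^k z_j > p$ is attained at the equal-distribution $z_j = \max(p/k,\, h_i/2)$, yielding $(k-1)\max(p/k,\, h_i/2)^2$; checking the resulting quadratic inequalities in $p$ (each factoring into nonnegative linear terms) establishes the bound, with extremal configuration $z_j \to h_i/2$ and $k \to 2p/h_i$.

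The main obstacle is this Case~2 optimization for $k \geq 3$: natural single-step estimates---for example the bound $\sum_{j=1}^{k-1} z_j^2 \geq z_k(p - z_k)$ coming from $z_j \geq z_k$, or the pointwise $z_j^2 \geq h_i z_j - h_i^2/4$ derived from $(z_j - h_i/2)^2 \geq 0$---each lose $\Theta(h_i^2)$ in some sub-regime (the former when $p < \tfrac{3}{2} h_i$, the latter when $k$ is comparable to $2p/h_i$), and it is the equal-distribution argument that makes the additive term $-\tfrac{1}{4} h_i^2$ exactly tight.
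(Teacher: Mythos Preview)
Your Case~1 is identical to the paper's.  In Case~2 the two arguments diverge.  The paper also observes that every square after $t_1$ has side length at least $t_{u+1}=\tfrac{h_i}{2}+\delta$, but it exploits this via the single estimate $t_j^2\ge t_{u+1}\,t_j$ (not the equal--distribution/QM--AM route you take): summing gives $\|C_i\|\ge h_i^2+t_{u+1}(w_i-h_i-y)$ with $y:=w_i-\sum_{j\le u}t_j\le t_{u+1}$, and a short algebraic manipulation reduces the claim to $(w_i-h_i-y)\delta\ge \tfrac12 h_i(y-\tfrac12 h_i)$.  This is immediate when $w_i\ge 2h_i+\delta$; the narrow window $2h_i\le w_i<2h_i+\delta$ is handled by a direct two--square count.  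So the paper's Case~2 is essentially a one--line lower bound plus one exceptional sub-case, whereas your Case~2 is an extremal/optimization argument over the shelf widths that is uniform in $k$ and identifies the tight configuration explicitly.

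Your route is correct, but the sentence ``a convexity argument shows that the minimum \ldots\ is attained at the equal-distribution'' is doing real work and should be spelled out: the key step is that monotonicity forces $z_k\le z_{k-1}\le\frac{1}{k-1}\sum_{j<k}z_j$, whence $\sum_{j<k}z_j\ge\frac{k-1}{k}p$, and then QM--AM gives $\sum_{j<k}z_j^2\ge\frac{(k-1)p^2}{k^2}$; combined with the trivial $\sum_{j<k}z_j^2\ge(k-1)(h_i/2)^2$ this yields your $(k-1)\max(p/k,h_i/2)^2$.  With that in hand, your ``quadratic in $p$'' check factors as $\frac{k-1}{k^2}\big(p-\tfrac{kh_i}{2}\big)\big(p-\tfrac{kh_i}{2(k-1)}\big)\ge0$, both factors nonnegative in the relevant regime, so the verification is clean.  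Once this step is written out, your argument and the paper's are comparable in length; the paper's is slightly more direct, while yours makes the extremal configuration ($z_j\to h_i/2$, $k\to 2p/h_i$) transparent.
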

\begin{proof}
	If $t_{u+1} \leq \nicefrac{1}{2}\cdot h_i$, we consider the part~$R$ of the rectangle $R_i$ that remains after removing $t_1$.
	Because $t_{u+1}$ is not packed by \SCP into $C_i$, \cref{lem:shelfpacking,lem:betterShelfPacking} imply that the total area of
	$t_2, \dots, t_{u+1}$ is at least $\nicefrac{1}{2}\cdot(w_i-h_i)h_i$.
	Consequently, the total area of $t_1,\ldots,t_u$ is at least \[h_i^2 + \nicefrac{1}{2}(w_i-h_i)h_i - t_{u+1}^2 \geq \nicefrac{1}{2} \cdot h_iw_i+ \nicefrac{1}{2}\cdot h_i^2 - \nicefrac{1}{4}\cdot h_i^2 = \nicefrac{1}{2}\cdot h_iw_i+ \nicefrac{1}{4}\cdot h_i^2.\]
	
	Thus, we may restrict ourselves to $t_{u+1} =\nicefrac{1}{2}\cdot h_i + \delta$ for some $\delta > 0$; see \cref{fig:area-bound-overview}.
	\begin{figure}[htb]
		\centering
		\includegraphics{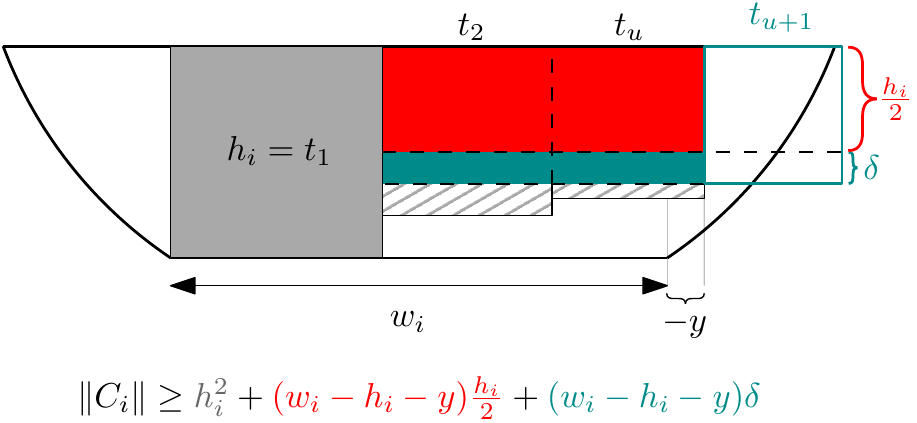}
		\caption{Our bound on the area $\|C_i\|$ in the remaining case.}
		\label{fig:area-bound-overview}
	\end{figure}
	Let $y \coloneqq w - t_1 - \ldots - t_u$; note that $y$ may be negative or positive, but we have $y \leq t_{u+1} = \nicefrac{1}{2}\cdot h_i + \delta$, as otherwise, we could have packed $t_{u+1}$.
	We can bound the area packed into $C_i$ by 
	\begin{align*}
	\|C_i\| &\geq h_i^2 + (w_i - h_i - y)\frac{h_i}{2} + (w_i - h_i - y)\delta\\
	&= \frac{1}{2}h_i^2 + \frac{1}{2}w_ih_i - \frac{1}{2}yh_i + (w_i - h_i - y)\delta\\
	&= \frac{1}{2} w_ih_i+ \frac{1}{4}h_i^2 + \underbrace{\frac{1}{4} h_i^2+ - \frac{1}{2}yh_i + (w_i - h_i - y)\delta}_{\text{(I)}}.
	\end{align*}
	
	Therefore, it suffices to prove that $\text{(I)} = \frac{1}{4}h_i^2  - \frac{1}{2}yh_i + (w_i - h_i - y)\delta \geq 0$, \new{which is equivalent to} $ (w_i - h_i - y) \delta \geq \frac{1}{2}h_i \left(y - \frac{1}{2}h_i \right)$.
	We observe that both factors on the left \new{hand} side and one factor on the right \new{hand} side are non-negative, and distinguish whether $w_i \geq 2 h_i + \delta$ or not.
	If $w_i \geq 2 h_i + \delta$, from $y \leq t_{u+1}$ we obtain $w_i - h_i - y \geq \frac{1}{2}h_i$ and $\delta \geq y - \frac{1}{2}h_i$, finishing this case.
	
	Finally, we have to handle the case $2h_i \leq w_i \leq 2h_i + \delta$, which implies that only two squares are packed into $C_i$, i.e., $u = 2$.
	The total area of $t_1,t_2$ is lower bounded by 
	\begin{align*}
	t_1^2 + t_2^2 &\geq h_i^2 + t_3^2 = h_i^2 + \left(\frac{1}{2}h_i + \delta \right)^2\\
	&= h_i^2 + \frac{1}{4}h_i^2 + h_i\delta + \delta^2 \geq \frac{1}{4}h_i^2 + h_i^2 + \frac{1}{2}h_i\delta\\
	&= \frac{1}{4}h_i^2 + \underbrace{(2h_i + \delta)}_{\geq w_i}\frac{1}{2}h_i \geq \frac{1}{2}h_iw_i + \frac{1}{4}h_i^2,
	\end{align*}
	concluding the proof.
\end{proof}

\new{Now, we establish the second lower bound used in \cref{cor:Aone}.}
\begin{restatable}{lemma}{anaFramesOfBasePacking}\label{lem:anaFramesOfBasePacking}
	\new{	For every sequence of squares $s_1,\ldots,s_n$ for which \algo constructs exactly $j$ subcontainers and fails to pack $s_n$,
		the total area of squares that \SCP packs into $C_i$ for all $i\leq j$  is $\|C_i\| \geq h_i^2 + h_{i+1}(w_i - h_i - h_{i+1})$,
		with $h_{j+1} \coloneqq s_n$.}
\end{restatable}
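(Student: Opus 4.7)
My plan is to prove the bound by directly analyzing the vertical column structure that \SCP builds inside $C_i$ and using the rejection of $h_{i+1}$ to force the columns to collectively exhaust nearly all of the horizontal width $w_i$. I use the subsequence $t_1,\ldots,t_u,t_{u+1}$ introduced in \cref{sec:analysis-subroutines}, with $t_1 = h_i$ and $t_{u+1} = h_{i+1}$ the first square rejected from $C_i$.

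First, I would dispose of the easy regime $w_i \leq h_i + h_{i+1}$, in which $h_{i+1}(w_i - h_i - h_{i+1}) \leq 0$ and the claim reduces to the trivial inequality $\|C_i\| \geq t_1^2 = h_i^2$. From this point on, I assume $w_i > h_i + h_{i+1}$.

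In the main case, I would enumerate the vertical columns $1,\ldots,k$ that \SCP opens in $C_i$, letting $c_j$ denote the side length of the first square placed in column $j$. The first column consists only of $t_1$ (which spans the full height of $C_i$), giving $c_1 = h_i$ and contributing area $h_i^2$. For every $j \geq 2$, the first square of column $j$ is some $t_m$ with $m \leq u$, and $t_m \geq t_{u+1} = h_{i+1}$ by sortedness, so $c_j \geq h_{i+1}$. Because \SCP fails to pack $h_{i+1}$, in particular no new column of width $h_{i+1}$ can be opened to the right of column $k$, which yields $c_1 + \cdots + c_k + h_{i+1} > w_i$ and hence $\sum_{j=2}^k c_j > w_i - h_i - h_{i+1}$. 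Using the per-column area lower bound $c_j^2 \geq c_j \cdot h_{i+1}$ and summing gives
\[
\|C_i\| \;\geq\; h_i^2 + \sum_{j=2}^k c_j^2 \;\geq\; h_i^2 + h_{i+1}\sum_{j=2}^k c_j \;>\; h_i^2 + h_{i+1}(w_i - h_i - h_{i+1}),
\]
which is the desired bound.

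The step that I expect to require most care is the horizontal-overflow inequality $\sum_{j=1}^k c_j + h_{i+1} > w_i$, because $C_i$ is partly bounded by circular arcs and columns shorter than $h_i$ could in principle exploit width beyond the straight maximal rectangle $R_i$. To keep the accounting unambiguous, I would fall back on \cref{lem:betterShelfPacking} and repeat the counting inside $R_i$ using horizontal \SP: since $t_1$ saturates the height of $R_i$, only one shelf is available, the squares are laid side by side within width $w_i$, and the failure of the first rejected square $\tau \geq h_{i+1}$ in that shelf yields $\sum_{j=2}^m t_j \geq w_i - h_i - \tau$ in a setting entirely free of curved boundaries. A brief case split on $\tau = h_{i+1}$ versus $\tau > h_{i+1}$ (in the latter case noting that \SCP additionally packs $\tau$ itself, so that one may invoke the identity $(\tau - h_{i+1})(w_i - h_i) + h_{i+1}^2 \geq 0$) then recovers the claimed bound without any circular-boundary bookkeeping.
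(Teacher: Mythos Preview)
Your proposal is correct and follows essentially the same approach as the paper: both argue that the first squares of the vertical shelves (your columns $c_1,\dots,c_k$) cover all but less than $h_{i+1}$ of the width~$w_i$, and then use $c_j\geq h_{i+1}$ to convert total width into total area.

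Your extra paragraph about circular boundaries is unnecessary. The key observation, which the paper leaves implicit and which makes your main argument already airtight, is that the first column is opened at the left edge of $R_i$ (because $t_1=h_i$ spans the full height of $C_i$ and so cannot sit outside $R_i$), and that if $\sum_j c_j + h_{i+1}\leq w_i$ then the new column for $h_{i+1}$ would lie entirely inside $R_i\subseteq C_i$, contradicting failure. So the overflow inequality $\sum_j c_j + h_{i+1} > w_i$ holds without any appeal to \cref{lem:betterShelfPacking}, and the fallback case split on $\tau$ can be dropped. (That said, your fallback is also correct: when $\tau>h_{i+1}$ one indeed gets $\|C_i\|\geq h_i^2+\tau(w_i-h_i)$, and the identity you cite is exactly what reduces this to the claimed bound.)
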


\begin{proof}
	The square that failed to be packed into $R_i$ has a side length of $h_{i+1}$, see \cref{fig:LowerBoundStrippacking}.
	W.l.o.g., we assume that we fill the vertical shelves in $R_i$ from the top.
	The width of the piece of the top of $R_i$ that is not covered by squares is less than $h_{i+1}$;
	otherwise, $h_{i+1}$ would have fit.
	Therefore, the total square area packed into $C_i$ after $h_i$ is at least $h_{i+1}(w_i - h_i - h_{i+1})$,
	because the height of all squares packed into $C_i$ is at least $h_{i+1}$, concluding the proof.
\end{proof}
\begin{figure}[htb]
	\centering
	\includegraphics{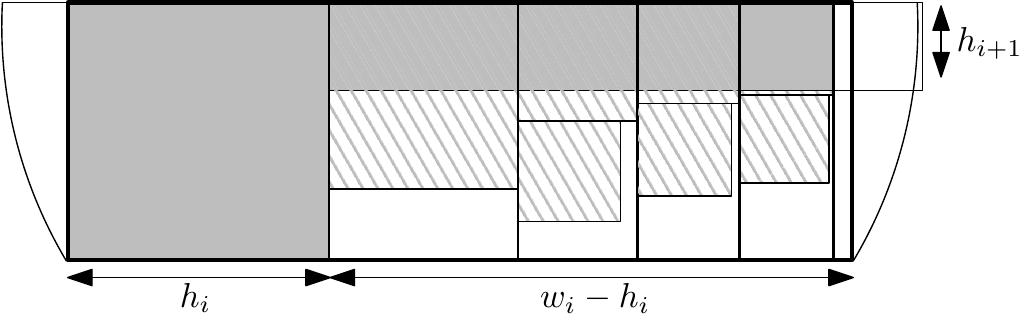}
	\caption{Illustration for the proof of \cref{lem:anaFramesOfBasePacking}:  A lower bound (gray) for the area (hatched) packed by \SCP into a subcontainer.}
	\label{fig:LowerBoundStrippacking}
\end{figure}

\new{Finally, we establish the third lower bound used in \cref{cor:Aone}.}

\begin{restatable}{lemma}{anaFramesOfBasePackingTwo}\label{lem:anaFramesOfBasePackingTwo}
	\new{For every sequence of squares $s_1,\ldots,s_n$ for which \algo constructs exactly $j$ subcontainers and fails to pack $s_n$, if $w_i \geq 2h_i$,
		the total area of squares that \SCP packs into $C_i$ for all $i\leq j$ is $\|C_i\| \geq \nicefrac{1}{2}\cdot h_i(w_i+h_i) - h_{i+1}^2$, with $h_{j+1} \coloneqq s_n$.}
\end{restatable}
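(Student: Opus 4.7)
The plan is to adapt and slightly extend the argument used in the first case of the proof of \cref{lem:shelfpackingnew}. Fix $i \leq j$ and let $t_1 \geq t_2 \geq \ldots \geq t_u$ denote the squares that \SCP packs into $C_i$, so that $t_1 = h_i$; let $t_{u+1}$ be the next square in the input sequence, which \SCP fails to pack into $C_i$. By the definition of \Slic, $t_{u+1}$ becomes the first square of $C_{i+1}$ when $i < j$, and $t_{u+1} = s_n$ when $i = j$; in either case $t_{u+1} = h_{i+1}$ by the convention $h_{j+1} \coloneqq s_n$.

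Next, I would consider the sub-rectangle $R_i' \subseteq R_i$ of dimensions $h_i \times (w_i - h_i)$ that remains after removing the first vertical shelf of width $h_i$ containing $t_1$. The hypothesis $w_i \geq 2h_i$ gives $w_i - h_i \geq h_i$, so the precondition $t_2 \leq h_i \leq w_i - h_i$ of \cref{lem:shelfpacking} is satisfied for the sequence $t_2, \ldots, t_{u+1}$ in $R_i'$. Since \SCP fails on $t_{u+1}$ in $C_i$, the combined use of \cref{lem:shelfpacking} and \cref{lem:betterShelfPacking}, applied exactly as in the first case of \cref{lem:shelfpackingnew}'s proof, forces
\[\sum_{k=2}^{u+1} t_k^2 > \nicefrac{1}{2}\cdot h_i(w_i - h_i),\]
because otherwise \SP would pack all of $t_2, \ldots, t_{u+1}$ into $R_i'$, and hence \SCP would pack them into $C_i$, contradicting the failure on $t_{u+1}$.

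Finally, substituting $t_1^2 = h_i^2$ and $t_{u+1}^2 = h_{i+1}^2$ yields
\[\|C_i\| = h_i^2 + \sum_{k=2}^u t_k^2 > h_i^2 + \nicefrac{1}{2}\cdot h_i(w_i - h_i) - h_{i+1}^2 = \nicefrac{1}{2}\cdot h_i(w_i + h_i) - h_{i+1}^2,\]
which is the claimed bound. In essence, this is the \emph{same} calculation as in the first case of \cref{lem:shelfpackingnew}'s proof, but without invoking the special assumption $t_{u+1} \leq \nicefrac{1}{2}\cdot h_i$ that was used there to bound $t_{u+1}^2$ by $\nicefrac{1}{4}\cdot h_i^2$; here we simply retain $t_{u+1}^2 = h_{i+1}^2$. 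The only potential obstacle is justifying the strict inequality $\sum_{k=2}^{u+1} t_k^2 > \nicefrac{1}{2}\cdot h_i(w_i - h_i)$, i.e., relating \SCP's behavior on the curved region $C_i$ to \SP's behavior on the rectangle $R_i'$; but this interplay is already encapsulated by \cref{lem:shelfpacking} together with \cref{lem:betterShelfPacking}, so no new difficulty arises beyond what the preceding lemmas in this subsection already handled.
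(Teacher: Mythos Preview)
Your proposal is correct and follows essentially the same argument as the paper's own proof: both remove the first shelf containing $t_1=h_i$, apply \cref{lem:shelfpacking} together with \cref{lem:betterShelfPacking} to the residual rectangle $R_i'$ of dimensions $h_i\times(w_i-h_i)$ to obtain $\sum_{k=2}^{u+1} t_k^2 > \nicefrac{1}{2}\,h_i(w_i-h_i)$, and then subtract $t_{u+1}^2=h_{i+1}^2$ and add back $h_i^2$. Your write-up is slightly more explicit (in particular about why $t_{u+1}=h_{i+1}$ in both cases $i<j$ and $i=j$), but the logic is identical.
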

\begin{proof}
	We consider the remainder $R_i'$ of rectangle $R_i$ after $h_i$ is packed.
	$R_i'$ has height $h_i$ and width $w_i-h_i$. \new{Because $w_i\geq2h_i$ we have that $w_i-h_i \geq h_i$ and hence, at least one square is packed into $R_i'$.}
	Consequently, \cref{lem:shelfpacking,lem:betterShelfPacking} imply that the total area packed by \SCP into $C_i$ is at least $\nicefrac{1}{2}\cdot h_i(w_i - h_i)-h_{i+1}^2$.
	Thus, we have $\|C_i\| \geq h_i^2 + \nicefrac{1}{2}\cdot h_i(w_i - h_i) -h_{i+1}^2= \nicefrac{1}{2}\cdot h_i (w_i +h_i) - h_{i+1}^2$. 
\end{proof}

\section{Analysis of the main algorithm}
\label{sec:analysis-algorithm}
In this section, we prove our main result using the tools provided in \cref{sec:interval-arithmetic-proofs,sec:analysis-subroutines}.
On the highest level, the proof consists of three parts corresponding to the three cases that our algorithm distinguishes.

\subsection{Analysis of \CI}
\label{sec:analysis-first-case}
Recall that in case \CI, we place a container square $\mathcal{X}$ of side length $1.388$ into $\mathcal{D}$,
and pack the first four squares into pockets outside $\mathcal{X}$ and all remaining disks into $\mathcal{X}$ using \SP; see \cref{fig:Alg}(a).

\begin{restatable}{lemma}{correctnessStepOne}\label{lem:correctnessStepOne}
	If \algo fails to pack a sequence of squares $s_1,\ldots,s_n$ with $s_1 \leq 0.295$, the total area of the squares exceeds $\nicefrac{8}{5}$.
\end{restatable}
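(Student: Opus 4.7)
The plan is to combine two simple facts: since $s_1 \leq 0.295$, each of the four largest squares fits into its assigned pocket $\mathcal{X}_i$ of side length $0.295$, so the failure must occur when \SP attempts to place some $s_j$ with $j \geq 5$ into the central container $\mathcal{X}$ of side length $L = 1.388$; and then the contrapositive of \cref{lem:shelfpackingtwo}, rescaled to $\mathcal{X}$, yields a lower bound on the square area routed to $\mathcal{X}$ at the moment of failure.

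Following the convention of \cref{sec:analysis-subroutines}, $s_n$ is the first square that the algorithm fails to place. By the observation above, $n \geq 5$ and $s_5,\ldots,s_n$ is the subsequence that \SP processes inside $\mathcal{X}$. I would rescale this subsequence by the factor $1/L$ so that it is to be packed into a unit square; the largest rescaled side length is $s_5/L \leq 0.295/1.388 < \nicefrac{1}{2}$, so \cref{lem:shelfpackingtwo} applies. Since \SP fails on the rescaled instance, the contrapositive yields
\[\sum_{i=5}^{n} s_i^2 \;>\; L^2\left(\nicefrac{1}{2} + 2\bigl(\nicefrac{s_5}{L} - \nicefrac{1}{2}\bigr)^2\right) \;=\; \nicefrac{L^2}{2} + 2\bigl(s_5 - \nicefrac{L}{2}\bigr)^2.\]

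Using $s_1 \geq s_2 \geq s_3 \geq s_4 \geq s_5$, I would add $4s_5^2$ as a lower bound for the area of the four pocket squares, obtaining
\[\sum_{i=1}^{n} s_i^2 \;>\; 4 s_5^2 + \nicefrac{L^2}{2} + 2\bigl(s_5 - \nicefrac{L}{2}\bigr)^2 \;=\; 6 s_5^2 - 2L s_5 + L^2.\]
Minimizing this quadratic in $s_5$ (its vertex $s_5 = L/6 \approx 0.231$ lies in the relevant interval $[0, 0.295]$) yields the minimum value $\nicefrac{5L^2}{6}$, which for $L = 1.388$ is strictly larger than $\nicefrac{8}{5}$, concluding the argument.

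I do not expect any real obstacle; the only delicate point is that the constant $L = 1.388$ is essentially forced: the final inequality $\nicefrac{5L^2}{6} > \nicefrac{8}{5}$ reduces to $L > \sqrt{\nicefrac{48}{25}} \approx 1.3856$, so there is very little slack, but the chosen $L$ meets it, and the rest is routine algebra.
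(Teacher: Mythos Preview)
Your proposal is correct and follows essentially the same argument as the paper: both apply \cref{lem:shelfpackingtwo} (after rescaling the container $\mathcal{X}$ to a unit square) to bound the area of $s_5,\ldots,s_n$, add the pocket contribution $4s_5^2$, and minimize the resulting quadratic $6s_5^2 - 2Ls_5 + L^2$ at $s_5 = L/6$ to obtain $\nicefrac{5L^2}{6} > \nicefrac{8}{5}$. The only cosmetic difference is that the paper carries out the whole computation in scaled coordinates, whereas you convert back before minimizing.
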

\begin{proof}
	Consider scaling down all side lengths by a factor of $\nicefrac{1}{1.388}$, such that $\mathcal X$ is the unit square and $s_1 \leq \nicefrac{0.295}{1.388} \approx 0.2125$.
	As $s_5$ is the first square packed by \SP into $\mathcal X$,
	\cref{lem:shelfpackingtwo} implies that the total area packed into the scaled $\mathcal{D}$ is at least
	$f(s_5) = 4 s_5^2 + \nicefrac{1}{2} + 2(s_5-\nicefrac{1}{2})^2$ with derivative $f'(s_5)= 12s_5 - 2,$
	which is minimized for $s_5 = \nicefrac{1}{6}$, where $f(\nicefrac{1}{6}) = \nicefrac{5}{6}$.
	Thus, in the non-scaled configuration, the area packed is at least
	$\nicefrac{5}{6}\cdot 1.388^2 = \nicefrac{120409}{75000} \approx 1.605 > \nicefrac{8}{5}$, concluding the proof.
\end{proof}

\subsection{Analysis of \CII}
\label{sec:analysis-second-case}
Recall that in case \CII, we pack the four largest squares into the squares  $\mathcal{X}_1,\dots, \mathcal{X}_4$;
all other squares are packed into a square container $\mathcal{X}$ on top of them; see \cref{fig:Alg}(b).
\begin{restatable}{lemma}{correctnessStepTwo}\label{lem:correctnessStepTwo}
	If \algo fails to pack a sequence $s_1,\ldots,s_n$ of squares with $0.295 < s_1 \leq \nicefrac{1}{\sqrt{2}}$ and $s_1^2 + s_2^2 + s_3^2 + s_4^2 \geq \nicefrac{39}{25}$, 
	the total area of the squares exceeds $\nicefrac{8}{5}$.
\end{restatable}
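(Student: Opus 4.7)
The plan is to adapt the indirect strategy used in the proof of \cref{lem:correctnessStepOne}, exploiting the fact that Case~\CII{} is tailored to fit the four largest squares into dedicated containers. The hypothesis $s_1 \leq \nicefrac{1}{\sqrt{2}}$ is exactly the side length of each pocket container $\mathcal{X}_i$, so all four squares $s_1,\ldots,s_4$ fit into $\mathcal{X}_1,\ldots,\mathcal{X}_4$ without incident. Consequently, the only way for \algo to fail on the given sequence is that \SP fails to pack some $s_j$ with $j \geq 5$ into the auxiliary square $\mathcal{X}$ of side length $\nicefrac{\sqrt{2}}{5}$; in particular, $s_n$ itself is such a square.

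From here, I would split into two subcases according to the size of~$s_5$. The main subcase is $s_5 \leq \nicefrac{\sqrt{2}}{5}$: then every square in the suffix $s_5,\ldots,s_n$ has side length at most the side length of~$\mathcal{X}$, so \cref{lem:shelfpacking} applies with $h = w = \nicefrac{\sqrt{2}}{5}$. Since \SP fails on this suffix, the contrapositive of \cref{lem:shelfpacking} yields $\sum_{j \geq 5} s_j^2 > \tfrac{1}{2}\bigl(\nicefrac{\sqrt{2}}{5}\bigr)^{2} = \nicefrac{1}{25}$, and combining this with the hypothesis $s_1^2 + s_2^2 + s_3^2 + s_4^2 \geq \nicefrac{39}{25}$ immediately gives a total area strictly above $\nicefrac{40}{25} = \nicefrac{8}{5}$.

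The complementary subcase $s_5 > \nicefrac{\sqrt{2}}{5}$ falls outside the hypothesis of \cref{lem:shelfpacking}, because $s_5$ is itself too large to fit into~$\mathcal{X}$; this is the one spot where the argument cannot route through shelf packing. Fortunately, the trivial bound $s_5^2 > \nicefrac{2}{25}$, together with the hypothesis, already yields
\[
\sum_{i=1}^{n} s_i^2 \;\geq\; s_1^2 + s_2^2 + s_3^2 + s_4^2 + s_5^2 \;>\; \nicefrac{39}{25} + \nicefrac{2}{25} \;=\; \nicefrac{41}{25} \;>\; \nicefrac{8}{5},
\]
so this subcase is in fact the easier one.

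I do not expect any real obstacle: the constants $\nicefrac{39}{25}$, $\nicefrac{\sqrt{2}}{5}$ and $\nicefrac{1}{\sqrt{2}}$ in the statement of the lemma and in the definition of Case~\CII{} are calibrated precisely so that the shelf-packing subcase saturates the $\nicefrac{8}{5}$ threshold, while the $s_5 > \nicefrac{\sqrt{2}}{5}$ subcase comes with a built-in slack of $\nicefrac{1}{25}$. No interval arithmetic or further geometric case analysis is required, and the write-up should be comparable in length to \cref{lem:correctnessStepOne}.
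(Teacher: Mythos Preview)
Your proposal is correct and follows essentially the same approach as the paper: bound the area of $s_1,\ldots,s_4$ by the hypothesis and the area of $s_5,\ldots,s_n$ by the \SP guarantee on~$\mathcal{X}$. You are in fact slightly more careful than the paper, which invokes \cref{lem:shelfpacking} without explicitly separating out the subcase $s_5 > \nicefrac{\sqrt{2}}{5}$ where that lemma's hypothesis $t_1 \leq h$ is not met; your trivial-area argument for that subcase fills the small gap cleanly.
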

\begin{proof}
	By assumption, the total area of the squares $s_1,s_2,s_3,s_4$ is at least $\nicefrac{39}{25} = \nicefrac{8}{5} - \nicefrac{1}{25}$.
	As $\mathcal{X}$ has an area of $\nicefrac{2}{25}$, \cref{lem:shelfpacking} implies that \SP (and thus \algo) only fails to pack all remaining squares into $\mathcal{X}$ if their area exceeds $\nicefrac{1}{25}$. Consequently, the total area of the squares exceeds $\nicefrac{8}{5}$, concluding the proof. 
\end{proof}

\subsection{Analysis of \CIII}
Recall that $z$ denotes the largest square that could be packed below the last subcontainer $C_k$ constructed by \BP,
as illustrated in \cref{fig:scp-overview} or in \cref{fig:formula-z}, where we have reflected the instance along the x-axis.
\begin{figure*}[b]
	\centering
	\includegraphics[scale=1]{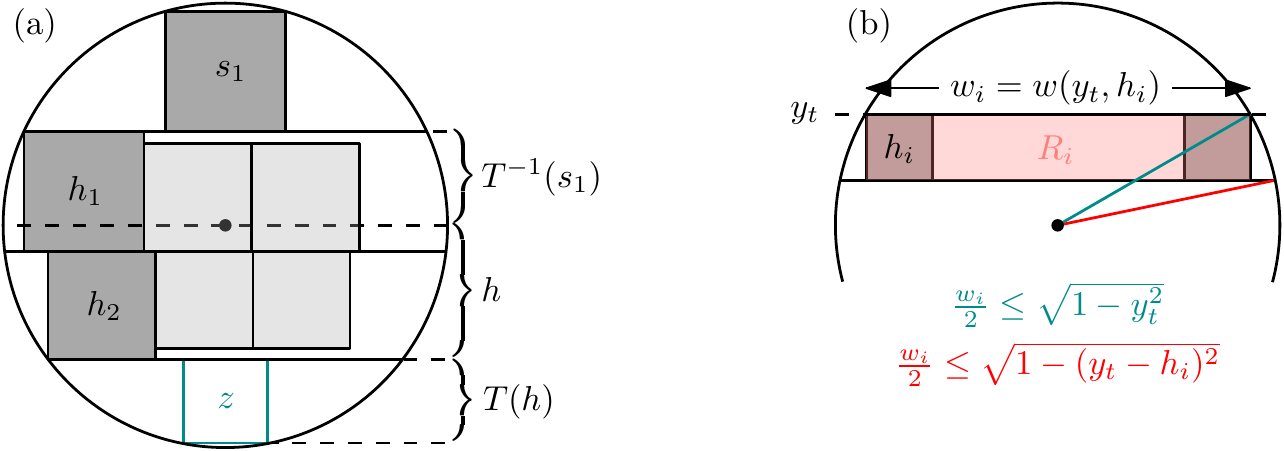}
	\caption{(a) Computing $z$ for $k = 2$ using functions $T$ and $T^{-1}$. (b) Computing the width $w_i = w(y_t,h)$ of rectangle $R_i$.}
	\label{fig:formula-z}
\end{figure*}
We consider a sequence $s_1,\ldots,s_n$ of squares of total area $S$ that \algo fails to pack, and assume w.l.o.g.\ that  $s_{n-1}$ is packed.
This implies $z < s_n$, where $z$ denotes the side length of the largest square that could be packed below the last subcontainer $C_k$ constructed by \BP;
otherwise, a further subcontainer is constructed.
We have $z = T(-T^{-1}(s_1) + \sum_{i=1}^{k} h_i)$; see \cref{fig:formula-z}(a).

By $w(y_t,h) = 2\sqrt{\min\{1 - y_t^2, 1 - (y_t-h)^2\}}$,
we denote the maximum width of a rectangle~$R$ that can be placed in $\mathcal{D}$ with top side at $y = y_t$ and height $h$; see \cref{fig:formula-z}(b).
Thus, we can express the width $w_i$ of the rectangle $R_i$ inscribed in some subcontainer $C_i$ in terms of $s_1,h_1,\ldots,h_i$ as
\[ w_i \coloneqq w\left(\bigg( T^{-1}(s_1)-\sum\limits_{j=1}^{i-1}h_j\bigg), h_i\right).\]

Recall that $\sigma \coloneqq \sigma(s_1)$ denotes the side length of the largest squares that fits into the pockets $\CL$ and $\CR$ as illustrated in \cref{fig:t-t1-sigmaApp}(c).
In order to distinguish whether \TP has packed any squares into the pockets, we consider the function
\[
E(s_1, s_n) \coloneqq 
\begin{cases}
0.83\cdot \sigma(s_1)^2,& \text{if } s_n \leq \sigma,\\
0,& \text{ otherwise,}
\end{cases}
\]
which describes the total square area that \TP is guaranteed to pack due to \cref{lem:anapocketpacking}.

For the analysis of Case~\CIII, we distinguish cases depending on the number $k$ of subcontainers constructed by \BP.
Specifically, we consider the cases $k=0,k=1,k\in\{2,3,4\}$, and $k \geq 5$.

\subsubsection{Analysis for no subcontainer}
\label{sec:NoRow}
\begin{restatable}{lemma}{anaNoStrip}\label{lem:anaNoStrip}
	If \algo fails to pack a sequence $s_1,\ldots,s_n$ of squares and \BP does not construct a subcontainer, the total area of the squares exceeds $\nicefrac{8}{5}$.
\end{restatable}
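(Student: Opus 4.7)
The plan is to bound the total area $S$ using only the contributions of $s_1$ and $s_n$, reducing the lemma to a clean inequality in one variable.

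First, I would use the hypothesis that no subcontainer is opened to conclude that the failing square $s_n$ cannot even be placed as the first square of a fresh subcontainer below $s_1$; otherwise \BP would construct such a subcontainer of height $s_n$ to hold $s_n$. This forces $s_n > z(s_1)$, where $z(s_1) = T(-T^{-1}(s_1))$ is the side length of the largest square that fits below the top part when $k = 0$ (the formula of \cref{fig:formula-z}(a) with empty sum). Dropping the contributions of $s_2,\ldots,s_{n-1}$ entirely already gives $S \geq s_1^2 + s_n^2 > s_1^2 + z(s_1)^2$.

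Second, I would reduce the lemma to verifying $s_1^2 + z(s_1)^2 \geq \nicefrac{8}{5}$ for all relevant $s_1$. Parameterizing by $u = T^{-1}(s_1)$, so that $s_1 = T(u)$ and $z = T(-u)$, the explicit formula $T(u) = \nicefrac{2}{5}\cdot(\sqrt{5-u^2} - 2u)$ makes $T(u)^2 + T(-u)^2$ simplify very nicely: the cross terms cancel, leaving an expression of the form $\nicefrac{8}{5} + \lambda u^2$ with $\lambda > 0$, which is $\geq \nicefrac{8}{5}$ with equality precisely at $u = 0$, i.e., $s_1 = \nicefrac{2}{\sqrt{5}}$. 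This matches the critical two-square instance of \cref{fig:overview}(d), so the bound is tight; the strict inequality $s_n > z$ is exactly what upgrades $\geq$ to the required strict $S > \nicefrac{8}{5}$.

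I expect the main obstacle to be recognizing that for $k = 0$ one does \emph{not} need the full strength of \cref{lem:anapocketpacking} applied to the top pockets: $s_1$ and $s_n$ alone already push the total area beyond $\nicefrac{8}{5}$, because the extremal configuration is exactly two equal squares stacked vertically along the disk's vertical diameter. Should the algebraic simplification of $T(u)^2 + T(-u)^2$ not be spotted, the inequality $s_1^2 + z(s_1)^2 \geq \nicefrac{8}{5}$ can instead be discharged by the interval-arithmetic framework of \cref{sec:interval-arithmetic-proofs} on the bounded range $s_1 \in (0.295, \sqrt{\nicefrac{8}{5}}]$ (values $s_1 > \sqrt{\nicefrac{8}{5}}$ trivially satisfy $S \geq s_1^2 > \nicefrac{8}{5}$).
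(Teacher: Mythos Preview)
Your proposal is correct and follows essentially the same approach as the paper: both argue that since no subcontainer is opened, $s_n$ cannot fit below $s_1$, and then bound the total area from below by $s_1^2 + s_n^2$, which already exceeds $\nicefrac{8}{5}$. The only difference is presentational: the paper asserts in one sentence that the minimum of $s_1^2 + s_n^2$ over such overlapping pairs occurs at $s_1 = s_n$ (the critical two-square configuration), whereas you make this explicit via the parameterization $u = T^{-1}(s_1)$ and the clean identity $T(u)^2 + T(-u)^2 = \nicefrac{8}{5} + \nicefrac{24}{25}\,u^2$, which is a nice self-contained verification.
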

\begin{proof}
	Because the algorithm fails to construct a first subcontainer in the bottom part, it follows that placing $s_n$ as far to the bottom as possible yields an overlap with $s_1$. 
	However, the minimum value for $s_1^2 + s_n^2$ for two overlapping squares packed into a disk is attained for $s_1 = s_n$.
	This corresponds to the worst-case configuration, implying that the total area of $s_1$ and $s_n$ exceeds $\nicefrac{8}{5}$.
\end{proof}

\subsubsection{Analysis for one subcontainer}
\begin{restatable}{lemma}{anaOneRow}\label{lem:anaOneRow}
	If \algo fails to pack a sequence $s_1,\ldots,s_n$ of squares and \BP constructs exactly one subcontainer, the total area of the squares exceeds $\nicefrac{8}{5}$.
\end{restatable}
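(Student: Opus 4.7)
The plan is to reduce the statement to an explicit inequality in three variables $(s_1, h_1, s_n)$ and dispatch it using the interval-arithmetic framework of \cref{sec:interval-arithmetic-proofs}, after first applying the modular area bounds of \cref{sec:analysis-subroutines} to account for every square that was actually placed. Since we are in Case~\CIII, we have $s_1 > 0.295$; if $s_1^2 \geq \nicefrac{8}{5}$ we are immediately done, so we may assume $s_1 \leq \sqrt{\nicefrac{8}{5}}$. The topmost square has its bottom side at $y = T^{-1}(s_1)$, and, since \BP constructs exactly one subcontainer $C_1$, we let $h_1$ denote its height (the first square placed into it) and $w_1 = w(T^{-1}(s_1), h_1)$ its width. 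Because \algo fails at $s_n$, the square $s_n$ cannot be placed into the pockets by \TP, does not fit into $C_1$, and cannot even be accommodated as a freshly opened subcontainer below $C_1$; the last condition is precisely $z = T(h_1 - T^{-1}(s_1)) < s_n$. We also have $s_n \leq h_1 \leq s_1$.

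Splitting the total area $S$ of $s_1,\ldots,s_n$ into the topmost square, the guaranteed pocket contribution, the contribution packed into $C_1$, and $s_n$ itself, \cref{lem:anapocketpacking} and \cref{cor:combined} yield the lower bound
\[
S \;\geq\; s_1^2 + s_n^2 + E(s_1, s_n) + B_4\!\left(T^{-1}(s_1),\,h_1,\,w_1,\,s_n\right).
\]
It therefore suffices to prove that the right-hand side strictly exceeds $\nicefrac{8}{5}$ on the feasible domain just described. The right-hand side is a piecewise closed-form function involving only arithmetic and square roots, so this is exactly the kind of statement the framework of \cref{sec:interval-arithmetic-proofs} was designed for: one feeds in the target inequality together with the constraints $0.295 < s_1 \leq \sqrt{\nicefrac{8}{5}}$, $s_n \leq h_1 \leq s_1$, and $T(h_1 - T^{-1}(s_1)) < s_n$, after splitting a priori on the branches in the definitions of $\sigma$, $E$, $B_2$, and $B_3$ so that each cuboid deals with a single algebraic expression.

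The conceptual work has already been done by the subroutine lemmas, so the remaining difficulty is combinatorial and numerical. The functions $\sigma(s_1)$, $E(s_1,s_n)$, $B_2$, $B_3$, and the width function $w$ are all piecewise, and the coupling constraint $z < s_n$ carves a non-rectangular slice out of the $(s_1, h_1, s_n)$-cube, so a naïve box enumeration would risk both spurious counterexamples and wasted work on infeasible regions. I expect the tightest margins to occur where $s_1$ is just above $\nicefrac{1}{\sqrt{2}}$ (near the boundary with Case~\CII) and $h_1 \approx s_n$ sits right at the $z$-boundary; the adaptive refinement tree of \cref{sec:interval-arithmetic-proofs} is precisely what is needed to handle those narrow pockets of parameter space and turn the inequality above into a rigorous proof.
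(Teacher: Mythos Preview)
Your setup mirrors the paper's almost exactly: the same lower bound
\[
S \;\geq\; s_1^2 + s_n^2 + E(s_1,s_n) + B_4\!\left(T^{-1}(s_1),\,h_1,\,w_1,\,s_n\right)
\]
and the same reduction to an interval-arithmetic verification over $(s_1,h_1,s_n)$ subject to $0.295 \le s_1 \le \sqrt{\nicefrac{8}{5}}$, $s_n \le h_1 \le s_1$, and $0 < z < s_n$.

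There is, however, one ingredient in the paper's proof that your plan omits. Being in Case~\CIII is not only the source of the lower bound $s_1 > 0.295$; it also means we are \emph{not} in Case~\CII. The paper exploits this: if $s_1 \le \nicefrac{1}{\sqrt 2}$ and $w_1 \ge 2h_1$, then at least two squares are packed into $C_1$, whence $s_2 \ge h_1$ and $s_3,s_4 \ge s_n$; since \CII does not apply, one gets $s_1^2 + h_1^2 + 2s_n^2 \le s_1^2+s_2^2+s_3^2+s_4^2 < \nicefrac{39}{25}$. The paper therefore feeds the disjunction
\[
\bigl(s_1 > \nicefrac{1}{\sqrt 2}\bigr)\ \vee\ \bigl(w_1 < 2h_1\bigr)\ \vee\ \bigl(s_1^2 + h_1^2 + 2s_n^2 < \nicefrac{39}{25}\bigr)
\]
into its interval-arithmetic lemma (Lemma~\ref{lem:summaryInterval2}) as an additional hypothesis. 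Your feasible region includes points violating this disjunction---phantom configurations that the algorithm would never reach because it would have branched into \CII---and the paper's certified verification does not cover them. So either you must add this constraint (and the short combinatorial argument justifying it), or you must separately verify that the stronger inequality holds on your larger domain; the paper does neither for you.
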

\begin{proof}
	Combining \cref{lem:anapocketpacking,cor:combined} allows us to bound the area of $s_1,\ldots,s_n$ by
	\[ S \geq F_{SC_1}(s_1,h_1,s_n) \coloneqq s_1^2 + B_{4}(T^{-1}(s_1), h_1, w_1, s_n) + s_n^2 + E(s_1, s_n),\]
	where $
	E(s_1, s_n) =
	0.83\cdot \sigma(s_1)^2 $ if  $s_n \leq \sigma$, and 
	$E(s_1, s_n)=0$ if  $s_n > \sigma$.
	Furthermore, we know $0 < z < s_n$, because \algo fails to pack $s_n$.
	Moreover, we claim that at least one of the following conditions must hold:
	$s_1 > \nicefrac{1}{\sqrt{2}}$, $w_1 < 2h_1$, or $s_1^2 + h_1^2 + 2s_n^2 < \nicefrac{39}{25}$.
	Assume for contradiction that neither of these conditions hold.
	By $w_1 \geq 2h_1$, we know that at least two squares are packed into the first subcontainer.
	One of these squares has area $h_1^2$, and the other has area at least $s_n^2$.
	In particular, this implies that the algorithm packs $s_1,s_2$ and $s_3$.
	This implies $s_2 \geq h_1$ and $s_3,s_4 \geq s_n$.
	Thus we have $s_1^2 + s_2^2 + s_3^2 + s_4^2 \geq s_1^2 + h_1^2 + 2s_n^2 \geq \nicefrac{39}{25}$,
	which together with $s_1 \leq \nicefrac{1}{\sqrt{2}}$ implies that we are in Case~\CII of our algorithm.
	This is a contradiction, because we only construct subcontainers in Case~\CIII.
	Thus the following lemma, proved automatically using interval arithmetic, proves that these conditions are sufficient to ensure $S \geq \nicefrac{8}{5}$.
	\begin{lemma}[\textsc{One Subcontainer}, Automatic Analysis for Lemma~\ref{lem:anaOneRow}]\label{lem:summaryInterval2}
		Let $z \coloneqq T(T^{-1}(s_1) + h_1)$.
		For all $s_1, h_1, s_n$ with $0 < z < s_n \leq h_1 \leq s_1$, $h_1 \leq T^{-1}(s_1) + 1$ and
		\[\left(s_1 > \nicefrac{1}{\sqrt{2}}\right) \vee \left(w_1 < 2h_1\right) \vee \left(s_1^2 + h_1^2 + 2s_n^2 < \nicefrac{39}{25}\right),\]
		we have $F_{SC_1}(s_1,h_1,s_n) > \nicefrac{8}{5}$.\qedhere
	\end{lemma}
\end{proof}

\subsubsection{Analysis for two to four subcontainers}
\begin{restatable}{lemma}{anaTwoThreeFourRows}\label{lem:anaTwoThreeFourRows}
	If \algo fails to pack a sequence $s_1,\ldots,s_n$ of squares and \BP constructs $k \in \{2,3,4\}$ subcontainers, the total area of the squares exceeds $\nicefrac{8}{5}$.
\end{restatable}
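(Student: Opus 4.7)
The plan is to mimic the structure of the proof of \cref{lem:anaOneRow}, but with one term per constructed subcontainer. Specifically, combining \cref{lem:anapocketpacking} with \cref{cor:combined} applied to every subcontainer $C_i$, $1 \leq i \leq k$, yields the lower bound
\[ S \;\geq\; F_{SC_k}(s_1,h_1,\ldots,h_k,s_n) \;\coloneqq\; s_1^2 + \sum_{i=1}^{k} B_{4}\bigl(a_i, h_i, w_i, h_{i+1}\bigr) + s_n^2 + E(s_1, s_n), \]
where $h_{k+1} \coloneqq s_n$, $a_i = T^{-1}(s_1) - \sum_{j<i}h_j$ is the $y$-coordinate of the top side of $C_i$, and $w_i = w(a_i, h_i)$ is the width of the inscribed rectangle~$R_i$ as defined in \cref{sec:analysis-second-case}. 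The extra $s_n^2$ term accounts for the square that could not be packed, and $E(s_1,s_n)$ captures whatever \TP guarantees in the pockets.

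The relevant parameter space is governed by the following constraints: the monotonicity $s_1 \geq h_1 \geq h_2 \geq \cdots \geq h_k \geq s_n$ (squares are considered in nonincreasing order, and the square opening $C_{i+1}$ is smaller than $h_i$); the failure condition $0 < z < s_n$, where $z = T(-T^{-1}(s_1) + \sum_{i=1}^{k}h_i)$; and the geometric feasibility that $h_i \leq w_i$ and $\sum_{i=1}^{k} h_i \leq T^{-1}(s_1) + 1$. I would also add the case-distinction exclusions to guarantee that we are indeed in Case \CIII: $s_1 > 0.295$ (else Case~\CI), and either $s_1 > \nicefrac{1}{\sqrt{2}}$ or $s_1^2 + s_2^2 + s_3^2 + s_4^2 < \nicefrac{39}{25}$ (else Case~\CII). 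For $k \geq 2$, we always have at least the four squares $s_1, h_1, h_2, s_n$ so we can in particular bound $s_2^2 + s_3^2 + s_4^2 \geq h_1^2 + h_2^2 + s_n^2$, giving us the usable exclusion $s_1 > \nicefrac{1}{\sqrt{2}}$ or $s_1^2 + h_1^2 + h_2^2 + s_n^2 < \nicefrac{39}{25}$.

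With this setup, I would discharge each of the three cases $k=2$, $k=3$, $k=4$ by a dedicated interval-arithmetic lemma analogous to \cref{lem:summaryInterval2}. For $k=2,3,4$ the number of free variables is $4,5,6$ respectively, well within the stated capability of the automated proof engine (up to $9$ variables). Each lemma asserts that under the constraints above, $F_{SC_k} > \nicefrac{8}{5}$; the automated subdivision over the bounded cuboid in $(s_1, h_1, \ldots, h_k, s_n)$-space then eliminates every point of the parameter space by exhibiting a violated constraint in each subbox.

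The main obstacle I anticipate is the growth of the parameter space and the resulting pruning effort, especially for $k=4$ with six variables and the nonpolynomial expressions hidden in $T$, $T^{-1}$, $w$, and $\sigma$. The monotonicity chain $s_1 \geq h_1 \geq \cdots \geq h_k \geq s_n$ and the tight coupling $z < s_n$ should, however, significantly shrink the relevant region, and the tree-structured subdivision described in \cref{sec:interval-arithmetic-proofs} can exploit these early-pruning variables. A secondary concern is that, to keep the interval-arithmetic bound sharp enough near the tight configuration, I may need to split the bound $B_4$ into its separate subcases (the three lines of $B_2$ and the two lines of $B_3$) and verify the enclosure piecewise, since taking the $\max$ inside the interval evaluation can lose precision. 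Once the three interval-arithmetic lemmas are established, the overall proof of \cref{lem:anaTwoThreeFourRows} is immediate.
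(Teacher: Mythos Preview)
Your proposal is correct and matches the paper's proof almost exactly: the same lower bound $F_{SC_k}$, the same constraints, and the same discharge via three interval-arithmetic lemmas for $k=2,3,4$. The one difference is that you add the Case~\CII exclusion as an extra constraint, whereas the paper's automatic lemma for $k \geq 2$ succeeds without it (that exclusion was needed only for $k=1$); your extra constraint is valid but unnecessary.
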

\begin{proof}
	We use similar ideas as in the proof of \cref{lem:anaOneRow}.
	We bound the area packed by \TP by $E(s_1, s_n)$ using \cref{lem:anapocketpacking}.
	Furthermore, we use \cref{cor:combined} to bound the area packed into each of the $k \in \{2,3,4\}$ subcontainers by
	\[ \| C_i \| \geq B_4\left(\bigg(T^{-1}(s_1) - \sum\limits_{j=1}^{i-1} h_i\bigg), h_i, w_i, h_{i+1}\right), 1 \leq i \leq k,\]
	where $h_{k+1} \coloneqq s_n$.
	We can express $w_i = w(T^{-1}(s_1) - \sum_{j=1}^{i-1}h_j, h_i)$ in terms of $s_1$ and $h_j, 1 \leq j \leq i$.
	In total, for $k$ subcontainers, this yields the bound
	\[ S \geq F_{SC_k}(s_1,h_1,\ldots,h_{k},s_n) \coloneqq s_1^2 + s_n^2 + E(s_1, s_n) + \sum\limits_{i=1}^{k} B_4\left(\bigg(T^{-1}(s_1) - \sum\limits_{j=1}^{i-1} h_i\bigg), h_i, w_i, h_{i+1}\right).\]
	Finally, we know $0 < z < s_n$ because the algorithm fails to pack $s_n$.
	Thus, the following lemma, proved automatically using interval arithmetic, suffices to complete the proof of \cref{lem:anaTwoThreeFourRows}.
	\begin{lemma}[Automatic Analysis for Lemma~\ref{lem:anaTwoThreeFourRows}]
		Let $z_k = T(-T^{-1}(s_1) + \sum_{i=1}^{k}h_i)$.
		\begin{description}
			\item[$(k = 2)$] For all $s_1,h_1,h_2,s_n$ with $0 < z_2 < s_n \leq h_2 \leq h_1 \leq s_1$ and $0.295 \leq s_1 \leq \sqrt{\nicefrac{8}{5}}$ and $h_1 + h_2 \leq 1 + T^{-1}(s_1)$,
			we have $F_{SC_2} > \nicefrac{8}{5}$.
			\item[$(k = 3)$] For all $s_1,h_1,h_2,h_3,s_n$ with $0 < z_3 < s_n \leq h_3 \leq h_2 \leq h_1 \leq s_1$
			and $0.295 \leq s_1 \leq \sqrt{\nicefrac{8}{5}}$ and $h_1 + h_2 + h_3 \leq 1 + T^{-1}(s_1)$,
			we have $F_{SC_3} > \nicefrac{8}{5}$.
			\item[$(k = 4)$] For all $s_1,h_1,h_2,h_3,h_4,s_n$ with $0 < z_4 < s_n \leq h_4 \leq h_3 \leq h_2 \leq h_1 \leq s_1$
			and $0.295 \leq s_1 \leq \sqrt{\nicefrac{8}{5}}$ and $h_1 + h_2 + h_3 + h_4 \leq 1 + T^{-1}(s_1)$,
			we have $F_{SC_4} > \nicefrac{8}{5}$.\qedhere
		\end{description}
	\end{lemma}
\end{proof}

We defer the analysis for five or more subcontainers to \cref{sec:details-analysis-algorithm}.
This completes the analysis of \CIII and thus the proof of our main result.

\subsubsection{Analysis for five or more subcontainers}
\label{sec:details-analysis-algorithm}
In this section, we handle the case where \algo constructs at least five subcontainers.
\begin{restatable}{lemma}{anaFiveRows}\label{lem:anaFiveRows}
	If \algo fails to pack a sequence $s_1,\ldots,s_n$ of squares and \BP constructs $k \geq 5$ subcontainers, the total area of the squares exceeds $\nicefrac{8}{5}$.
\end{restatable}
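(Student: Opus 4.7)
The plan is to reduce the analysis of $k \geq 5$ to a bounded-dimensional interval-arithmetic check in the spirit of \cref{lem:anaTwoThreeFourRows}, by aggregating the contribution of the ``tail'' subcontainers $C_5, \ldots, C_k$ into a single density-based estimate. A direct extension of the earlier lemmas is infeasible, because the number of variables fed into the interval-arithmetic engine would grow with $k$.

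First, I would exploit the sortedness $h_1 \geq h_2 \geq \cdots \geq h_k$ and the bound $\sum_{i=1}^{k} h_i \leq 1 + T^{-1}(s_1) \leq 2$ to conclude that $h_i \leq h_5 \leq \nicefrac{2}{5}$ for all $i \geq 5$. Combined with the geometry of $\mathcal D$ and the fact that $R_i$ lies entirely inside the disk, this implies that the tail subcontainers are thin enough to guarantee $w_i \geq 2 h_i$. Hence \cref{cor:Aone} applies to each $C_i$, $i \geq 5$, yielding in particular $\|C_i\| \geq \nicefrac{1}{2}\cdot h_i w_i + \nicefrac{1}{4}\cdot h_i^2 \geq \nicefrac{1}{2}\cdot |R_i|$.

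Second, I would give a purely geometric lower bound on $\sum_{i \geq 5} |R_i|$. The rectangles $R_5, \ldots, R_k$ tile a horizontal slab of $\mathcal D$ that reaches from height $T^{-1}(s_1) - \sum_{i=1}^{4} h_i$ down to $-\sqrt{1 - z^2}$ (using $z < s_n$). Its total area inside $\mathcal D$ is a closed-form expression in $s_1, h_1, h_2, h_3, h_4$ and $s_n$, and the ``slivers'' inside $\mathcal D$ but outside any $R_i$ can be bounded above by a telescoping sum of $h_i \cdot h_{i+1}$ terms, which collapses to a controlled expression in the six head variables. Together with the density-$\nicefrac{1}{2}$ estimate from the previous paragraph, this yields $\sum_{i \geq 5} \|C_i\| \geq L_{\mathrm{tail}}(s_1, h_1, h_2, h_3, h_4, s_n)$ for an explicit function $L_{\mathrm{tail}}$.

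Adding the standard contributions $s_1^2$, $E(s_1, s_n)$ from \cref{lem:anapocketpacking}, the bounds $B_4$ on $\|C_1\|, \ldots, \|C_4\|$ from \cref{cor:combined}, the term $s_n^2$, and $L_{\mathrm{tail}}$, I obtain a closed-form lower bound in only the seven variables $s_1, h_1, h_2, h_3, h_4, s_n$ (together with $z$, which is itself a function of the preceding ones and of $\sum_{i \geq 5} h_i$). An interval-arithmetic lemma analogous to the automatic analyses of $k \leq 4$, with constraints $0 < z < s_n \leq h_4 \leq h_3 \leq h_2 \leq h_1 \leq s_1$, $0.295 \leq s_1 \leq \sqrt{\nicefrac{8}{5}}$, and $\sum_{i=1}^{4} h_i \leq 1 + T^{-1}(s_1)$, then closes the case. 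The main obstacle will be the geometric bookkeeping in the tail: the slivers lost to the curved boundary of $\mathcal D$ must be bounded tightly enough that a density of $\nicefrac{1}{2}$ on the tail still suffices to push the total strictly past $\nicefrac{8}{5}$. Should that estimate turn out to be too loose, a plausible sharpening is to retain the extra $\nicefrac{1}{4}\cdot h_i^2$ terms from $B_1$ (or use the second/third branch of $B_1$ on the first few tail subcontainers), which gives additional slack without increasing the number of variables.
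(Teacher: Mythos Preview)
Your high-level strategy---aggregate the tail subcontainers into a single closed-form bound and feed a bounded number of head variables to interval arithmetic---is exactly what the paper does. But the execution you sketch has a real gap, and the paper's route differs in two important respects.

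The gap: your claim that $h_5 \leq \nicefrac{2}{5}$ forces $w_i \geq 2h_i$ for all $i \geq 5$ is false. For a subcontainer $C_i$ whose lower edge sits near the bottom of $\mathcal D$, the inscribed rectangle $R_i$ can be much narrower than $2h_i$; the only thing guaranteed by ``the first square $h_i$ fits in $C_i$'' is $w_i \geq h_i$. (Concretely: take $h_i = 0.3$ with bottom side at $y = -\sqrt{1-0.15^2}$; then $w_i = 0.3 = h_i$.) So you cannot invoke the first branch of $B_1$ on the last containers, and your density-$\nicefrac{1}{2}$ tail estimate breaks down precisely where the curved-boundary slivers are worst. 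Your vague ``telescoping sum of $h_i h_{i+1}$ terms'' for the slivers is the right instinct, but you have not shown it collapses to something in the head variables alone; in particular the total tail height $\sum_{i\geq 5} h_i$ does not drop out of your description, and you acknowledge that $z$ still depends on it.

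The paper fixes both issues. First, it splits on $s_n > \sigma$ versus $s_n \leq \sigma$. In the former case $\sigma > 0.231$ forces $k \leq 7$, so $k=5,6,7$ are handled directly, exactly like $k\leq 4$; no aggregation is needed and $E(s_1,s_n)=0$. In the latter case, the aggregation lemma (\cref{lem:totalAreaPackedByBasePacking}) uses the \emph{second} branch of $B_1$, namely $\|C_i\| \geq h_i^2 + h_{i+1}(w_i - h_i - h_{i+1})$, which needs no width hypothesis. Summing over $i=j,\ldots,k$, the $h_i^2$ and $h_{i+1}^2$ terms telescope, $\sum h_{i+1}w_i$ dominates the disk area $A_{j+1}$ below $C_j$ (this is where the bound $s_n \leq \sigma \leq 0.6$ is actually used), and $\sum h_i h_{i+1} \leq h_j \sum h_i \leq h_j H_j$. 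The result is the closed form $B_5(h_j,H_j)=A_{j+1}+h_j^2 - H_j h_j$, depending only on $h_j$ and the position of $C_j$'s top---this is your sliver bookkeeping carried out rigorously. The paper then further splits on whether the bottom of $C_3$ lies below the center, applying $B_5$ from $j=4$ or from the first below-center container, and closes each branch with an automatic check in at most six variables.

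So your plan is salvageable, but you should drop the $w_i \geq 2h_i$ assumption on the tail, switch to the second branch of $B_1$, and actually carry out the telescoping; what you will arrive at is essentially the paper's $B_5$.
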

On the highest level, we distinguish the two cases 
$s_n > \sigma$ (\cref{lem:five-containers-above-sigma})
and  $s_n \leq \sigma$ (\cref{lem:five-containers-below-sigma}), together establishing \cref{lem:anaFiveRows}.

\paragraph{Proof of \texorpdfstring{\cref{lem:five-containers-above-sigma}}{Claim 22}}
\begin{claim}
	\label{lem:five-containers-above-sigma}
	If \algo fails to pack a sequence $s_1,\ldots,s_n$ of squares, $s_n > \sigma$ and
	\BP constructs $k \geq 5$ subcontainers, the total area of the squares exceeds $\nicefrac{8}{5}$.
\end{claim}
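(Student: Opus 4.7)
The plan is to mirror the template used for $k \in \{2,3,4\}$ in \cref{lem:anaTwoThreeFourRows}: derive a lower bound on the total area $S$ in terms of the parameters $s_1, h_1, \ldots, h_k, s_n$, and then verify via interval arithmetic that this bound strictly exceeds $\nicefrac{8}{5}$ on the admissible region. The hypothesis $s_n > \sigma$ simplifies matters, since \cref{lem:anapocketpacking} provides no guarantee in this regime: the function $E(s_1, s_n)$ vanishes, so the pockets $\CL$ and $\CR$ contribute nothing to the bound and every packed area must come from $s_1$, $s_n$, and the $k$ subcontainers themselves.

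First, I would apply \cref{cor:combined} to each subcontainer to obtain
\[
S \;\geq\; s_1^2 + s_n^2 + \sum_{i=1}^{k} B_4\!\left(T^{-1}(s_1) - \sum_{j=1}^{i-1} h_j,\; h_i,\; w_i,\; h_{i+1}\right),
\]
with $h_{k+1} \coloneqq s_n$ and $w_i = w\!\left(T^{-1}(s_1) - \sum_{j=1}^{i-1} h_j,\; h_i\right)$. The key observation that lets us handle arbitrarily many subcontainers uniformly is that $h_i \geq h_k \geq s_n > \sigma(s_1)$ for every $i \leq k$, so every subcontainer is tall. Combined with the stacking constraint $\sum_{i=1}^k h_i \leq 1 + T^{-1}(s_1) \leq 2$, this yields $k\,\sigma(s_1) < 2$. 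Because Case~\CIII enforces $s_1 > 0.295$ and $\sigma$ is decreasing in $s_1$, evaluating $\sigma$ at $s_1 = 0.295$ produces an explicit constant $K$ such that $k \leq K$.

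With $k$ bounded above by $K$, the claim reduces to finitely many subcases $k = 5, 6, \ldots, K$. For each such $k$, the lower bound $F_{SC_k}(s_1, h_1, \ldots, h_k, s_n) \coloneqq s_1^2 + s_n^2 + \sum_{i=1}^{k} B_4(\cdots)$ is a quantifier-free expression in $k+2$ real variables, amenable to the automated procedure of \cref{sec:interval-arithmetic-proofs}. I would run the interval-arithmetic solver to verify $F_{SC_k} > \nicefrac{8}{5}$ under the constraints $\sigma(s_1) < s_n \leq h_k \leq \cdots \leq h_1 \leq s_1$, the stacking bound $\sum_{i=1}^k h_i \leq 1 + T^{-1}(s_1)$, the failure condition $z_k \coloneqq T\!\left(-T^{-1}(s_1) + \sum_{i=1}^{k} h_i\right) < s_n$, and the Case~\CIII range $0.295 < s_1 \leq \sqrt{\nicefrac{8}{5}}$. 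The piecewise definition of $\sigma$ is handled by splitting the $s_1$-interval at the threshold $s_1^* \approx 1.0668$ before invoking the solver.

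The main obstacle is computational: the dimension of the parameter space grows linearly in $k$, so in principle the interval-arithmetic subdivision could be expensive for the largest admissible $k$. This is tempered by the fact that the height and stacking constraints, together with the monotonicity structure of $B_4$, cause the adaptive tree-based subdivision to prune large portions of the parameter space quickly. A secondary subtlety is the faithful propagation of $\sigma(s_1)$ through the interval solver across its piecewise definition, which must be done with care to avoid artificially loose bounds that would break the verification for some $s_1$.
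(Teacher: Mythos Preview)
Your approach is essentially the paper's: bound $k$ by a constant using $s_n > \sigma$ together with the stacking constraint, then run the interval-arithmetic verification of $F_{SC_k} > \nicefrac{8}{5}$ for each $k$ in the resulting finite range (with $E(s_1,s_n)=0$).

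One genuine slip: $\sigma$ is \emph{not} decreasing in $s_1$ on $[0.295,\sqrt{\nicefrac{8}{5}}]$; it increases on an initial segment and only decreases past $s_1^*$. If it were decreasing, evaluating at $s_1=0.295$ would give the \emph{maximum} of $\sigma$, and $k\sigma(s_1)<2$ would not bound $k$ at all. What you actually need is a \emph{lower} bound on $\sigma$ over the range, and it happens that the minimum occurs at $s_1=0.295$, giving $\sigma>0.231$; the paper simply asserts this from \cref{eq:sigma}. The paper also tightens the height budget by noting that $s_1$ itself occupies $1-T^{-1}(s_1)>0.305$ units of vertical space, so the subcontainers have at most $2-0.305$ available and $k\leq\lfloor(2-0.305)/0.231\rfloor=7$; hence only $k\in\{5,6,7\}$ are checked, rather than the $k\leq 8$ your looser bound would give.
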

\begin{claimproof}
	Because of $0.295 \leq s_1 \leq \sqrt{\nicefrac{8}{5}}$ \new{and \cref{eq:sigma}}, we have $\sigma > 0.231$.
	Of the $2$ units of vertical space, $s_1$ takes up $1 - T^{-1}(s_1) > 0.305$.
	Therefore, $s_n > \sigma$ implies that only $k \leq \lfloor \nicefrac{2-0.305}{0.231}\rfloor = 7$ subcontainers can be constructed. 
	Because $s_n > \sigma$, no square is packed into the pockets by \TP.
	We use \cref{cor:combined} to derive the same lower bound on the area packed into container $C_i$ as in the proof of \cref{lem:anaTwoThreeFourRows}:
	\[ \| C_i \| \geq B_4\left(\bigg(T^{-1}(s_1) - \sum\limits_{j=1}^{i-1} h_i\bigg), h_i, w_i, h_{i+1}\right), 1 \leq i \leq k,\]
	where $h_{k+1} \coloneqq s_n$.
	We then bound the total area of $s_1,\ldots,s_n$ by 
	\[S \geq F_{SC_k}(s_1,h_1,\ldots,h_{k},s_n) \coloneqq s_1^2 + s_n^2 + \sum\limits_{i=1}^{k} B_4\left(\bigg(T^{-1}(s_1) - \sum\limits_{j=1}^{i-1} h_i\bigg), h_i, w_i, h_{i+1}\right).\]
	Thus, the following lemma proved using interval arithmetic concludes the proof, and thus the proof of \cref{lem:anaFiveRows} for the case $s_n > \sigma$.
\end{claimproof}

\begin{lemma}[Automatic Analysis for \cref{lem:five-containers-above-sigma}]
	Let $z_k = T(-T^{-1}(s_1) + \sum_{i=1}^{k}h_i)$.
	\begin{description}
		\item[$(k = 5)$] For all $s_1,h_1,\ldots,h_5,s_n$ with $0 < z_5,\sigma < s_n \leq h_5 \leq \cdots \leq h_1 \leq s_1$
		and $0.295 \leq s_1 \leq \sqrt{\nicefrac{8}{5}}$ and $h_1 + \cdots + h_5 \leq 1 + T^{-1}(s_1)$,
		we have $F_{SC_5} > \nicefrac{8}{5}$.
		\item[$(k = 6)$] For all $s_1,h_1,\ldots,h_6,s_n$ with $0 < z_6,\sigma < s_n \leq h_6 \leq \cdots \leq h_1 \leq s_1$
		and $0.295 \leq s_1 \leq \sqrt{\nicefrac{8}{5}}$ and $h_1 + \cdots + h_6 \leq 1 + T^{-1}(s_1)$,
		we have $F_{SC_6} > \nicefrac{8}{5}$.
		\item[$(k = 7)$] For all $s_1,h_1,\ldots,h_7,s_n$ with $0 < z_7,\sigma < s_n \leq h_7 \leq \cdots \leq h_1 \leq s_1$
		and $0.295 \leq s_1 \leq \sqrt{\nicefrac{8}{5}}$ and $h_1 + \cdots + h_7 \leq 1 + T^{-1}(s_1)$,
		we have $F_{SC_7} > \nicefrac{8}{5}$.
	\end{description}
\end{lemma}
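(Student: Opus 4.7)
The plan is to apply the interval-arithmetic framework from \cref{sec:interval-arithmetic-proofs} to each of the three sub-cases $k \in \{5,6,7\}$ separately. For fixed $k$, the variables $s_1,h_1,\ldots,h_k,s_n$ span a bounded $(k+2)$-dimensional space $\mathcal{R}$ (so up to $9$ dimensions for $k=7$), with explicit box bounds coming from the hypotheses: $s_1 \in [0.295,\sqrt{\nicefrac{8}{5}}]$, every $h_i \in (0,s_1]$, and $s_n \in (0,h_k]$. I would set up the system of constraints consisting of the sorted chain $s_n \leq h_k \leq \cdots \leq h_1 \leq s_1$, the height budget $\sum_{i=1}^{k} h_i \leq 1 + T^{-1}(s_1)$, the positivity $z_k > 0$, the pocket inequality $\sigma(s_1) < s_n$, and the negation $F_{SC_k}(s_1,h_1,\ldots,h_k,s_n) \leq \nicefrac{8}{5}$ of the desired conclusion. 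The goal is to certify that this conjunction is unsatisfiable.

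The subdivision then proceeds in tree order $s_1, h_1, \ldots, h_k, s_n$, as described in \cref{sec:interval-arithmetic-proofs}. On each cuboid $\mathcal{C}$, I would compute interval enclosures of $T^{-1}(s_1)$, of $\sigma(s_1)$ via the piecewise formula in \cref{eq:sigma} (splitting on the threshold $s_1^*$ when the enclosure of $s_1$ crosses it), of each width $w_i = w(T^{-1}(s_1) - \sum_{j<i} h_j, h_i)$, and of $z_k$. From these I would assemble an enclosure of $F_{SC_k}$ by evaluating each of the $k$ summands $B_4(\cdot,h_i,w_i,h_{i+1})$: for every cuboid, interval arithmetic tests the case splits built into $B_2$, $B_3$, and $B_1$ (namely the branches on $w_i$ versus $h_i + h_{i+1}$ and $2h_i$, on $u \leq 2c$ in $X$, and on the sign of $Y$), takes the max over any undecided branches, and returns a valid lower bound. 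If the resulting interval for $F_{SC_k}$ lies strictly above $\nicefrac{8}{5}$, or if any of the sorted-chain, height-budget, $z_k > 0$, or $\sigma < s_n$ predicates evaluates to $\{\texttt{false}\}$, the cuboid is discarded; otherwise it is split further along the next variable, with the adaptive-refinement rule cutting down unnecessary splits in dimensions that are already eliminated.

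The main obstacle will be the combinatorial blow-up in dimensions $k+2 \in \{7,8,9\}$ together with the piecewise nature of $B_4$. Fortunately, several monotonicity constraints are available for aggressive early pruning: the chain $h_{i+1} \leq h_i$ discards any prefix cuboid in which the interval for $h_{i+1}$ already lies above that of $h_i$, and the height budget lets one reject entire suffixes as soon as the partial sum $\sum_{j \leq i} h_j$ exceeds $1+T^{-1}(s_1)$, which is very effective because the hypothesis $k \geq 5$ together with $s_n > \sigma \geq 0.231$ forces a substantial fraction of the vertical budget to be consumed. The remaining effort is to tune the local fineness adaptively so that the interval enclosures of the non-polynomial expressions $T$, $T^{-1}$, $\sigma$, and $w$ are tight enough near the worst-case configurations (those close to the boundary $F_{SC_k} = \nicefrac{8}{5}$) for the over-approximation to still exclude satisfaction; this is exactly the regime in which the framework of \cref{sec:interval-arithmetic-proofs} is designed to succeed, and it is the reason the automated proofs for this paper remain within the computational budget quoted there.
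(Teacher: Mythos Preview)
Your proposal is correct and follows essentially the same approach as the paper: the lemma is established by the generic interval-arithmetic subdivision procedure of \cref{sec:interval-arithmetic-proofs}, instantiated separately for $k\in\{5,6,7\}$ on the $(k{+}2)$-dimensional box you describe, with the source code in the linked repository serving as the actual certificate. One small wording issue: for the genuinely piecewise definitions (the case split in $B_2$ and in $X$), an undecided branch must be handled by taking the \emph{minimum} of the branch lower bounds to obtain a sound lower bound on the value; the ``max over undecided branches'' you mention is only valid for the explicit $\max$ operators in $B_1$, $B_3$, and $B_4$.
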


\paragraph{Proof of \texorpdfstring{\cref{lem:five-containers-below-sigma}}{Claim 24}}
\begin{claim}
	\label{lem:five-containers-below-sigma}
	If \algo fails to pack a sequence $s_1,\ldots,s_n$ of squares, $s_n \leq \sigma$ and 
	\BP constructs $k \geq 5$ subcontainers, the total area of the squares exceeds $\nicefrac{8}{5}$.
\end{claim}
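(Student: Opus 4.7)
The plan is to follow the same template as \cref{lem:five-containers-above-sigma}, adjusted to exploit the stronger hypothesis $s_n \leq \sigma$. Concretely, I would invoke \cref{lem:anapocketpacking} to add the $0.83\sigma^2$ contribution from \TP and apply \cref{cor:combined} to each of the $k$ subcontainers, obtaining
\[
S \;\geq\; s_1^2 + s_n^2 + 0.83\sigma^2 + \sum_{i=1}^{k} B_4\!\left(T^{-1}(s_1)-\sum_{j<i}h_j,\,h_i,\,w_i,\,h_{i+1}\right) \;\eqqcolon\; F_{SC_k},
\]
with $h_{k+1} \coloneqq s_n$. The task is then to show $F_{SC_k} > \nicefrac{8}{5}$ under the standard constraints $0.295 \leq s_1 \leq \sqrt{\nicefrac{8}{5}}$, $0 < z_k < s_n \leq \sigma$, $s_1 \geq h_1 \geq \cdots \geq h_k \geq s_n$, and $\sum_{i=1}^{k} h_i \leq 1 + T^{-1}(s_1)$.

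The essential new obstacle, compared to \cref{lem:five-containers-above-sigma}, is that $s_n \leq \sigma$ no longer forces $s_n \geq 0.231$, so $k$ is no longer bounded by a constant. I would address this by introducing an explicit threshold $\tau$ and splitting the analysis into two regimes. In the \emph{bounded-$k$ regime} $s_n \geq \tau$, the inequalities $h_i \geq s_n \geq \tau$ together with $\sum h_i \leq 1 + T^{-1}(s_1) \leq 1.695$ yield the uniform bound $k \leq \lfloor 1.695/\tau\rfloor$. For each $k$ in the resulting finite range $\{5, 6, \ldots, \lfloor 1.695/\tau\rfloor\}$, I would then feed the constraints together with $F_{SC_k}$ into the interval arithmetic framework of \cref{sec:interval-arithmetic-proofs}, exactly as in \cref{lem:five-containers-above-sigma}, to verify $F_{SC_k} > \nicefrac{8}{5}$.

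For the complementary \emph{small-$s_n$ regime} $s_n < \tau$, I would replace interval arithmetic by a direct density argument. When all $h_i$ are small, each inscribed rectangle $R_i$ is long and thin, and \cref{lem:shelfpacking} applied to the configuration inside $C_i$ together with the fact that $h_{i+1}$ (or $s_n$, for $i=k$) failed to be packed gives $\|C_i\| \geq \nicefrac{1}{2}\cdot h_i w_i - h_{i+1}^2$. Summed over $i$, the leading term $\nicefrac{1}{2}\cdot\sum h_i w_i$ captures close to half of the area of the bottom part of $\mathcal{D}$, while the correction $\sum h_{i+1}^2 \leq h_1\cdot\sum h_{i+1} \leq s_1\cdot(1+T^{-1}(s_1))$ is absorbed against the $h_i^2$ contributions that are already included in the $\|C_i\|$'s; combined with $s_1^2 + 0.83\sigma^2$ this comfortably surpasses $\nicefrac{8}{5}$. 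The main obstacle I expect is calibrating $\tau$ so that (i) the interval arithmetic enumeration in the bounded-$k$ regime stays within the computational budget (the number of real variables grows linearly in $k$) and (ii) the density estimate in the small-$s_n$ regime is uniformly sharp enough across $s_1 \in [0.295, \sqrt{\nicefrac{8}{5}}]$ to close the bound up to $\nicefrac{8}{5}$.
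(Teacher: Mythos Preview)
Your proposal is genuinely different from the paper's approach, and it has a structural gap.

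The paper does \emph{not} threshold on $s_n$ or enumerate over $k$. Instead, its key device is a telescoping bound (\cref{lem:totalAreaPackedByBasePacking}) that collapses \emph{any} tail $C_j,\ldots,C_k$ of subcontainers below the disk center, together with $s_n$, into a single expression $B_5(h_j,H_j) = A_{j+1} + h_j^2 - H_jh_j$ depending on only two real parameters. The paper then splits on whether the bottom of $C_3$ lies below the center ($y_3\le 0$) or not. In the first case it applies $B_4$ to $C_1,C_2,C_3$ and $B_5$ to $C_4,\ldots,C_k$, yielding a problem in the five variables $s_1,h_1,h_2,h_3,h_4$. In the second case it bounds $C_3,\ldots,C_j$ (the subcontainers straddling the center) by a single rectangle estimate $B_6$, and $C_{j+1},\ldots,C_k$ by $B_5$, again reducing to a constant number of variables. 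Both reduced problems are then dispatched by interval arithmetic. The point is that $B_5$ absorbs an \emph{unbounded} number of subcontainers without any smallness assumption on the $h_i$.

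Your plan lacks such an aggregation lemma, and the substitute you sketch does not close. Thresholding on $s_n$ controls only the last square, not the heights $h_1,\ldots,h_k$: you can have $s_n<\tau$ arbitrarily small while $h_1$ remains comparable to $s_1$. Consequently your ``small-$s_n$ regime'' is not a ``small-$h_i$ regime'', and the claim that the correction $\sum h_{i+1}^2$ is absorbed by the $h_i^2$ contributions fails. Using the bound $\|C_i\|\ge \tfrac12 h_i(w_i+h_i)-h_{i+1}^2$ from \cref{lem:anaFramesOfBasePackingTwo}, the net telescoped term is $\tfrac12 h_1^2 - \tfrac12\sum_{i\ge 2}h_i^2 - s_n^2$, which can be strongly negative when several early $h_i$ are large. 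On the other side, in the ``bounded-$k$ regime'' the bound $k\le\lfloor 1.695/\tau\rfloor$ forces either a large $\tau$ (undermining the density argument) or interval-arithmetic instances with far more variables than the nine the paper's framework is reported to handle. The missing ingredient is precisely a $k$-independent tail bound like $B_5$; once you have it, the threshold on $s_n$ becomes unnecessary.
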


In the last remaining case, we have $k \geq 5$ subcontainers and $s_n \leq \sigma$.
To handle the \new{arbitrarily large number} of subcontainers in this case,
we begin by deriving bounds for the total area of squares packed into subcontainers $C_j,\ldots,C_k$
and $s_n$, which we assume, w.l.o.g., to be the first square that \algo fails to pack.

Let $C_j$ be a subcontainer whose top side is below the center of $\mathcal{D}$, i.e., below the line $y = 0$.
Let $A_{j}$ be the area of the part of the disk below the top side of $C_{j}$,
containing $C_{j},\dots,C_k$; see \cref{fig:rowsUpperBoundApp}(a).
Let $H_j$ be the vertical distance between the top side of $C_j$ and the lowest point of $\mathcal{D}$.

\begin{figure*}[htb]
	\centering
	\includegraphics[page=3]{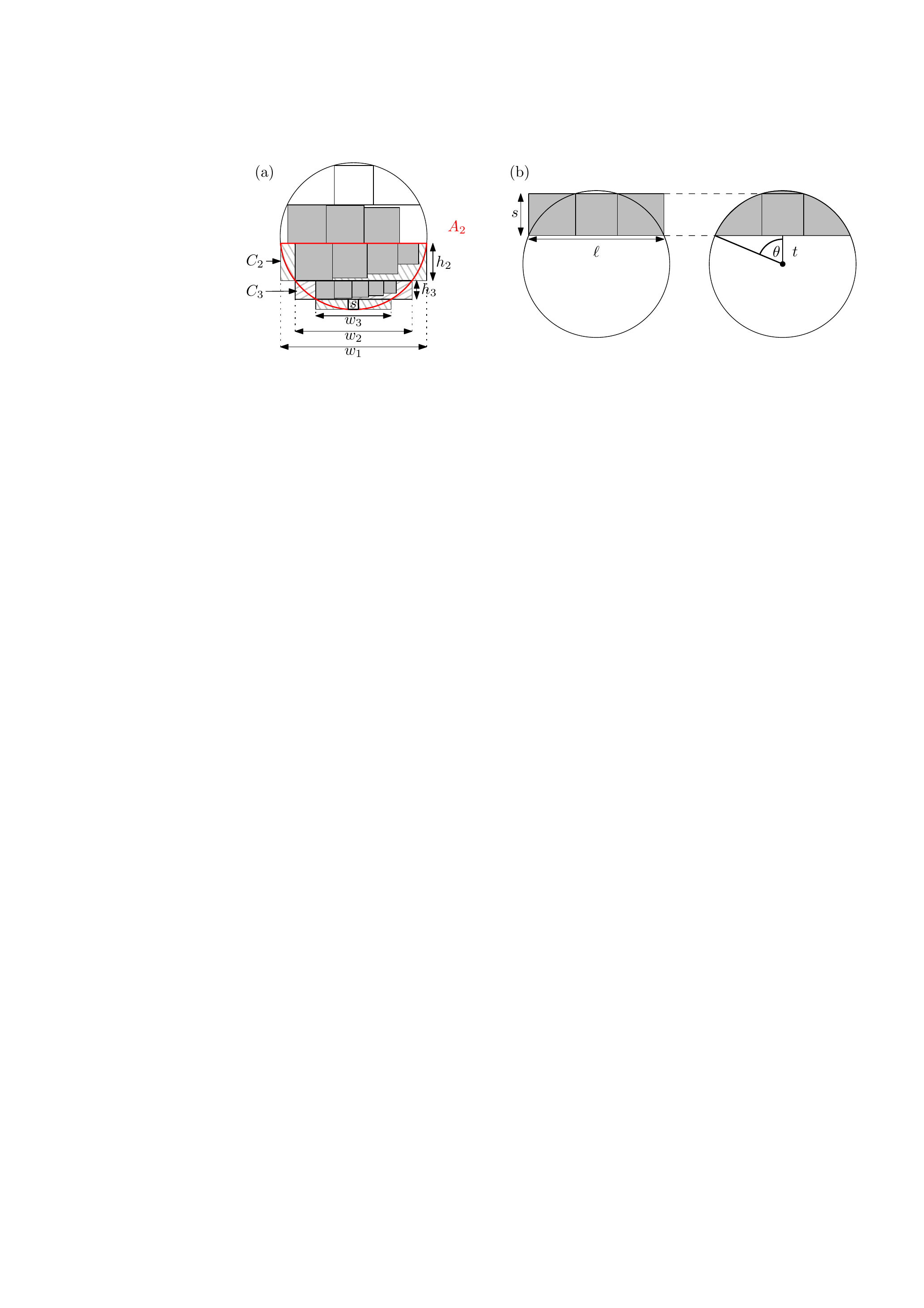}
	\caption{
		(a)~The definition of $A_j, h_j, H_j$ and $w_j$.
		The area of a subcontainer $C_i$, $i\geq j$ can be upper bounded by the area of the smallest enclosing rectangle (gray-white hatched).
		The area below $C_k$ is upper bounded by the area of the $z \times w_k$-rectangle (gray-white hatched) below $C_k$,
		where $z$ denotes the side length of a largest square fitting below $C_k$.
		(b)~Bounding the area below $C_k$  (with bold outline) by $zw_k < h_{k+1}w_k$ requires
		showing that the red area does not exceed the gray area.}
	\label{fig:rowsUpperBoundApp}
\end{figure*}

We first prove the following lemma.
\begin{lemma}\label{lem:totalAreaPackedByBasePacking}
	If $s_n \leq 0.6$, the total area of $s_n$ and the squares packed into $C_j,\ldots,C_k$ by \BP is at least
	\[B_5(h_j,H_j) \coloneqq A_{j+1} + h_j^2 - H_jh_j.\]
\end{lemma}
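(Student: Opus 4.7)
The plan is to prove the lemma by induction on $k-j$, using throughout the lower bound $\|C_i\| \geq h_i^2 + h_{i+1}(w_i - h_i - h_{i+1})$ from \cref{lem:anaFramesOfBasePacking}, with the convention $h_{k+1} := s_n$.

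For the inductive step, suppose the inequality holds at index $j+1$. A direct computation using $H_j = H_{j+1} + h_j$ gives $B_5(h_j,H_j) - B_5(h_{j+1},H_{j+1}) = S_{j+1} - h_{j+1}^2 + (h_{j+1} - h_j)H_{j+1}$, where $S_{j+1} := A_{j+1} - A_{j+2}$ is the strip area of $C_{j+1}$. It thus suffices to bound $\|C_j\|$ from below by this difference. Substituting the \cref{lem:anaFramesOfBasePacking} estimate and simplifying, the step reduces to $(h_j - h_{j+1})H_j + (h_{j+1}w_j - S_{j+1}) \geq 0$, which holds term by term: the first summand because $h_j \geq h_{j+1}$, and the second because the strip $C_{j+1}$ lies inside a rectangle of width $w_j = 2\sqrt{1-y_{j+1}^2}$ (the maximum disk width at any $y \leq y_{j+1}$) and height $h_{j+1}$, yielding $S_{j+1} \leq w_j h_{j+1}$.

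For the base case $j=k$, the same packing bound with $h_{k+1}=s_n$ yields $\|C_k\| + s_n^2 \geq h_k^2 + s_n(w_k - h_k)$, reducing the claim to the geometric inequality $A_{k+1} \leq h_k H_k + s_n(w_k - h_k)$. Writing $h_k = s_n + \alpha$ with $\alpha := h_k - s_n \geq 0$, the right-hand side equals $s_n(H_{k+1} + w_k) + \alpha H_k$, which is monotone in $\alpha$; combined with $s_n > z$ (since otherwise \BP would have opened a further subcontainer for $s_n$), the worst case is $h_k = s_n$ and $s_n \to z^+$, collapsing to the single-variable inequality $A(u) \leq z(u)\cdot\bigl((1-u) + 2\sqrt{1-u^2}\bigr)$ for $u := -y_{k+1} \in (0,1)$, with $A(u) = \pi/2 - u\sqrt{1-u^2} - \arcsin(u)$ and $z(u) = (2/5)(\sqrt{5-u^2} - 2u)$ as in \cref{eq:sigma}-style computations. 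I would verify this transcendental inequality by interval arithmetic over $u \in [0,1)$ using the framework of \cref{sec:interval-arithmetic-proofs} or by direct calculus; an asymptotic analysis as $u \to 1$ shows the ratio $A(u)/[z(u)(H_{k+1}+w_k)]$ tending to $2/3$, with a comfortable margin throughout.

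The principal obstacle is this single-variable verification in the base case, where the disk-segment integral must be controlled against $z$ times the sum of segment height and maximum width. The assumption $s_n \leq 0.6$ does not appear to enter the proof of this lemma directly; it is used elsewhere in \cref{lem:five-containers-below-sigma} to keep the overall parameter range manageable. The inductive step, by contrast, is clean algebra once the \cref{lem:anaFramesOfBasePacking} bound is substituted, and the strip-area inequality $S_{j+1} \leq w_j h_{j+1}$ is an immediate consequence of disk monotonicity below the horizontal diameter.
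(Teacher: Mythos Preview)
Your inductive reorganization is correct and, once unwound, is essentially the paper's telescoping sum: the paper sums the bound $\|C_i\|\geq h_i^2+h_{i+1}(w_i-h_i-h_{i+1})$ from \cref{lem:anaFramesOfBasePacking} over $i=j,\ldots,k$, collapses the telescoping $h_i^2-h_{i+1}^2$ terms, and bounds the strip areas by $h_{i+1}w_i$ exactly as you do in the inductive step. So the inductive step and the paper's summation are the same argument in two notations.

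The substantive difference is your base case. The paper does not reduce to your one-variable inequality; instead it proves the sharper geometric bound $A_{k+1} < s_n w_k$ directly, by writing $t=\cos\theta$ for the depth of the bottom of $C_k$, deriving $z>\tfrac{4}{5}(1-t)$ from the defining quadratic for $z$, and checking that $f(\theta)=\tfrac{8}{5}\sin\theta-\tfrac{3}{5}\sin\theta\cos\theta-\theta\geq 0$ by elementary calculus. Note that this bound alone already implies your base-case inequality, since $h_kH_k+s_n(w_k-h_k)-s_nw_k=h_k(H_k-s_n)\geq 0$; so you could replace your monotonicity-and-interval-arithmetic reduction by this one-line deduction from $A_{k+1}\leq s_nw_k$. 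One clarity point: your ``monotone in $\alpha$'' step is only clean if you hold $u=-y_{k+1}$ fixed (so $H_{k+1}$ and $w_k$ are constants and only $H_k$ varies); holding $y_k$ fixed instead makes $w_k$ shrink and the monotonicity is no longer obvious.

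On the hypothesis $s_n\leq 0.6$: contrary to your remark, the paper does invoke it in this very lemma, using $z<s_n\leq 0.6$ to confine $\theta$ to $[0,\arccos(\tfrac13))$ where $f'\geq 0$. Your instinct is nonetheless right that the hypothesis is inessential here: one checks $f(\tfrac{\pi}{2})=\tfrac{8}{5}-\tfrac{\pi}{2}>0$, so $f\geq 0$ on all of $[0,\tfrac{\pi}{2}]$, which covers every case with the top of $C_j$ below the center. The paper simply did not push the calculus that far.
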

\begin{proof}
	Applying \cref{lem:anaFramesOfBasePacking} to $C_j,\ldots,C_k$ yields that
	the total area of $s_n = h_{k+1}$ and the squares packed into $C_j,\ldots,C_k$ is at least
	\begin{align*}
	S :=h_{k+1}^2+\sum\limits_{i=j}^{k} \| C_i \| &\geq h_{k+1}^2 + \sum\limits_{i=j}^{k} \big( h_i^2 + h_{i+1}\cdot(w_i - h_i - h_{i+1})\big)\\
	&= h_{k+1}^2 + \sum\limits_{i=j}^{k} h_i^2 - \sum\limits_{i=j}^{k}h_{i+1}^2 + \sum\limits_{i=j}^{k} h_{i+1}w_i - \sum\limits_{i=j}^{k} h_{i+1}h_i \\
	&= h_{k+1}^2 + h_j^2 - h_{k+1}^2 + \sum\limits_{i=j}^{k} h_{i+1}w_i - \sum\limits_{i=j}^{k} h_{i+1}h_i\\
	&= h_j^2 + \sum\limits_{i=j}^{k} h_{i+1}w_i - \sum\limits_{i=j}^{k} h_{i+1}h_i.
	\end{align*}
	
	Next, we show that we can bound the disk area $A_{k+1}$ below $C_k$ by $B \coloneqq w_kh_{k+1}$;
	see \cref{fig:rowsUpperBoundApp}(b).
	We have $A_{k+1} = \theta-t\sqrt{1-t^2} = \theta - \sin\theta \cdot \cos\theta$, where $t \coloneqq \cos\theta$.
	We also have $(t+z)^2 + \nicefrac{z^2}{4} = 1$, i.e., $t = \sqrt{1-\nicefrac{z^2}{4}} - z$.
	Because $0 < z < s_n \leq 0.6 < 1$, we can bound this by 
	\[ t > 1 - \nicefrac{1}{4}\cdot z^2- z > 1 - \nicefrac{5}{4}\cdot z \implies z > \nicefrac{4}{5}\cdot(1-t).\]
	We have $B = h_{k+1}w_k > zw_k = z \cdot 2\sin\theta$, and thus 
	$B > \nicefrac{4}{5}\cdot(1-\cos\theta) \cdot 2\sin\theta = \nicefrac{8}{5} \cdot (1-\cos\theta)\sin\theta,$
	which implies \[B - A_{k+1} > \nicefrac{8}{5}\cdot\sin\theta - \nicefrac{8}{5}\cdot\sin\theta\cos\theta-\theta+\sin\theta\cos\theta = \underbrace{\nicefrac{8}{5}\sin\theta - \nicefrac{3}{5}\sin\theta\cos\theta - \theta}_{\eqqcolon f(\theta)}.\]
	Clearly, $f(0) = 0$.
	Moreover, the derivative is 
	\[ f'(\theta)
	= -\nicefrac{2}{5}(3\cos^2\theta-4\cos\theta+1) 
	= \nicefrac{2}{5}(1-\cos\theta)(3\cos\theta - 1),\] 
	which is non-negative for $0 \leq \theta \leq \cos^{-1}(\nicefrac{1}{3})$.
	Hence, we have $f(\theta) \geq 0$ and thus $B > A_{k+1}$ for $0 \leq \theta \leq \cos^{-1}(\nicefrac{1}{3})$.
	Because of $z < s_n \leq 0.6$, the largest angle $\theta$ that occurs is
	\[\theta = \underbrace{\cos^{-1}\left(\sqrt{1-\nicefrac{0.6^2}{4}}-0.6\right)}_{\approx 1.209} <
	\underbrace{\cos^{-1}\left(\nicefrac{1}{3}\right)}_{\approx 1.230}.\]
	
	We have thus established $A_{k+1} < h_{k+1}w_k$ and can bound 
	$A_{j+1} \leq \sum_{i=j+1}^{k+1}h_iw_{i-1} = \sum_{i=j}^{k}h_{i+1}w_i$;
	see \cref{fig:rowsUpperBoundApp}(a).
	Thus, by $h_j \geq h_{j+1} \geq \cdots \geq h_{k+1} = s_n$, we have
	\begin{align*}
	S &\geq h_j^2 + A_{j+1} - \sum\limits_{i=j}^{k}h_{i+1}h_i \geq h_j^2 + A_{j+1} - h_j\sum\limits_{i=j}^{k}h_{i+1}\\
	&\geq h_j^2 + A_{j+1} - h_j\sum\limits_{i=j}^{k}h_{i} \geq h_j^2 + A_{j+1} - h_jH_j\text{,}
	\end{align*}
	and thus the claimed bound.
\end{proof}

To apply \cref{lem:totalAreaPackedByBasePacking}, we first observe that $\sigma(s_1) \leq 0.6$ for any $s_1$.
Assume for contradiction that $\sigma(s_1) > 0.6$ for some $s_1$.
Because two squares of side length $\sigma$ fit into $\mathcal{D}$ besides $s_1$,
$2\sigma + s_1 < 2$ and thus $s_1 < 0.8$.
However, for $0 \leq s_1 \leq 0.8$, $\sigma(s_1)$ is monotonically increasing with $\sigma(0.8) < 0.4345 < 0.6$, which is a contradiction.

To prove \cref{lem:five-containers-below-sigma}, 
we distinguish whether the $y$-coordinate $y_3$ of the bottom side of
the third subcontainer $C_3$ is positive or not.
If $y_3 \leq 0$, i.e., $C_3$'s bottom side is below the center of $\mathcal{D}$,
we apply \cref{lem:totalAreaPackedByBasePacking,lem:anapocketpacking,lem:Aeleven} to lower-bound the total square area of $s_1,\ldots,s_n$ by 
\[
F_{MSC_1} \coloneqq s_1^2 + 0.83\sigma^2 + B_5(h_4, H_4) + \sum_{i=1}^{3} B_4\left(\bigg(T^{-1}(s_1)-\sum_{\ell=1}^{i-1}h_{\ell}\bigg), h_i, w_i, h_{i+1}\right),
\]
with $H_4 = 1 + T^{-1}(s_1) - h_1 - h_2 - h_3$ and $w_i = w(T^{-1}(s_1) - \sum_{\ell=1}^{i-1}h_{\ell}, h_i)$.
The case $y_3 \leq 0$ can then be handled using the following lemma, proved by interval arithmetic.
\begin{lemma}[Automatic Analysis for \cref{lem:five-containers-below-sigma}, $y_3 \leq 0$]
	For all $s_1 \geq h_1 \geq \cdots \geq h_4 > 0$
	with $0.295 \leq s_1 \leq \sqrt{\nicefrac{8}{5}}$ and
	$0 \leq H_4 \leq 1$, we have $F_{MSC_1} > \nicefrac{8}{5}$.
\end{lemma}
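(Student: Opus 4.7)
The plan is to verify this five-variable quantifier-free inequality using the interval-arithmetic framework from \cref{sec:interval-arithmetic-proofs}. The search region is bounded and axis-aligned once we unfold the hypotheses: $s_1 \in [0.295, \sqrt{8/5}]$, each $h_i \in (0, s_1]$, the chain $s_1 \geq h_1 \geq h_2 \geq h_3 \geq h_4$, and the single linear constraint $h_1 + h_2 + h_3 \leq 1 + T^{-1}(s_1)$, which is exactly $H_4 \geq 0$ (the bound $H_4 \leq 1$ follows automatically from $T^{-1}(s_1) \leq 1$ and $h_i > 0$). I would subdivide this region into cuboids in a tree-like fashion, splitting in the variable order $s_1, h_1, h_2, h_3, h_4$, so that any cuboid whose projection onto the first few coordinates already violates a constraint or already forces $F_{MSC_1} > 8/5$ can be pruned before the remaining variables are split.

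On each cuboid I would compute a validated interval enclosure of $F_{MSC_1}$, composed from its atomic pieces: $s_1^2$, the two-branch closed form of $\sigma(s_1)$ from Equation~(\ref{eq:sigma}) with threshold $s_1^{\ast} = \sqrt{(2+\sqrt{2})/3}$, the function $T^{-1}(s_1) = \sqrt{1 - s_1^2/4} - s_1$, the widths $w_i = w\bigl(T^{-1}(s_1) - \sum_{j<i} h_j,\, h_i\bigr)$, the intermediate quantity $H_4$, and the piecewise definitions of $B_2, B_3, B_4, B_5$. For a genuine case split such as the two branches of $\sigma$ or the three cases of $B_2$, I would evaluate every branch and take the smallest interval containing all of them, which is sound because the underlying value agrees on the branch boundary; for the mathematical maxima inside $B_3$ and $B_4$, I would take the interval-valued $\max$ of the branch enclosures. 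A cuboid is verified as soon as the lower endpoint of the resulting enclosure of $F_{MSC_1}$ strictly exceeds $8/5$, and is otherwise refined adaptively along the next splitting variable.

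The main obstacle is that the inequality is essentially tight: $F_{MSC_1}$ approaches $8/5$ near the known worst-case configuration with $s_1$ close to $2/\sqrt{5}$ and the $h_i$'s nearly filling the bottom part of $\mathcal{D}$, and interval enclosures lose precision rapidly in that neighborhood, so locally deep adaptive refinement will be required. A secondary technical point is that cuboids straddling the threshold $s_1^{\ast}$ in the definition of $\sigma$ produce pessimistic enclosures when the hull of both branches has to be taken, so it is worth triggering a preferential split on $s_1$ near $s_1^{\ast}$ to cleanly separate the two cases; similar care should be taken around the inner thresholds of $B_2$ and $B_3$ and around the switch between the two cases inside $w(\cdot,\cdot)$. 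Because the inequality is strict, a sufficiently fine tree-structured subdivision, combined with the pruning along the variable order, should certify the bound within a budget comparable to that reported in \cref{sec:interval-arithmetic-proofs} for the other automated lemmas.
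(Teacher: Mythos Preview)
Your approach is essentially identical to the paper's: this lemma is one of the ``automatic'' statements verified by the interval-arithmetic subdivision procedure of \cref{sec:interval-arithmetic-proofs}, and your description of how to instantiate that procedure (variable order, branch handling for $\sigma$, $B_2$, $B_3$, $B_4$, $w$, adaptive refinement) matches what the implementation does. One small correction: the bound $H_4 \leq 1$ does \emph{not} follow from $T^{-1}(s_1) \leq 1$ and $h_i > 0$ as you claim---it is equivalent to $T^{-1}(s_1) \leq h_1+h_2+h_3$, i.e., to the case hypothesis $y_3 \leq 0$---but since you would in any case restrict to the box dictated by the lemma's hypotheses, this does not affect the validity of your plan.
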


In the last remaining case we have $y_3 > 0$, i.e., the third subcontainer lies strictly above the center of $\mathcal{D}$.
Let $j$ be the maximal index such that the top side of $C_j$ lies strictly above $\mathcal{D}$'s center; we have $j > 3$ due to $y_3 > 0$.
As before, we bound the area packed by \TP by $s_1^2 + 0.83\sigma^2$ with \cref{lem:anapocketpacking} and the area packed by \BP into $C_1$ and $C_2$ using \cref{cor:combined}.
Furthermore, we bound the area packed into $C_{j+1},\ldots,C_k$ and $s_n$ using \cref{lem:totalAreaPackedByBasePacking}.

For the total square area $S_c$ packed into $C_3,\ldots,C_j$, 
we consider the rectangle $R$ of height $H_R = \sum_{i=3}^{j}h_i$ and 
width $W_R = \min\{w_3,\ldots,w_j\}$ inscribed into $C_3 \cup \cdots \cup C_j$;
see \cref{fig:y3-gt-0}.
Next, we prove the following bound on the total square area $S_c$ packed into $C_3,\ldots,C_j$:
\[ S_c \geq B_6(H_R,W_R,h_{j+1}) \coloneqq \nicefrac{1}{2}\cdot H_RW_R + \nicefrac{1}{4}\cdot h_{j+1}H_R.\]

\begin{figure}[htb]
	\centering
	\includegraphics[scale=1]{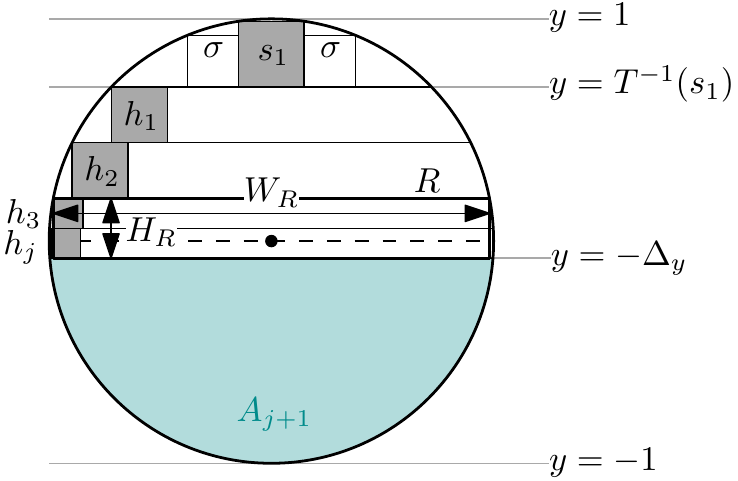}
	\caption{In the case $y_3 > 0$, we use the rectangle $R$ of height $H_R$ and width $W_R$
		inscribed in $C_3,\ldots,C_j$ to bound the total area packed into $C_3,\ldots,C_j$.}
	\label{fig:y3-gt-0}
\end{figure}

First, we observe that due to the fact that three subcontainers fit above $\mathcal{D}$'s center and below~$s_1$, $T^{-1}(s_1) - h_1 - h_2 - h_3 > 0$ and thus $h_3 < \nicefrac{1}{3}\cdot T^{-1}(s_1) \leq \nicefrac{1}{3} \cdot T^{-1}(0.295)< 0.232$.
This also bounds the distance $\Delta_y$ between $\mathcal{D}$'s center and the bottom side of $R$ by $0.232$.
We thus obtain $W_R \geq \min(2\sqrt{1-T^{-1}(0.295)^2}, 2\sqrt{1-0.232^2}) > 1.439 > 2h_3$.
This yields $w_i > 2h_i$ for all $i \in \{3,\ldots,j\}$.
This allows us to apply \cref{lem:shelfpackingnew},
which yields \[\|C_i\| \geq \nicefrac{1}{2}\cdot h_iw_i + \nicefrac{1}{4}\cdot h_i^2 \geq \nicefrac{1}{2}\cdot h_iw_i + \nicefrac{1}{4}\cdot h_ih_{j+1}\] for $i \in \{3,\ldots,j\}$.
Summing over $i$ yields
\[ S_c \geq \sum\limits_{i=3}^{j} \frac{1}{2}h_iw_i + \sum\limits_{i=3}^{j} \frac{1}{4}h_ih_{j+1}
\geq \left(\frac{1}{2}W_R + \frac{1}{4}h_{j+1}\right)\sum\limits_{i=3}^{j}h_i = B_6(H_R,W_R,h_{j+1}),\] as claimed.
Overall, we can thus bound the total area $S$ of $s_1,\ldots,s_n$ by 
\begin{align*}
S \geq F_{MSC_2} &\coloneqq s_1^2 + 0.83\sigma^2 + B_4(T^{-1}(s_1), h_1, w_1, h_2)\\
& + B_4(T^{-1}(s_1)-h_1,h_2,w_2,h_3) + B_6(H_R, W_R, h_{j+1}) + B_5(h_{j+1}, H_{j+1}).
\end{align*}
We have $H_R = T^{-1}(s_1) - h_1 - h_2 + \Delta_y$ and $W_R = w(T^{-1}(s_1) - h_1 - h_2, H_R)$ and $H_{j+1} = 1 - \Delta_y$.
Therefore, the following lemma, proved using interval arithmetic, concludes the proof of \cref{lem:five-containers-below-sigma} and thus the proof of \cref{lem:anaFiveRows}.
\begin{lemma}[Automatic Analysis for \cref{lem:five-containers-below-sigma}, $y_3 > 0$]
	For all $s_1 \geq h_1 \geq h_2 \geq h_3 \geq h_{j+1}$,
	$0.295 \leq s_1 < \sqrt{\nicefrac{8}{5}}$ and all $\Delta_y$ with
	$T^{-1}(s_1) - h_1 - h_2 - h_3 > 0$ and $\Delta_y \leq h_3$, we have $F_{MSC_2} > \nicefrac{8}{5}$.
\end{lemma}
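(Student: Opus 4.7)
The plan is to verify this purely computationally, using the interval-arithmetic framework developed in \cref{sec:interval-arithmetic-proofs}. The claim is a quantifier-free first-order statement over the six real variables $s_1, h_1, h_2, h_3, h_{j+1}, \Delta_y$, constrained to a bounded box by the inequalities $0.295 \leq s_1 \leq \sqrt{\nicefrac{8}{5}}$, $0 < h_{j+1} \leq h_3 \leq h_2 \leq h_1 \leq s_1$, $0 < \Delta_y \leq h_3$, and $h_1 + h_2 + h_3 < T^{-1}(s_1)$. Our goal is to certify that $F_{MSC_2}(s_1,h_1,h_2,h_3,h_{j+1},\Delta_y) > \nicefrac{8}{5}$ on this entire region.

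The first step is to implement $F_{MSC_2}$ as a composition of the interval-arithmetic primitives supported by our framework: the elementary operations $+,-,\cdot$, the square root (appearing in $T$, $T^{-1}$, $w$, and $\sigma$), and the piecewise case distinctions in $\sigma$ (split at $s_1^*$), $B_2$ (split at $w_i \gtrless h_i + h_{i+1}$ and $w_i \gtrless 2h_i$), $B_3$ (split at $Y \gtrless 0$ and $Y^2 \lessgtr 2h_{j+1}^2$), and the piecewise definition of $X$. For each subcuboid, we evaluate each branching condition in interval arithmetic; whenever the result is the singleton $\{\texttt{true}\}$ or $\{\texttt{false}\}$, we descend into the corresponding branch; otherwise we conservatively take the maximum (or relevant combination) over all branches that the interval cannot rule out. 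This ensures that the computed enclosure of $F_{MSC_2}$ over the cuboid is a valid lower bound for $F_{MSC_2}$ at every point in the cuboid.

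Next, we feed the bounding box together with $F_{MSC_2}$ and the constraint system into the recursive subdivision procedure of \cref{sec:interval-arithmetic-proofs}. On each cuboid we either (i) use interval arithmetic on a constraint such as $h_1 + h_2 + h_3 < T^{-1}(s_1)$, $h_2 \leq h_1$, or $\Delta_y \leq h_3$ to show that the cuboid is inconsistent with the hypotheses, or (ii) use the interval enclosure of $F_{MSC_2}$ to show that its lower endpoint strictly exceeds $\nicefrac{8}{5}$. Any cuboid that fails both tests is split along the variable with the largest current range (following the fixed ordering $s_1, h_1, h_2, h_3, h_{j+1}, \Delta_y$), and the procedure recurses. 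Termination on every branch is what produces the certificate.

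The main obstacle is that $F_{MSC_2}$ is a sum of six rather delicate terms that partially cancel against $\nicefrac{8}{5}$; in particular, the bound is essentially tight near the worst-case instance with two squares of side length $\nicefrac{2}{\sqrt{5}}$, where both $s_1$ and $h_1$ are large and $\sigma$ is near its maximum. In a neighbourhood of such configurations the naive interval enclosure of $F_{MSC_2}$ loses too much due to repeated occurrences of the same variable (dependency problem), and the subdivision may need to be locally very fine. We address this in the standard way: by relying on the adaptive refinement built into the framework, by sharing subexpressions (so that, e.g., $T^{-1}(s_1)$ and $\sigma(s_1)$ are each evaluated once per cuboid on the tightest available interval for $s_1$), and by splitting first on $s_1$ so that the problematic $s_1$-dependent quantities $T^{-1}(s_1)$, $\sigma(s_1)$, and $w(\cdot,\cdot)$ are tightly enclosed before the other variables are refined. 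Given that analogous bounds with more variables were already verified within the computational budget reported in \cref{sec:interval-arithmetic-proofs}, the same machinery should dispatch this six-variable instance in a comparable amount of time, completing the proof.
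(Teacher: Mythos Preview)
Your approach is the same as the paper's: this lemma is one of the automated interval-arithmetic checks, and the paper's proof is literally ``proved using interval arithmetic'' with the framework of \cref{sec:interval-arithmetic-proofs}. Your elaboration of how the subdivision and branch handling would proceed is reasonable and matches the intended method.

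Two small corrections. First, your list of primitives omits one that is actually needed: $B_5(h_{j+1},H_{j+1})$ contains the circular-segment area $A_{j+2}$, which requires an inverse trigonometric function (the paper explicitly notes that the framework handles $\arccos$); square roots alone do not suffice here. Second, your remark that the bound is ``essentially tight near the worst-case instance with two squares of side length $\nicefrac{2}{\sqrt{5}}$'' is misplaced: that configuration has no subcontainers at all and is handled in \cref{lem:anaNoStrip}, whereas the present lemma covers the regime $k\geq 5$ with $y_3>0$, where $h_3 < 0.232$ and at least three subcontainers sit strictly above the disk center. The inequality $F_{MSC_2}>\nicefrac{8}{5}$ is not tight in this regime, so the dependency-problem concerns you raise are less severe than you suggest.
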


\section{Conclusion \new{and future directions}}
\label{sec:conc}
We have established the critical density for packing squares into a disk: 
Any set of squares of total area at most $\nicefrac{8}{5}$ can be packed into a unit disk. As shown by our lower bound example, this guarantee is best-possible, i.e., it cannot be improved.
The proof is based on an algorithm that subdivides the disk into horizontal subcontainers and uses a refined shelf packing scheme. 
The correctness of this algorithm is shown by careful manual analysis, complemented by a computer-assisted part that is based on interval arithmetic.

There is a variety of interesting directions for future research.
Of particular interest is the critical density for packing squares of bounded size into a disk,
which will result in a higher packing density;
a more general problem concerns the critical packing density 
for packing other types of objects of bounded size into other types of containers.
Other questions arise from considering questions in three- or even higher-dimensional space.
We are optimistic that many of our techniques will be useful for settling these problems.

\bibliography{references}
\end{document}